\PassOptionsToPackage{table}{xcolor}
\documentclass[%
 reprint,
superscriptaddress,
nofootinbib,
 amsmath,amssymb,
 aps,
pra,
]{revtex4-2}

\usepackage{tikz} 
\usepackage{amsmath,amsfonts,amsthm, mathtools}
\usepackage{float}
\usepackage{dsfont}
\usepackage{csquotes}
\usepackage{amssymb}
\usepackage[table]{xcolor}
\usepackage{verbatim}
\usepackage{mathtools}
\usepackage{rotating}
\usepackage{thmtools, thm-restate}
\usepackage{bm}
\usepackage{hyperref}
\usepackage{mathtools}
\usepackage{mathbbol}
\usepackage{makecell}
\usepackage{pifont}
\usepackage{appendix}
\usepackage{diagbox}      
\usepackage{soul}
\newcommand{\cmark}{\ding{51}}
\newcommand{\xmark}{\ding{55}}
\definecolor{darkred}  {rgb}{0.5,0,0}
\definecolor{darkblue} {rgb}{0,0,0.5}
\definecolor{darkgreen}{rgb}{0,0.5,0}
\hypersetup{
  colorlinks = true,
  urlcolor  = blue,         
  linkcolor = darkblue,     
  citecolor = darkgreen,    
  filecolor = darkred    
}
\theoremstyle{definition}

\newtheorem{definition}{Definition}

\newtheorem{proposition}{Proposition}
\newtheorem{theorem}{Theorem}

\newtheorem{example}{Example}

\newtheorem*{Thmprop}{Theorem 1 and Proposition 1}

\newcommand{\mbf}{\mathbf}
\newcommand{\mbb}{\mathbb}
\newcommand{\mc}{\mathcal}
\newcommand{\msf}{\mathsf}

\newcommand{\tr}{\textrm{Tr}}

\usepackage{multirow}
\newcommand{\ket}[1]{|#1\rangle}
\newcommand{\bra}[1]{\langle #1|}
\newcommand{\op}[2]{|#1\rangle\langle#2|}

\definecolor{cool_green}{rgb}{0.0, 0.5, 0.0}
\newcommand{\cmt}{\color{black}}

\newcommand{\blk}{\color{black}}

\usepackage{blkarray}
\usepackage{adjustbox}
\usepackage{graphicx}
\usepackage{dcolumn}
\usepackage{bm}
\begin{document}

\preprint{APS/123-QED}

\title{\cmt  A new paradigm for entanglement certification using noncontextuality inequalities\blk}

\author{Yujie Zhang}
\email{yujie4physics@gmail.com}
\affiliation{Institute for Quantum Computing and Department of Physics \& Astronomy,
University of Waterloo, 200 University Ave W, Waterloo, Ontario, N2L 3G1, Canada}
\affiliation{Perimeter Institute for Theoretical Physics, 31 Caroline Street North, Waterloo, Ontario Canada N2L 2Y5}
\author{Jonah Spodek}
\affiliation{Institute for Quantum Computing and Department of Physics \& Astronomy,
University of Waterloo, 200 University Ave W, Waterloo, Ontario, N2L 3G1, Canada}
\author{David Schmid}
\affiliation{Perimeter Institute for Theoretical Physics, 31 Caroline Street North, Waterloo, Ontario Canada N2L 2Y5}
\author{Carter Reid}
\affiliation{Institute for Quantum Computing and Department of Physics \& Astronomy,
University of Waterloo, 200 University Ave W, Waterloo, Ontario, N2L 3G1, Canada}
\author{Liam J. Morrison}
\affiliation{Institute for Quantum Computing and Department of Physics \& Astronomy,
University of Waterloo, 200 University Ave W, Waterloo, Ontario, N2L 3G1, Canada}
\affiliation{Department of Physics and Atmospheric Science, Dalhousie University, Halifax, Nova Scotia, Canada B3H 3J5}
\author{Thomas Jennewein}
\affiliation{Institute for Quantum Computing and Department of Physics \& Astronomy,
University of Waterloo, 200 University Ave W, Waterloo, Ontario, N2L 3G1, Canada}
\affiliation{Department of Physics, Simon Fraser University, 8888 University Dr W, Burnaby, BC V5A 1S6, Canada}
\author{Kevin J. Resch}
\affiliation{Institute for Quantum Computing and Department of Physics \& Astronomy,
University of Waterloo, 200 University Ave W, Waterloo, Ontario, N2L 3G1, Canada}
\author{Robert W. Spekkens} 
\affiliation{Perimeter Institute for Theoretical Physics, 31 Caroline Street North, Waterloo, Ontario Canada N2L 2Y5}

\date{\today}

\begin{abstract}
\cmt 
By combining the assumptions of Bell locality with those of generalized noncontextuality, we define classes of noncontextuality inequalities for correlations arising in a bipartite Bell circuit. These classes are distinguished by which subsets of the full set of operational identities are taken as input to the principle of noncontextuality; certain natural subsets form a hierarchy that provides a new way of understanding and classifying different forms of quantum correlations, including entanglement, steering, and nonlocality. Each level of this hierarchy gives rise to a corresponding class of noncontextuality inequalities whose violation witnesses one of these forms of bipartite quantum resourcefulness, thereby yielding different sufficient conditions for entanglement. The resulting entanglement certification paradigm requires no prior characterization of the measurements, delivers a verdict independent of tomographic gauge freedom, and can certify any entangled state without auxiliary entangled sources. To illustrate the power of this paradigm, we show that noncontextuality inequalities can certify entanglement for families of two-qubit isotropic states for which certification by Bell or steering inequalities is known to be impossible. We also show that, compared with the Bell test, this approach certifies a much larger fraction of entangled states, while the associated membership problem is computationally more tractable. On the experimental side, we describe techniques to ensure nontrivial operational identities in the presence of noisy and imperfect implementations. We also identify the sufficient condition under which these techniques are valid, namely, a particular notion of \textit{tomographic completeness}, which ensures that the operational identities are independent of tomographic gauge. Finally, we provide an experimental demonstration of the superior performance of this entanglement certification technique using polarization-entangled photons. \blk

\end{abstract}
\maketitle

\tableofcontents

\section{Introduction}
Distinguishing whether a given operational behavior is classically explainable or genuinely quantum is a central problem in the foundations of quantum mechanics. As is well known, certain quantum experiments exhibit correlations that cannot be explained within a classical model that respects the natural causal structure of the experiment. For example, there is no local (i.e., common-cause) explanation for quantum correlations in a Bell circuit~\cite{Bell1964, Wood_2015}. In addition to Bell nonlocality~\cite{Brunner2014}, other types of correlations that arise naturally in this scenario include entanglement~\cite{Horodecki2009} and steerable correlations~\cite{Uola2020}.

These concepts play a central role in quantum information science and have been extensively studied~\cite{Werner1989, Piani2011, Wiseman2007}, yet there is no consensus on how to best understand the relationship between them~\cite{ Schmid2023understanding,Marco2015}. While all pure entangled states exhibit nonlocality~\cite{Gisin1991, Gisin1992}, for mixed states the connection between entanglement and nonlocality is less clear~\cite{Werner1989, Barrett2002}. As a consequence, Bell inequalities~\cite{Clauser1969}, which serve as device-independent tests of nonlocality, cannot detect all entangled states in a standard Bell circuit, and alternative approaches such as quantum state tomography, entanglement witnesses, and measurement device-independent tests have been developed to verify entanglement in practical settings~\cite{Guhne2009, Branciard2013, Buscemi2012}.

We suggest, alternatively, that classical nonexplainability should be understood as the failure of operational statistics to be realizable by a noncontextual ontological model (NCOM)~\cite{Spekkens2005}, a property that we shall refer to as \textit{NCOM-nonrealizability}.\footnote{Note that throughout this work, we use the term {\em noncontextual} to refer to the notion of generalized noncontextuality defined in Ref.~\cite{Spekkens2005} rather than the Kochen-Specker notion~\cite{Kochen1975,Budroni2022}.  We use the term {\em NCOM-nonrealizable} in preference to {\em contextual} to emphasize that in the face of a no-go result, one has the option to abandon the framework of ontological models, rather than the principle of noncontextuality.\blk} \blk  A noncontextual ontological model imposes constraints on the representation of each component of a quantum circuit based on the specific operational identities that they satisfy (where an operational identity is simply any equality holding among the different possibilities for a component 
in the circuit~\cite{Schmid2024add}). 
\begin{table*}[t]
    \centering
  \resizebox{\textwidth}{!}{\begin{tabular}{c|c|c|c|c|c}
    \hline\hline
      \multirow[c]{2}{*}{\makecell[c]{\\Entanglement Certification}}
      & \multicolumn{5}{c}{{Features}} \\
    \cline{2-6}
      & \makecell{{No prior  characterization}\\{of measurements}}
      & \makecell{{Independent of}\\{tomographic gauge}}
      & \makecell{{Able to certify any}\\{entangled state}}
      & \makecell{{No additional}\\{entangled sources}}
      & \makecell{{No assumption of}\\{tomographic completeness$^*$}}  \\    
    \hline 
     \makecell{Separability test with\\ quantum state tomography } & \xmark & \cmark & \cmark & \cmark & \xmark \\
         \hline     
     Entanglement witness \cite{Guhne2009} & \xmark & \cmark & \cmark & \cmark  & \cmark  \\
          \hline 
   \makecell{Separability test  with\\ GPT/gate‐set tomography \cite{Mazurek2021, Nielsen2021} }
      & \cmark & \xmark & \cmark & \cmark & \xmark \\
          \hline     
    \makecell{
    Bell inequality test \\ in Bell circuit~\cite{Brunner2014, Lin2018} }
      & \cmark & \cmark & \xmark & \cmark  & \cmark \\
          \hline     
    \makecell{
   \cmt Steering inequality test \blk \\ \cmt in Bell circuit~\cite{Saunders2010}}
      & \xmark & \cmark & \xmark & \cmark  & \xmark \\
    \hline 
    \makecell{
    Inequality test \\ in non-Bell network scenario\cite{bowles2018device}}
      & \cmark & \cmark & \cmark & \xmark  & \cmark\\
    \hline 
    \makecell{Noncontextuality inequality test \\ (Our approach)} 
      & \cmark & \cmark & \cmark & \cmark & \xmark \\
    \hline\hline
\end{tabular}
}       
    \caption{\cmt A summary of the advantages of various methods of entanglement certification.  $^*$In quantum state tomography, tomographic completeness of the set of characterized measurements guarantees a unique reconstructed state. In self-consistent tomography, such as GPT tomography and gate-set tomography, tomographic completeness instead ensures that the operational data determine the states and measurements only up to an invertible linear gauge (i.e., up to a tomographic gauge freedom~\cite{Nielsen2021}). 
Just as tomographic completeness is a sufficient condition for implementing self-consistent tomography, in our entanglement certification technique, it is a sufficient condition for ensuring that the identities respected by the operational statistics suffice to infer the operational identities relevant to the noncontextuality test  (see Appendix~\ref{sec: methodB3}). In other words, it ensures that the operational identities are independent of the tomographic gauge. \blk}
    \label{tab:my_label}
\end{table*} This approach generalizes the notion of Kochen–Specker noncontextuality from projective measurements to arbitrary quantum processes, and generalizes the notion of negativity of the Wigner function to arbitrary quasiprobabilistic representations~\cite{Spekkens2008,Schmid2024}.  NCOM-nonrealizability also coincides with the failure of the \blk natural notion of classical explainability in generalized probabilistic theories~\cite{Schmid2021, Shahandeh2021,Schmid2024}.  Recent efforts have made some connections between NCOM-nonrealizability and various notions of resourcefulness of bipartite quantum states~\cite{Tavakoli2020, Wrigh2023, Plavala2024}. However, these studies have focused on comparing bipartite quantum resources to NCOM-nonrealizability in certain prepare-measure scenarios where the set of preparations is induced by the bipartite state through steering. As we will show, examining NCOM-nonrealizability within a bipartite scenario reveals more direct connections to the different classes of bipartite quantum resources.

\par 
We introduce the notion of NCOM-realizability for operational statistics within a bipartite Bell circuit, where we denote the two parties as Alice and Bob, who can perform local measurements. We demonstrate that different classes of bipartite quantum states function as resources for establishing NCOM-nonrealizability, depending on the specific set of operational identities taken into account. More precisely, we explore four distinct classes of operational identities: (i) the union of those that hold among Alice's measurements and those that hold among Bob's measurements, (ii) only those that hold among Bob's measurements, (iii) only those that hold among Alice's measurements, and (iv) the trivial set. We prove that the quantum states that are necessarily NCOM-nonrealizable relative to these classes correspond, respectively, to the entangled states, the \cmt Alice-to-Bob steerable \blk states, the \cmt Bob-to-Alice steerable \blk states, and the nonlocal states.

Additionally, our framework provides tools for witnessing membership in these classes through distinct sets of noncontextuality (NC) inequalities. The inequalities derived from considering class (i), i.e., the union of the set of all operational identities associated to each party, 
are particularly noteworthy, as they are capable of certifying the entanglement of \textit{any} entangled state, including those that are local and unsteerable. This significantly extends the reach of entanglement certification beyond the limitations of standard Bell and steering inequalities. As a concrete illustration, we derive several such inequalities and experimentally demonstrate the violation of one of these, thereby certifying the entanglement of a class of two-qubit isotropic states that are both local and unsteerable, and which consequently cannot be certified using conventional Bell or steering inequalities. 

\cmt  As with Bell- and steering-inequality-based approaches, constructing a noncontextuality inequality tailored to a given target state is generally nontrivial:  it typically requires optimizing over the set of measurements, a task that quickly becomes technically and computationally demanding as the number of measurements grows. Despite this, we show that even with a fixed set of local measurements, our framework can certify a significantly broader set of entangled states than standard Bell tests can. For two-qubit entangled states drawn according to the Hilbert–Schmidt measure, we show that a noncontextuality test for a fixed set of measurements can
certify entanglement for more than $75\%$ of the sampled entangled states, whereas the CHSH test certifies only about $1\%$ of these using the same sampling procedure~\cite{Kofman2008}.
\blk

Finally, we compare the merits of our technique to those of other existing entanglement certification techniques,  
such as quantum state tomography,  entanglement witnesses~\cite{Guhne2009}, self-consistent tomography (i.e., gate set tomography~\cite{Nielsen2021} and GPT tomography~\cite{Mazurek2021}), Bell inequality tests~\cite{Brunner2014}, \cmt steering inequality tests~\cite{Saunders2010}, \blk and other causal compatibility tests within more complex causal networks~\cite{bowles2018device}.
As summarized in Table~\ref{tab:my_label}, our approach achieves a set of features that is not achieved by any of these pre-existing methods.

\section{Noncontextual Ontological Model}
\subsection{NCOM-realizability of the statistics of a prepare-measure circuit}
The notion of NCOM-realizability provides a rigorous framework for assessing the classical explainability of \cmt the statistics obtained in a prepare-measure circuit~\cite{Spekkens2005}.  A prepare-measure circuit is a circuit that consists of a process of the preparation variety (one having a single quantum output) followed by a process of the measurement variety (one having a single quantum input)~\cite{zhang2024parallel}; see Fig.~\ref{fig:ontology}(a) for a simple example. On the preparation side, the process is one that has a classical input (a setting variable), denoted $x$. Such a process is represented by a set of quantum states $\mbf{P}:=\{\rho_{x}\}_{x}$, and is termed a {\em multi-state} in Ref.~\cite{zhang2024parallel}.  On the measurement side, the process is one that has both a classical output (the measurement outcome), denoted $b$, and a classical input (a setting variable), denoted $y$. Such a process is represented by a set of POVMs $\msf{M}:=\{\{M_{b|y}\}_{b}\}_{y}$, and is termed
a {\em multi-measurement} in Ref.~\cite{zhang2024parallel}. The statistics generated by such a prepare-measure circuit have the form of a conditional probability 
\begin{align}
p(b|xy):= {\rm Tr}[M_{b|y} \rho_x].   
\label{eq: QPM}
\end{align}
Such a quantum circuit is \textit{classically explainable} if the statistics it predicts in Eq~\eqref{eq: QPM} can be realized by a noncontextual ontological model.\footnote{Strictly speaking, the notion of noncontextuality can only be applied to ontological models of theories that are not quotiented relative to operational equivalences, whereas in this work we treat quantum theory as a theory wherein this quotienting procedure has been applied ~\cite{chiribellaprob,Schmid2024}. Refs.~\cite{Schmid2021,Schmid2024} introduce more precise terminology, but we eschew it here for the sake of maintaining familiar terminology. } This is the case if it is possible to write
\begin{equation}
p(b|xy)=\sum_{\lambda} p(b|y,\lambda)p(\lambda|{x}),
\label{eq: noncontexual}
\end{equation}
\blk where $p(\lambda|x)$ denotes a probability distribution over ontic states $\lambda \in \Lambda$ given preparation $x$, and $p(b|y,\lambda)$ is the response function specifying the probability of outcome $b$ of measurement $y$ given that the ontic state was $\lambda$. The principle of generalized noncontextuality implies that 
$p(\lambda|x)$ must be a linear function of the quantum state $\rho_x$, and $p(b|y,\lambda)$ must be a linear function of the effect $M_{b|y}$. A given theory or experiment is then said to be NCOM-nonrealizable if and only if no such representation is possible. NCOM-realizability is the notion of classical explainability we use in this work.  

Although the question of NCOM-realizability has usually been studied only in the prepare-measure circuit just discussed, it is equally applicable to circuits that have more complex structures and more quantum processes~\cite{Schmid2024, zhang2024parallel}. 
In this letter, we focus on the particular prepare-measure circuit depicted in Fig.~\ref{fig:ontology}(b), \cmt on the preparation side, it consists of a process having a \textit{bipartite} quantum system $AB$ as outputs and no classical input (i.e., no setting variable). Such a process is associated to a single bipartite state $\msf P=\rho^{AB}$. On the measurement side, it consists of a multi-measurement on A (Alice), denoted $\msf{N}:=\{\{N^A_{a|x}\}_{a}\}_{x}$, and a multi-measurement on B (Bob), denoted  $\msf{M}:=\{\{M^B_{b|y}\}_{b}\}_{y}$. 
We refer to such a triple $\mc B:=(\msf{N},\msf{M},\msf P)$ as
a \textit{Bell circuit} (or a {bipartite Bell circuit}). A multipartite Bell circuit can be defined similarly. \blk
\begin{figure}
    \centering
    \includegraphics[width=\linewidth]{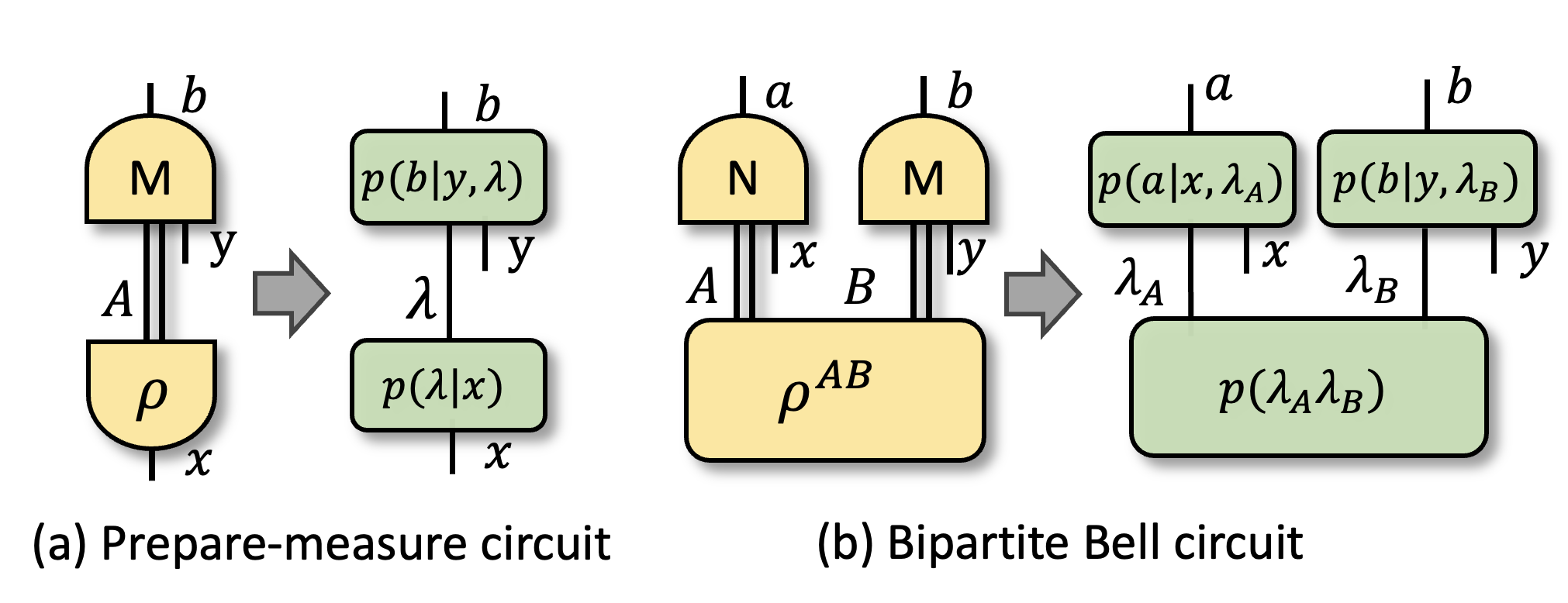}
    \caption{(a) a unipartite prepare-measure (PM) circuit;  (b) a bipartite prepare-measure circuit, i.e., a {\em bipartite Bell circuit} involving two parties labeled Alice (A) and Bob (B). A classical explanation for a given quantum circuit is given by a linear and diagram-preserving map from the quantum representation to an ontological representation; i.e., to a circuit wherein systems are classical variables and transformations are (sub)stochastic maps on these variables. 
    }
    \label{fig:ontology}
\end{figure}
\\
\\
\subsection{NCOM-realizability of the statistics of a bipartite Bell circuit} 
\cmt Quantum theory predicts that for the Bell circuit $\mc B:=(\msf{N},\msf{M},\msf P)$\blk, the correlations obtained from performing a local multi-measurement on $A$ with setting $x$ and outcome $a$ and a local multi-measurement on $B$ with setting $y$ and outcome $b$ are given by 
\begin{equation}
p(ab|xy)=\tr[(N^A_{a|x}\otimes M^B_{b|y})\rho^{AB}].
\end{equation}
These are referred to as the \textit{operational statistics}. Similar to the case of a prepare-measure circuit, we take these operational statistics to be \textit{classically explainable} if they are realizable by a noncontextual ontological model~\cite{Schmid2024}. We now review the two key structural features that characterize a noncontextual ontological model when specialized to the Bell circuit.
\\
\textit{Diagram-preservation (locality) --} The ontological representation must preserve the circuit structure. The quantum system $AB$ is represented by an ontic states space $\Lambda_{AB}$, which is the Cartesian product of the local ontic state spaces, i.e., $\Lambda_{AB}=\Lambda_A \times \Lambda_B$, the bipartite state is represented by a joint probability distribution $p(\lambda_A\lambda_B)$, and the measurements $\{N^A_{a|x}\}_a$ and $\{ M^B_{b|y}\}_b$ are represented by sets of response functions $\{ p(a|x,\lambda_A)\}_a$ and $\{p(b|y,\lambda_B)\}_b$ \blk respectively. Accordingly, diagram-preservation implies that the operational statistics must be reproduced as
\begin{align}
 p(ab|xy)=\sum_{\lambda_A\lambda_B} p(\lambda_A\lambda_B)p(a|x,\lambda_A)  p(b|y,\lambda_B).
\label{eq:cl-bipartite}
\end{align}
Note that this is exactly the standard form of a local hidden variable model for a Bell circuit~\cite{Brunner2014}.
\\
\textit{Linearity (noncontextuality)} 
—  An operational identity satisfied by the multi-measurement $\msf{N}= \{\{ N^A_{a|x}\}_a\}_x$ has the form $\sum_{a,x} \alpha_{a,x}N^A_{a|x}=\mbb{0}$, and is therefore defined by the vector of coefficients ${\vec \alpha} := (\alpha_{a,x})_{a,x}$. Let $\mc{O}_{\rm all}(\msf{N})$ represent the set of all vectors associated to the operational identities of $\msf{N}$. Let a similar definition hold for multi-measurement $\msf{M}$. Explicitly,
\begin{subequations}
\begin{align}
\mc{O}_{\rm all}(\msf{N})
 &:=\{ \vec \alpha\;
 |\sum_{a,x} \alpha_{a,x}N^A_{a|x}=\mbb{0}^A\},\label{op:meas1} \\ 
 \mc{O}_{\rm all}(\msf{M})&:=\{
 \vec \beta\;
|\sum_{b,y}\beta_{b,y} M^B_{b|y}=\mbb{0}^B\}.\label{op:meas2} 
\end{align}
\label{op:meas}
\end{subequations} \blk
For an ontological model to be noncontextual, the set of response functions $\{\{ p(a|x,\lambda_A)\}_a\}_x$ respresenting the multi-measurement $\msf{N} = \{\{N^A_{a|x}\}_a\}_x$, must be obtained from the latter by a linear map; similarly for the set of response functions $\{\{ p(b|y,\lambda_B)\}_b\}_y$  representing the multi-measurement $\msf{M} = \{\{M^B_{b|y}\}_b\}_y$. This implies that these response functions must satisfy the same identities as are satisfied by the effects, that is,
\begin{subequations}
\begin{align}
\sum_{ax} \alpha_{a,x}p(a|x,\lambda_A)&=0 \quad \forall \vec{\alpha}
\in  \mc{O}_{\rm all}( \msf{N}), \\ 
\sum_{by} \beta_{b,y}p(b|y,\lambda_B)&=0 \quad \forall \vec{\beta}
\in  \mc{O}_{\rm all}( \msf{M}).
\end{align}
\label{oe1}
\end{subequations}

Unlike for a unipartite prepare-measure circuit, there exist non-trivial compositions of processes within a bipartite Bell circuit, namely,
\begin{subequations}
\begin{align}
\msf{N}\circ \msf P
&:=  \{\{ \tr_A[(N^A_{a|x}\otimes \mbb{1}^B)\rho^{AB}]\}_a\}_x=\{\{\tilde{\rho}^B_{a|x}\}_a\}_x, \label{eq:steer A} \\
\msf{M}\circ \msf P &:=  \{\{ \tr_B[(\mbb{1}^A\otimes M^B_{b|y})\rho^{AB}]\}_a\}_x =\{\{\tilde{\rho}^A_{b|y}\}_b\}_y.  \label{eq:steer B}
\end{align}
\end{subequations}
Here, $\{\{\tilde{\rho}^B_{a|x}\}_a\}_x$ and $\{\{\tilde{\rho}^A_{b|y}\}_b\}_y$ are steering assemblages~\cite{Skrzypczyk2014, Rosset2020} \blk on parties $B$ and $A$ respectively.\footnote{One can also consider them as \textit{multi-sources} on $B$ and $A$ respectively~\cite{zhang2024parallel}.}

An operational identity satisfied by a steering assemblage $\{\{\tilde{\rho}^B_{a|x}\}_a\}_x$ has the form $\sum_{a,x} \alpha_{a,x} \tilde{\rho}^B_{a|x}=\mbb{0}$, and is therefore defined by the vector of coefficients ${\vec \alpha} := (\alpha_{a,x})_{a,x}$. Let $\mc{O}_{\rm all}(\msf{N}\circ \msf P)$ represent the set of all vectors associated to 
operational identities of 
 $\msf{N}\circ \msf P$, and similarly for  $\msf{M}\circ \msf P$, 
 that is,\blk
\begin{subequations}
\begin{align}
 \mc{O}_{\rm all}(\msf{N}\circ \msf P)&:=\{\vec{\alpha}\;
 |\sum_{a,x}\alpha_{a,x}\tilde{\rho}^B_{a|x} =\mbb{0}^B\}, \label{op:prep1}\\ 
 \mc{O}_{\rm all}(\msf{M}\circ \msf P)&:=\{ \vec{\beta}\;
 |\sum_{b,y}\beta_{b,y} \tilde{\rho}^A_{b|y}=\mbb{0}^A\}. \label{op:prep2}
\end{align}
\label{op:prep}
\end{subequations}

The steering assemblage on $B$,  $\{\{\tilde{\rho}^B_{a|x}\}_a\}_x$, is represented in the ontological model by a set of subnormalized distributions on $\Lambda_B$, $\{\{ \tilde{p}(a,\lambda_B|x ) \}_a\}_x$. \cmt  Similarly, the steering assemblage on $A$ is represented by a set of subnormalized distributions on $\Lambda_A$, $\{\{\tilde{p}(b,\lambda_A|y)\}_b\}_y$. These are defined as:
\begin{subequations}
\begin{align}
\tilde{p}(a,\lambda_B|x ) :&= \sum_{\lambda_A} 
p(a|x,\lambda_A)p(\lambda_A\lambda_B), \label{eq:classicalsteeringass}   \\ 
\tilde{p}(b,\lambda_A|y ) :&= 
\sum_{\lambda_B} p(b|y,\lambda_B)p(\lambda_A\lambda_B). \label{eq:classicalsteeringassA}    
\end{align}
\end{subequations}
The principle of noncontextuality implies that each of these sets of distributions must be the image of a linear map on the respective steering assemblage, implying that these distributions must satisfy the same identities as are satisfied by the elements of the steering assemblage, i.e., those in Eq.~\eqref{op:prep},
\begin{subequations}
\begin{align}
\sum_{ax} \alpha_{a,x}\tilde{p}(a,\lambda_B|x )&=0~~
\forall \vec{\alpha} \in \mc{O}_{\rm all}(\msf{N}\circ \msf P), 
\\
\sum_{by}\beta_{b,y}\tilde{p}(b,\lambda_A|y ) &=0~~
\forall \vec{\beta} \in \mc{O}_{\rm all}(\msf{M}\circ \msf P).
\end{align} \label{oe2}
\end{subequations}
\blk

\cmt  
Finally, consider the parallel compositions of the elements of the two steering assemblages $\{\{\tilde{\rho}^B_{a|x}\}_a\}_x$ and $\{\{\tilde{\rho}^A_{b|y}\}_b\}_y$, i.e., the product states $\tilde{\rho}^A_{b|y}\otimes \tilde{\rho}^B_{a|x}$. There exists one final set of operational identities that links these product states to the bipartite state $\rho^{AB}$, namely,
\begin{align}
\mc{O}_{\text{all}}(\tilde{\msf P}):&=\{\vec{\gamma}|\rho^{AB}=\sum_{axby}\gamma_{axby}\tilde{\rho}^A_{b|y}\otimes\tilde{\rho}^B_{a|x}\}.
\label{op:bipartite}
\end{align}
Here we use $\msf{\tilde{P}}$ as a shorthand for the tuple $(\msf{P},\msf N\circ \msf{P},\msf M\circ \msf{P})$ that is relevant to Eq~\eqref{op:bipartite}. The principle of  noncontextuality then implies additional constraints on the distributions $p(\lambda_A\lambda_B)$, namely, 
\begin{align}
p(\lambda_A\lambda_B)&=\sum_{axby}\gamma_{axby}\tilde{p}(a,\lambda_B|x )\tilde{p}(b,\lambda_A|y )~~\forall \vec{\gamma}\in\mc{O}_{\text{all}}(\tilde{\msf P}). \label{oe3}
\end{align}

\blk

For a specific Bell circuit $\mc{B}= (\msf{N},\msf{M},\msf P)$, the operational statistics $p(ab|xy)$ are said to be  NCOM-realizable if they can be realized by a model that is diagram-preserving (and hence local), as in Eq.~\eqref{eq:cl-bipartite}, 
and that respects the implications of noncontextuality for the full set of operational identities 
on each of $\msf{N}$, $\msf{M}$, $\msf{N}\circ \msf P$, $\msf{M}\circ \msf P$ \cmt, and $\tilde{\msf P}$, \blk i.e., that respects Eqs.~\eqref{oe1}, ~\eqref{oe2} \cmt and ~\eqref{oe3}\cmt. We denote the set of all such statistics by $\mc{NC}(\mc{B})$. 

Suppose one wishes to find conditions on the operational statistics that are merely \textit{necessary} but not sufficient for NCOM-realizability. This is achieved when one applies the principle of noncontextuality to \textit{subsets} of the full set of operational identities for one or more of $\msf{N}$, $\msf{M}$, $\msf{N}\circ \msf P$,$\msf{M}\circ \msf P$ \cmt , and $\tilde{\msf P}$\blk. In what follows, for each of them, we consider a single alternative to the complete set of operational identities thereon, namely, what we term the \textit{trivial set}. The trivial set of operational identities for  $\msf{N}$, denoted $\mc{O}_{\rm triv}(\msf{N})$, is all and only those that arise from the completeness condition $\sum_{a} N^A_{a|x} = \sum_{a} N^A_{a|x'}= \mbb{1}$.  These identities are deemed to be trivial because the corresponding ontological identities are automatically satisfied simply by the definition of a set of response functions, i.e.,  $\sum_{a} p(a |x,\lambda_A) = \sum_{a} p(a |x', \lambda_A)=1$. Similarly, the set $\mc{O}_{\rm triv}(\msf{N}\circ \msf P)$ represents those operational identities that arise from the no-signalling condition $\sum_{a} \tilde{\rho}^B_{a|x} = \sum_{a} \tilde{\rho}^B_{a|x'}$, which are trivial because the corresponding ontological identities $\sum_{a} \tilde{p}(a \lambda_B|x) = \sum_{a} \tilde{p}(a\lambda_B|x')$ are automatically satisfied due to the assumption of diagram-preservation. \cmt  For the case of $\tilde{\msf P}$, the trivial set of operational identities is defined to be the null set. 

\par 
Hence, in a specific Bell circuit $\mc{B}=(\msf N, \msf M, \msf P)$, for each of $\msf{N}, \msf{M}, \msf{N}\circ \msf P, \msf{M}\circ \msf P$ \cmt  and $\tilde{\msf P}$\blk, we can associate a binary variable that stipulates whether we use $\mc{O}_{\rm all}$ or $\mc{O}_{\rm triv}$ for it.  That is, letting $i, j, u, v, w\in \{\rm a,t \}$, where a stands for `all' and t stands for `trivial', one can define  \cmt  
\begin{align}
&\mc{O}_{ijuvw}(\mc{B})\\
&:=\mc{O}_{i} (\msf{N})\cup \mc{O}_{j}(\msf{M})\cup
\mc{O}_{u}(\msf{N}\circ \msf P)
\cup
\mc{O}_{v}(\msf{M}\circ \msf P)\cup
\mc{O}_{w}(\tilde{\msf P}).\notag
\end{align}
We thereby define 32 different subsets of operational identities: $\{ \mc{O}_{ijuvw}(\mc{B}) \}_{ijuvw\in \{{\rm a},{\rm t}\}^{\times 5}}$.
Clearly, we have inclusion relations such as:
\begin{align}
\mc{O}_{\rm aaaaa}(\mc{B})&\supseteq\mc{O}_{\rm aaaat}(\mc{B}) \supseteq  \mc{O}_{\rm aaatt}(\mc{B}), \mc{O}_{\rm aatat}(\mc{B})  \label{eq:inclrelsO}\\
&\supseteq\mc{O}_{\rm aattt}(\mc{B}) \supseteq  \mc{O}_{\rm atttt}(\mc{B}), \mc{O}_{\rm tattt}(\mc{B}) \supseteq  \mc{O}_{\rm ttttt}(\mc{B}).\notag
\end{align}

In this Letter, we focus on the four cases of $\mc{O}_{\rm aattt}(\mc{B})$, $\mc{O}_{\rm atttt}(\mc{B})$, $\mc{O}_{\rm tattt}(\mc{B})$ and $ \mc{O}_{\rm ttttt}(\mc{B})$, because these define, respectively, the separable-entangled boundary, the unsteerable-steerable boundary  (in the $B{\rightarrow}A$ and $A{\rightarrow}B$ directions respectively) and the local-nonlocal boundary.  Other subsets are also of interest. $ \mc{O}_{\rm aaatt}(\mc{B})$, for instance, characterizes those bipartite states that can prepare nonclassical ensembles of states on B~\cite{zhang2025tech}, which have been studied in Ref~\cite{Plavala2024}. 
\blk 

\begin{definition}
Consider a Bell circuit $\mc{B}=(\msf{N},\msf{M},\msf P)$ and a subset of its operational identities, denoted $\mathcal{O}(\mc{B})$. The operational statistics $\mbf{p}:=\{p(ab|xy)\}_{abxy}$ achieved in this circuit are said to be {\em NCOM-realizable relative to $\mc{O}(\mc{B})$} 
if they can be realized by a model that is diagram-preserving (and hence local), as in Eq.~\eqref{eq:cl-bipartite}, and that respects the implications of noncontextuality for the operational identities in $\mathcal{O}(\mc{B})$, i.e., that respects the subset of ontological identities in Eqs.~\eqref{oe1}, \eqref{oe2} \cmt and ~\eqref{oe3}\cmt. 
 \label{def:statistics}
\end{definition}

In Ref.~\cite{zhang2024parallel}, the notion of classical explainability of the operational statistics of an experiment---in other words, the notion of  NCOM-realizability of these statistics---was leveraged to define a notion of classicality for an {\em individual process} within an experiment. In the case of a bipartite state, it is defined to be classical if and only if the operational statistics generated by the set of circuits obtained by composing the state with every possible pair of local measurements are classically explainable. 

\par    
More formally, let $\rho$ denote the bipartite state in question. \cmt We denote the multi-measurements that consist of the {\em full} set of possible quantum measurements on $A$ by $\msf{N}^{\rm full}$ and the one that consists of
the full set on $B$ by $\msf{M}^{\rm full}$ (i.e., these have setting variables that range over all possible measurements)\blk. The resulting Bell circuit, denoted 
$\mc{B}^{\rm full}=(\msf{N}^{\rm full},\msf{M}^{\rm full},\msf P)$, is termed a {\em measurement-full Bell circuit}. The proposal of Ref.~\cite{zhang2024parallel}, then, can be stated as follows:
\begin{definition}
Consider the measurement-full Bell circuit $\mc{B}^{\rm full}=(\msf{N}^{\rm full},\msf{M}^{\rm full},\msf P=\rho^{AB})$. The bipartite state $\rho^{AB}$ is said to be {\em classical} \footnote{\cmt  For readers who would prefer a terminology that is less generic than `classical', these bipartite states can be referred to as \textit{Leibniz-classical}, a terminology introduced in Ref~\cite{zhang2024parallel}.\blk } if the operational statistics arising in this circuit are NCOM-realizable relative to \cmt  $\mc{O}_{\rm aaaaa}(\mc{B}^{\rm full})$ in the sense of Definition~\ref{def:statistics} \blk. 
\label{def:classicalstate}
\end{definition} 

\cmt  We refer the reader to Ref.~\cite{zhang2024parallel} for a complete characterization of \textit{classical} bipartite states.

The present work moves beyond the results of Ref.~\cite{zhang2024parallel} by considering \blk the notion of  NCOM-realizability {\em relative to different subsets} of the set of all operational identities, $\mathcal{O}(\mc{B}^{\rm full})\subseteq \mathcal{O}_{\rm aaaaa}(\mc{B}^{\rm full})$, to define necessary (but in general not sufficient) conditions for the classicality of a bipartite state in the Bell circuit $\mc{B}^{\rm full}$.

\par 
\begin{definition}
Consider a measurement-full  Bell circuit  $\mc{B}^{\rm full}=(\msf{N}^{\rm full},\msf{M}^{\rm full},\msf P=\rho^{AB})$ and a subset of operational identities, denoted $\mathcal{O}(\mc{B}^{\rm full})$.  The bipartite state $\rho^{AB}$ is said to be {\em classical  relative to $\mathcal{O}(\mc{B}^{\rm full})$} 
if the operational statistics arising in this circuit are NCOM-realizable relative to $\mc{O}(\mc{B}^{\rm full})$ in the sense of Definition~\ref{def:statistics}. 
 \label{def:bipartite}
\end{definition}

It was shown in \cite{zhang2024parallel} that the set of bipartite states that are classical (in the sense of Definition~\ref{def:classicalstate}) is a strict subset of the separable states. 
Here, we demonstrate that, within this set of bipartite states, considering classicality {\em relative to various choices of subsets of the full set of operational identities} (in the sense of Definition~\ref{def:bipartite}) provides a novel and unified operational definition of the separable-entangled boundary, the steerable-unsteerable boundary, and the local-nonlocal boundary. \blk

\begin{theorem}
\label{thm: connection}
Let $\mc{B}^{\rm full}=(\msf{N}^{\rm full},\msf{M}^{\rm full},\msf P=\rho^{AB})$ denote a measurement-full Bell circuit. For the bipartite state $\rho^{AB}$,  there is an equivalence between the two properties in  each of the following pairs:\\
(1) classicality relative to $\mc{O}_{\rm aattt}(\mc{B}^{\rm full})$ and separability, \\
(2) classicality relative to $\mc{O}_{\rm atttt}(\mc{B}^{\rm full})$ and $B{\rightarrow}A$ unsteerability, \\
(3) classicality relative to $\mc{O}_{\rm tattt}(\mc{B}^{\rm full})$ and $A{\rightarrow}B$ unsteerability,\\
(4) classicality relative to $\mc{O}_{\rm ttttt}(\mc{B}^{\rm full})$ and locality. \\
\end{theorem}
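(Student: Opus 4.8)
The plan is to reduce all four equivalences to one genuinely non-trivial ingredient — a Gleason/Busch-type lemma about response functions in a \emph{measurement-full} scenario — and then assemble the rest from routine constructions of local, local-hidden-state, and separable decompositions, together with tomographic completeness. The lemma I would prove first is: if a set of response functions $\{\{p(b|y,\lambda_B)\}_b\}_y$ representing $\msf M^{\rm full}$ (the set of \emph{all} POVMs on $B$) respects every ontological identity in Eq.~\eqref{oe1} associated with $\mc O_{\rm all}(\msf M^{\rm full})$, then for each $\lambda_B$ there is a density operator $\sigma^B_{\lambda_B}$ with $p(b|y,\lambda_B)=\tr[\sigma^B_{\lambda_B}M_{b|y}]$, and conversely every such assignment respects those identities. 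The converse direction is immediate from linearity of the trace. For the forward direction one notes that, since $\msf M^{\rm full}$ contains every POVM, comparing the POVM $\{E,F,\mbb{1}-E-F\}$ with $\{E+F,\mbb{1}-E-F\}$ yields the operational identity forcing $p(\,\cdot\,|\lambda_B)$ to be additive on effects; together with normalization and monotonicity this makes it the restriction to the effect cone of a linear functional on Hermitian operators that is non-negative on effects and unital, i.e.\ a density operator. The same holds for $\msf N^{\rm full}$; and because $\msf N^{\rm full},\msf M^{\rm full}$ are tomographically complete, the statistics $p(ab|xy)$ over all $x,y$ determine $\rho^{AB}$ uniquely — a fact I would use repeatedly.

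Given the lemma, the three equivalences (1)--(3) are structurally identical. For (1): if $\rho$ is classical relative to $\mc O_{\rm aatt}(\mc B^{\rm full})$, applying the lemma to both parties turns Eq.~\eqref{eq:cl-bipartite} into $p(ab|xy)=\tr\bigl[\bigl(\sum_{\lambda_A\lambda_B}p(\lambda_A\lambda_B)\,\sigma^A_{\lambda_A}\otimes\sigma^B_{\lambda_B}\bigr)\,N_{a|x}\otimes M_{b|y}\bigr]$; the only assemblage identities imposed by $\mc O_{\rm aatt}$ are the trivial ones, which hold automatically under diagram-preservation, so there is nothing further to check, and tomographic completeness forces $\rho^{AB}=\sum_{\lambda_A\lambda_B}p(\lambda_A\lambda_B)\sigma^A_{\lambda_A}\otimes\sigma^B_{\lambda_B}$, a separable state. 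Conversely, from $\rho^{AB}=\sum_k p_k\,\sigma^A_k\otimes\tau^B_k$ set $\lambda_A=\lambda_B=k$, $p(\lambda_A\lambda_B)=p_k\delta_{\lambda_A\lambda_B}$, $p(a|x,k)=\tr[\sigma^A_kN_{a|x}]$, $p(b|y,k)=\tr[\tau^B_kM_{b|y}]$; being linear in the effects these respect all of $\mc O_{\rm all}(\msf N^{\rm full})$ and $\mc O_{\rm all}(\msf M^{\rm full})$, the trivial assemblage identities hold automatically, and the model reproduces $p(ab|xy)$. For (3) one applies the lemma only to $B$: classicality relative to $\mc O_{\rm tatt}$ pins Bob's response functions to $\tr[\sigma^B_{\lambda_B}M_{b|y}]$, Alice's are unconstrained beyond being valid response functions, and collecting terms in Eq.~\eqref{eq:cl-bipartite} gives $\tilde\rho^B_{a|x}=\tr_A[(N_{a|x}\otimes\mbb{1})\rho^{AB}]=\sum_{\lambda_B}q(\lambda_B)\,\tilde p(a|x,\lambda_B)\,\sigma^B_{\lambda_B}$ with $q(\lambda_B)=\sum_{\lambda_A}p(\lambda_A\lambda_B)$ and $\tilde p(a|x,\lambda_B)=q(\lambda_B)^{-1}\sum_{\lambda_A}p(\lambda_A\lambda_B)p(a|x,\lambda_A)$ a genuine response function — exactly a local-hidden-state model for the $A\to B$ assemblage (again reading the assemblage off the statistics via tomographic completeness of $\msf M^{\rm full}$), i.e.\ $A\to B$ unsteerability; the converse reverses this construction, and (2) is the mirror image with $A$ and $B$ swapped.

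Case (4) is essentially immediate: for $\mc O_{\rm tttt}(\mc B^{\rm full})$ every ontological identity selected is one that was already shown, when the trivial sets of operational identities were introduced, to hold automatically (from the definition of response functions and from diagram-preservation), so classicality relative to $\mc O_{\rm tttt}$ imposes nothing beyond the diagram-preserving model Eq.~\eqref{eq:cl-bipartite}, which is precisely a local-hidden-variable model for $p(ab|xy)$; in the measurement-full scenario the existence of one such model for all local measurements is the definition of $\rho$ being Bell-local. I expect the only real obstacle to be the Gleason/Busch-type lemma of the first paragraph — making the argument that the full set of POVMs forces an a priori arbitrary (and possibly measure-theoretically delicate, on a continuous $\Lambda$) family of response functions onto the set of quantum states, with the attendant normalization and measurability bookkeeping. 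Everything downstream is just packaging local, steering-LHS, or separable decompositions and invoking tomographic completeness.
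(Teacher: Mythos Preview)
Your proposal is correct and follows essentially the same approach as the paper: both identify the generalized Gleason/Busch theorem as the single non-trivial ingredient (converting response functions that respect $\mc O_{\rm all}$ of a measurement-full set into density operators), then invoke tomographic completeness to read off the separable, LHS, or LHV decomposition, with the converse in each case given by the obvious construction. The paper simply cites Busch's result rather than sketching the additivity argument, and orders the cases slightly differently, but the logical structure is the same.
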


\begin{figure*}
    \centering
    \includegraphics[width=0.9\linewidth]{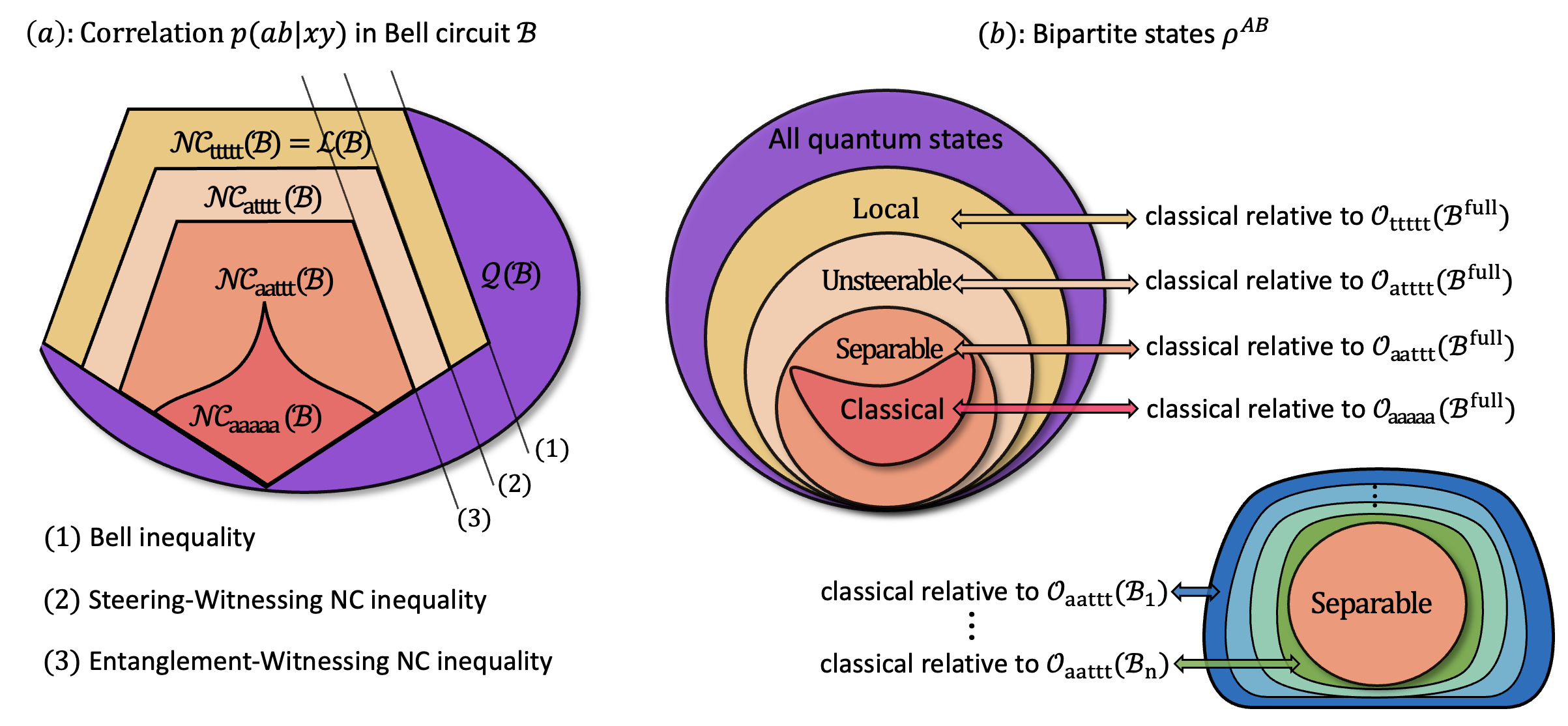}
    \caption{(a): Given a fixed $(\msf N, \msf M,\msf P)$-Bell circuit $\mc{B}$, we sketch different NCOM-realizable polytopes in the space of operational statistics, where $\mc{NC}(\mc{B})=\mc{NC}_{\rm aaaaa}(\mc{B}) \subseteq \mc{NC}_{\rm aattt}(\mc{B}) \subseteq \mc{NC}_{\rm atttt}(\mc{B})\subseteq \mc{NC}_{\rm ttttt}(\mc{B})=\mc{L}(\mc{B})\subseteq \mc{Q}(\mc B)$. Here, $\mc{NC}_{\rm aaaaa}(\mc{B})$ is defined by considering operational identities; $\mc{NC}_{\rm aattt}(\mc{B})$ is defined by considering all nontrivial operational identities on Alice's and on Bob's measurements; $\mc{NC}_{\rm atttt}(\mc{B})$ is defined by considering all nontrivial operational identities only on Alice's measurements; $\mc{NC}_{\rm ttttt}(\mc{B})=\mc{L}(\mc{B})$ is defined by not taking into account any nontrivial operational identities, and $\mc{Q}(\mc B)$ is the set of all quantum correlations that is realizable by a Bell circuit with the same cardinality of the setting and outcome variables. 
(b): a bipartite state being classical,  separable, unsteerable or local,  is equivalent to it being classical relative to $\mc{O}_{\rm aaaaa}(\mc{B}^{\rm full})$, $\mc{O}_{\rm aattt}(\mc{B}^{\rm full})$, $\mc{O}_{\rm atttt}(\mc{B}^{\rm full})$ or $\mc{O}_{\rm ttttt}(\mc{B}^{\rm full})$ respectively. Insert: Using a sequence of Bell circuits $\{\mc{B}_i= (\msf{N}_i,\msf{M}_i,\msf P)\}_i$ that limit to the measurement-full Bell circuit with $\msf{N}_1\subset\cdots\subset \msf{N}_n \subset \cdots\subset\msf{N}^{\rm full}$ and $\msf{M}_1\subset\cdots \subset \msf{M}_n\subset \cdots\subset\msf{M}^{\rm full}$, the set of states that are classical relative to $\mc{O}_{\rm aattt}(\mc{B}_{i})$ limits to 
the set of separable states.  } 
    \label{fig:relations}
\end{figure*}

We leave the proof of this theorem to  Appendix~\ref{sec: methodA}. The last case follows directly from the fact that if none of the nontrivial operational identities are leveraged, then the only constraint on the correlations is that of Eq.~\eqref{eq:cl-bipartite}, which is simply realizability by a local hidden variable model~\cite{Brunner2014}.  

It is worth noting that in a related work~\cite{Plavala2024}, the prepare-measure circuit on system $B$ with preparation that arises from $A\rightarrow B$ steering in a bipartite Bell circuit is considered, and it is shown that a certain witness of NCOM-nonrealizability is a necessary but not sufficient condition for entanglement. The key difference between this and the theorem above lies in the choice of operational identities considered.  Specifically, Ref.~\cite{Plavala2024} effectively considers NCOM-realizability relative to $\mc{O}_{\rm aaatt}$, whereas case $(1)$ in Theorem~\ref{thm: connection} in our work focuses on NCOM-realizability relative to $\mc{O}_{\rm aattt}$. By considering the consequences of noncontextuality for a smaller set of operational identities, we are able to exactly single out the separable-entangled divide among bipartite states. Because Ref.~\cite{Plavala2024} considers more operational identities, 
the constraints from noncontextuality are more stringent, and so single out a different divide, between a strict {\em subset} of the separable states and the complement of this set. This subset of bipartite states can be understood as those that are not capable of steering to a nonclassical set of states on $B$, for the notion of nonclassicality \blk defined in \cite{zhang2025tech, zhang2024parallel}.
More formally, we have: 
\begin{proposition}
\label{cor: connection}
\cite{Plavala2024, zhang2025tech}
Consider the measurement-full Bell circuit $\mc{B}^{\rm full}=(\msf{N}^{\rm full},\msf{M}^{\rm full},\msf P=\rho^{AB})$.
For the bipartite state $\rho^{AB}$,  there is an equivalence between the two properties in  each of the following pairs:\\
(5) classicality relative to $\mc{O}_{\rm aaatt}(\mc{B}^{\rm full})$ and the impossibility of steering to a nonclassical set of states on $B$, \\
(6) classicality relative to $\mc{O}_{\rm aatat}(\mc{B}^{\rm full})$ and the impossibility of steering to a nonclassical set of states on $A$. 
\end{proposition}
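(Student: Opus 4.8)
The plan is to obtain Proposition~\ref{cor: connection} by making explicit the bridge between the present framework and the results of Refs.~\cite{Plavala2024, zhang2025tech}. The key observation is that within $\mc{O}_{\rm aaat}(\mc{B}^{\rm full})$ the only identities imposed on $\msf{M}^{\rm full}\!\circ\rho$ are the trivial ones, which by diagram-preservation add nothing; so NCOM-realizability relative to $\mc{O}_{\rm aaat}$ is exactly the requirement that the prepare-measure scenario on Bob's wing---with ``preparations'' the states $\{\tilde{\rho}^B_{a|x}\}$ that $\rho$ steers Bob into, and measurements $\msf{M}^{\rm full}$---admit a noncontextual ontological model, subject to the extra bookkeeping that Alice's response functions respect $\mc{O}_{\rm all}(\msf{N}^{\rm full})$ and that one joint distribution $p(\lambda_A\lambda_B)$ underlies every steering strategy. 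Since (by the definition of classicality of a set of states in Ref.~\cite{zhang2025tech}) the assemblage $\msf{N}^{\rm full}\!\circ\rho$ is a classical set of states precisely when the prepare-measure scenario it forms with $\msf{M}^{\rm full}$ is NCOM-realizable, the claim reduces to: this assemblage is classical iff the $\mc{O}_{\rm aaat}$-constrained bipartite model exists.

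For ``$\Rightarrow$'' I would take a bipartite model $p(\lambda_A\lambda_B),\,p(a|x\lambda_A),\,p(b|y\lambda_B)$ obeying Eqs.~\eqref{eq:cl-bipartite}, \eqref{oe1} and \eqref{oe2} for the identities in $\mc{O}_{\rm aaat}$, and marginalize Alice out: defining $\nu_{a|x}(\lambda_B):=\frac{1}{p(a|x)}\sum_{\lambda_A}p(a|x\lambda_A)p(\lambda_A\lambda_B)$ and keeping Bob's response functions $p(b|y\lambda_B)$, Eq.~\eqref{eq:classicalsteeringass} shows the statistics $\tr[(\tilde{\rho}^B_{a|x}/p(a|x))M_{b|y}]$ are reproduced; noncontextuality of $\msf{M}^{\rm full}$ is inherited from the first line of Eq.~\eqref{oe1}; and noncontextuality of the preparations $\{\tilde{\rho}^B_{a|x}/p(a|x)\}$ follows from the first line of Eq.~\eqref{oe2} (i.e.\ from $\mc{O}_{\rm all}(\msf{N}^{\rm full}\!\circ\rho)$), after rescaling coefficient vectors by $1/p(a|x)$. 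Hence every set of states $\rho$ can steer Bob into is classical.

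For ``$\Leftarrow$'' I would exploit measurement-fullness on Alice's wing: because $\msf{N}^{\rm full}$ contains every two-outcome POVM, response functions respecting $\mc{O}_{\rm all}(\msf{N}^{\rm full})$ must be Born-rule ones, $p(a|x\lambda_A)=\tr[\sigma_{\lambda_A}N_{a|x}]$ for a density operator $\sigma_{\lambda_A}$; so I take $\Lambda_A$ to be the density operators on $A$ with the Born rule as response functions. Given that $\msf{N}^{\rm full}\!\circ\rho$ is classical, fix a noncontextual model of $(\{\tilde{\rho}^B_{a|x}/p(a|x)\},\msf{M}^{\rm full})$ with ontic space $\Lambda_B$, Born-rule response functions $p(b|y\lambda_B)=\tr[\tau_{\lambda_B}M_{b|y}]$, and subnormalized measures $\mu_{a|x}(\lambda_B)$ representing $\tilde{\rho}^B_{a|x}$. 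Since $\mc{O}_{\rm all}(\msf{N}^{\rm full})\subseteq\mc{O}_{\rm all}(\msf{N}^{\rm full}\!\circ\rho)$, the map $N_{a|x}\mapsto\mu_{a|x}(\lambda_B)$ is well-defined and linear on $\mathrm{span}\{N_{a|x}\}$ (all Hermitian operators on $A$) and nonnegative on every effect, so it is a positive functional: $\mu_{a|x}(\lambda_B)=\tr[\Theta_{\lambda_B}N_{a|x}]$ for some $\Theta_{\lambda_B}\ge 0$, with $\sum_{\lambda_B}\Theta_{\lambda_B}=\rho^A$ forced by measurement-fullness. Writing $\Theta_{\lambda_B}=p(\lambda_B)\hat{\Theta}_{\lambda_B}$ with $\hat{\Theta}_{\lambda_B}$ normalized, and letting $p(\lambda_A\lambda_B)$ put all its $\lambda_B$-conditional weight $p(\lambda_B)$ on $\lambda_A=\hat{\Theta}_{\lambda_B}$, yields a bipartite model: Eq.~\eqref{eq:cl-bipartite} recovers $p(ab|xy)$ since $\sum_{\lambda_A}p(a|x\lambda_A)p(\lambda_A\lambda_B)=\mu_{a|x}(\lambda_B)$; Eq.~\eqref{oe1} holds on both wings by construction; Eq.~\eqref{oe2} holds because that marginal respects $\mc{O}_{\rm all}(\msf{N}^{\rm full}\!\circ\rho)$ while the $\msf{M}^{\rm full}\!\circ\rho$ constraints are trivial. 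Case (6) is the mirror image under $A\leftrightarrow B$.

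The main obstacle is this ``$\Leftarrow$'' step: one must glue a priori unrelated ontological models---one per steering strategy of Alice---into a single bipartite model with one $p(\lambda_A\lambda_B)$ and one family of Alice-response functions that reproduces every strategy while respecting $\mc{O}_{\rm all}(\msf{N}^{\rm full})$. What makes this tractable is (i) that in a measurement-full scenario Alice's ontic states can be \emph{taken} to be density operators, leaving no freedom in Alice's representation, and (ii) that the inclusion $\mc{O}_{\rm all}(\msf{N}^{\rm full})\subseteq\mc{O}_{\rm all}(\msf{N}^{\rm full}\!\circ\rho)$ forces the subnormalized measures supplied by the ensemble model to already be of Born-rule form on Alice's side, so that they assemble into a legitimate joint distribution. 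A minor point to dispatch along the way is that ``the set of states $\rho$ can steer to on $B$'' should be read as the full assemblage $\msf{N}^{\rm full}\!\circ\rho$ (any individual strategy's ensemble being a subset of it), which is the object addressed by the $\mc{O}_{\rm all}(\msf{N}^{\rm full}\!\circ\rho)$ constraints.
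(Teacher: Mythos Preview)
Your proposal is correct and follows essentially the same route as the paper's proof: both directions hinge on Gleason/Busch to force Born-rule response functions on the measurement-full wings, and on the positive operators $\Theta_{\lambda_B}$ (the paper's $\tau_{\lambda_B}=\tr_B[(\mathbb{1}\otimes H_{\lambda_B})\rho]$) obtained from the linear, nonnegative dependence of the preparation model on $\tilde{\rho}^B_{a|x}$. The only cosmetic difference is that the paper packages the ``$\Leftarrow$'' direction via the frame-representation characterization of classical state sets (Theorem~2 of Ref.~\cite{zhang2024parallel}) and then sets $p(\lambda_A\lambda_B)=\delta_{\lambda_A\lambda_B}$ with $\Lambda_A$ identified with $\Lambda_B$, whereas you take $\Lambda_A$ to be all density operators and concentrate the joint distribution on $\hat{\Theta}_{\lambda_B}$; these are equivalent parametrizations of the same construction.
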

The proof of proposition~\ref{cor: connection} above proceeds in the same way as the proof of Theorem 1 in Ref.~\cite{Plavala2024}. \cmt For completeness, and to highlight its connection to our broader framework, we review the argument in Appendix~\ref{sec: methodA} using our notation.  

Experimentally, one cannot implement \textit{all} local measurements that are deemed possible in quantum theory. Consequently, the multi-measurements $\msf{N}$ and $\msf{M}$ that arise in an experimental Bell circuit are necessarily distinct from $\msf{N}^{\rm full}$ and $\msf{M}^{\rm full}$.  In the following section, we consider different choices of $\msf{N}$ and $\msf{M}$ and analyze what they imply for witnessing nonclassicality relative to a given set of operational identities and for certifying bipartite resourcefulness.

\section{Entanglement Certification}  
\label{sec:III}

In a Bell circuit $\mc B= (\msf{N},\msf{M},\msf P)$, there are different ways of achieving entanglement certification, including those summarized in Table~\ref{tab:my_label}. They differ in several respects, as we discuss in detail in Section~\ref{sec:VI}. Among them, an important distinction concerns the assumptions one has to make about the processes appearing in the Bell circuit. Consider the following alternatives: (1) one has a full characterization of the processes appearing in the circuit; (2) one knows only the cardinalities of the setting and outcome variables; and (3) one has characterized only the operational identities satisfied by the processes in the circuits.  We now consider how to achieve entanglement certification relative to each of these alternatives.

(1) In this case, as long as the sets of measurements in $\msf{N}$ and $\msf{M}$ are characterized and are informationally complete, then one can tomographically reconstruct $\msf{P}=\rho^{AB}$ and certify entanglement in the bipartite state $\rho^{AB}$ using standard separability criteria~\cite{Guhne2009}.

(2) In this case, entanglement certification can be done in a device-independent manner by looking for a violation of a Bell inequality~\cite{Brunner2014}. Specifically,  \blk consider a Bell circuit with outcomes $a, b \in \{0, \dots, o-1\}$ and settings $x \in \{0, \dots, \Delta_{\msf N}-1\}$ and $y \in \{0, \dots, \Delta_{\msf M}-1\}$.  The correlations  $\{p(ab|xy)\}_{abxy}$ can be viewed as points in the probability space $\textbf{p} \in \mbb{R}^{o^2 \Delta_{\msf N} \Delta_{\msf M}}$. 
Device-independent certification in Bell circuits is typically analyzed relative to the notion of a local hidden variable (LHV) model~\cite{Brunner2014}. The convex set of all LHV-realizable probability distributions $p(ab|xy)$, formalized in  Eq.~\eqref{eq:cl-bipartite}, is known as the {\em local polytope}, denoted  $\mc{L}(\mc{B})$, \cmt which depends only on the cardinalities of the setting and outcome variables in $\mc{B}$. Bell inequalities are the facet inequalities of $\mc{L}(\mc{B})$. \blk Because LHV-realizability coincides with NCOM-realizability relative to $\mc{O}_{\rm ttttt}(\mc{B})$ (i.e., both are defined by Eq.~\eqref{eq:cl-bipartite}), Bell inequalities can also be seen as the weakest set of noncontextuality inequalities in the bipartite Bell circuit.  The simplest and most well-known example of Bell inequalities \blk occurs for $\Delta_{\msf N} = \Delta_{\msf M} = o = 2$, leading to the Clauser-Horne-Shimony-Holt (CHSH) inequalities~\cite{Clauser1969, nobel2022}. For certain Bell circuits, such as the one described in example~\ref{example1}, Bell inequalities are the {\em only} noncontextuality inequalities.

\cmt
(3) This is the case of the entanglement certification method proposed here. It differs from (1) in that it does not rely on a full characterization of the measurements in $\msf{N}$ and $\msf{M}$; it also differs from the Bell test in (2), since it relies on operational identities in the Bell circuit $\mc{B}$. These operational identities can be faithfully inferred from the operational statistics of $\mc{B}$ without characterizing the devices, under the sufficient condition of \textit{tomographic completeness}, as we detail in Section~\ref{sec IV}. 

Our entanglement certification method relies on characterizing the set of all NCOM-realizable correlations relative to various choices of operational identities. More formally, \cmt
\begin{definition}
The noncontextual polytope $\mc{NC}_{\rm i j u v w}(\mc{B})$ is defined to be the set of correlations $\textbf{p}:=\{p(ab|xy)\}_{abxy} \in \mbb{R}^{o^2 \Delta_{\msf N} \Delta_{\msf M}}$ that are NCOM-realizable relative to the set $\mc{O}_{\rm i j u v w}(\mc{B})$ of operational identities, i.e., 
\begin{align*}
&\mc{NC}_{\rm ijuvw}(\mc{B})\\
&:=\{\mbf{p}~\text{is NCOM-realizable relative to}~\mc{O}_{\rm i j u v w}(\mc{B})\}.\notag 
\end{align*}
Different from the local polytope $\mc L(\msf B)$, this set depends not only on the cardinalities of the setting variables and outcome variables in $\mc B$, but also on additional details, namely, those required to determine the operational identities in $\mc B$. An inequality whose violation witnesses nonmembership in $\mc{NC}_{\rm i j u v w}(\mc{B})$ will be termed a \textit{noncontextuality inequality relative to $\mc{O}_{\rm i j u v w}(\mc{B})$}.
\end{definition}
\blk

Given the subset inclusion relations of Eq.~\eqref{eq:inclrelsO}, we have that, for any Bell circuit $\mc B$, 
\begin{align}
\mc{NC}(\mc{B})&=\mc{NC}_{\rm aaaaa}(\mc{B})  \subseteq \mc{NC}_{\rm aattt}(\mc{B}) \label{eq:hierarchy} \\
&\subseteq \mc{NC}_{\rm atttt}(\mc{B}), \mc{NC}_{\rm tattt}(\mc{B})\subseteq \mc{NC}_{\rm ttttt}(\mc{B})=\mathcal{L}(\mc{B}),\notag 
\end{align}
\blk
as depicted schematically in Fig~\ref{fig:relations}~(a). It should be noted that, apart from $\mc{NC}(\mc{B})$, 
each of these sets of correlations above is convex, as they are each defined by imposing additional linear constraints (in the form of Eq.~\eqref{op:meas}) on the convex set $\mc{L}(\mc{B})$. \cmt Therefore, apart from $\mc{NC}(\mc{B})$,  their boundary will be determined by sets of \textit{linear} \textit{noncontextuality} inequalities.\blk

In the following, we focus mostly on the noncontextual polytope $\mc{NC}_{\rm aattt}(\mc{B})$ in the context of some particular examples of Bell circuits $\mc{B}$. \cmt The properties of $\mc{NC}_{\rm aattt}(\mc{B})$, namely, its dimension and vertex complexity, are detailed in Appendix~\ref{sec: methodB},  as are the \blk numerical procedures underpinning its analysis, namely, facet enumeration and polytope membership linear programming.

\subsection{Certifying two-qubit noisy isotropic states with fixed measurements}
\label{secIIIA}
Recall that the local polytope $\mc{L}(\mc{B})$ depends only on the cardinality of setting variables in $\msf{N}$ and $\msf{M}$, denoted $\Delta_{\msf N}$  and $\Delta_{\msf M}$ respectively, and the number of distinct outcomes for each, denoted $o$.  \cmt We can therefore refer to the class of Bell circuits sharing this same local polytope as the $(\Delta_{\msf N},\Delta_{\msf M},o,o)$-Bell circuits, denoted $\msf{Bell}_{\Delta{\msf N}\Delta_{\msf M}oo}$, and write $\mc{L}(\msf{Bell}_{\Delta{\msf N}\Delta_{\msf M}oo})$ for their common local polytope. 

By contrast, the noncontextual polytope $\mc{NC}_{\rm i j u v w}(\mc{B})$ depends also on the set of operational identities $\mc{O}_{\rm i j u v w}(\mc{B})$ (though, it still does not depend on the full characterization of $\mc{B}$). In the following, since we are only considering the sets of operational identities $\mc{O}_{\rm aattt}(\mc{B})$, $\mc{O}_{\rm atttt}(\mc{B})$, $\mc{O}_{\rm tattt}(\mc{B})$ and $\mc{O}_{\rm ttttt}(\mc{B})$,  and hence only the nontrivial operational identities in the multi-measurement $\msf{N}$ and $\msf{M}$, it is convenient to encode these identities as the intrinsic geometries of $\msf{N}$ and $\msf{M}$, that is, as shapes, whose vertices satisfy the linear constraints defined by $\mc{O}_{\rm all}(\msf{N})$ and $\mc{O}_{\rm all}(\msf{M})$, respectively. Accordingly, we denote this pair of shapes by $P$ and $Q$ respectively, and refer to the corresponding \textit{class} of Bell circuits as $(P,Q)$-Bell circuits, denoted $\msf{Bell}_{\rm PQ}$. For any $\mc{B}\in \msf{Bell}_{\rm PQ}$, we can write $\mc{NC}_{\rm i j u v w}(\msf{Bell}_{\rm PQ})$ for the corresponding noncontextual polytope relative to $\mc{O}_{\rm i j u v w}(\mc{B})$. \blk

For example, consider the class of the simplest type of Bell circuit with $\Delta_{\msf N} = \Delta_{\msf M} = o = 2$. For any pair of binary-outcome measurements, $\msf{N} = \{\{N^A_{a|x}\}_a\}_x$, the full set of operational identities satisfied by this set of effects is the singleton set consisting of thely trivial identity: $N_{0|0}+N_{1|0}=N_{0|1}+N_{1|1}$. The intrinsic geometry of this set of effects is that of a square, since the vertices of the square, denoted $\{\vec{t}_{a,x}\}_{a,x}$, satisfy exactly one linear constraint: $\vec{t}_{0|0}+\vec{t}_{1|0}=\vec{t}_{0|1}+\vec{t}_{1|1}$. \cmt Therefore, this class can be denoted either as the class of $(2,2,2,2)$-Bell circuits, $\msf{Bell}_{2222}$, or under our convention as the class of (square, square)-Bell circuits, denoted $\msf{Bell}_{\rm ss}$. \blk

\begin{example}
\label{example1}
Consider the \cmt class of (square, square)-Bell circuits, i.e.,  $\msf{Bell}_{\rm ss}$. \blk  We have $\mc{O}_{\rm aattt}(\text{Bell}_{\rm ss})=  \mc{O}_{\rm atttt}(\msf{Bell}_{\rm ss}) = \mc{O}_{\rm ttttt}(\text{Bell}_{\rm ss}).$
It follows that we have a collapse of the hierarchy of polytopes in Eq.~\eqref{eq:hierarchy}, that is, 
$$\mc{NC}_{\rm aattt}(\text{Bell}_{\rm ss})= \mc{NC}_{\rm atttt}(\text{Bell}_{\rm ss}) = \mc{NC}_{\rm ttttt}(\text{Bell}_{\rm ss}).$$ 
Because $\mc{NC}_{\rm ttttt}(\text{Bell}_{\rm ss})=\mc{L}(\msf{Bell}_{2222})$, the set of noncontextuality inequalities for this class is equivalent to the set of CHSH inequalities.  
\end{example}

\cmt Similarly, one can observe that for the class of (octahedron, octahedron)-Bell circuits, denoted $\msf{Bell}_{\rm oo}$, the noncontextual polytope associated to $\mc O_{\rm aattt}(\msf{Bell}_{\rm oo})$ also coincides with the local polytope for $(3,3,2,2)$-Bell circuits, i.e., $\mc{NC}_{\rm aattt}(\msf{Bell}_{\rm oo})=\mc L(\msf{Bell}_{3322})$. 
\blk Consequently, to obtain nontrivial noncontextuality inequalities that offer new avenues for entanglement certification (i.e., beyond the avenues already offered by Bell tests, \cmt thereby showing that the final inclusion relation in the hierarchy of Eq.~\eqref{eq:hierarchy} can be made strict), \blk it is necessary to consider a Bell circuit with more structure in the local intrinsic geometries of the measurements. 

\par 

In the next three examples, we take the shape describing the intrinsic geometry of $\msf{N}$ to be the geometric dual of the shape describing the intrinsic geometry of $\msf{M}$.
This strategy is motivated by the fact that such a geometric duality relation achieves the maximal quantum violation of the CHSH inequalities, and the fact that a similar duality is conjectured to also be optimal for quantum violations of noncontextuality inequalities in prepare–measure circuits~\cite{zhang2025tech}.  
In particular, we will use certain pairs of regular polyhedra because they may enable a more noise-robust entanglement certification protocol~\cite{Saunders2010} and because their symmetries simplify the analytical results.

To demonstrate their usefulness, the resulting noncontextuality inequalities are compared with all previously known Bell- and steering-based tests for certifying entanglement in the 1-parameter family of 2-qubit noisy isotropic states, 
\begin{equation}
\rho_{\text{Iso}}^v=v\op{\Phi^+}{\Phi^+}+(1-v)\frac{\mbb{1}}{4},\label{eq:isotropic}
\end{equation}
where $\ket{\Phi^+}=(\ket{00}+\ket{11})/\sqrt{2}$ is a maximally entangled state shared between Alice and Bob. 

\begin{figure}
    \centering
    \includegraphics[width=0.95\linewidth]{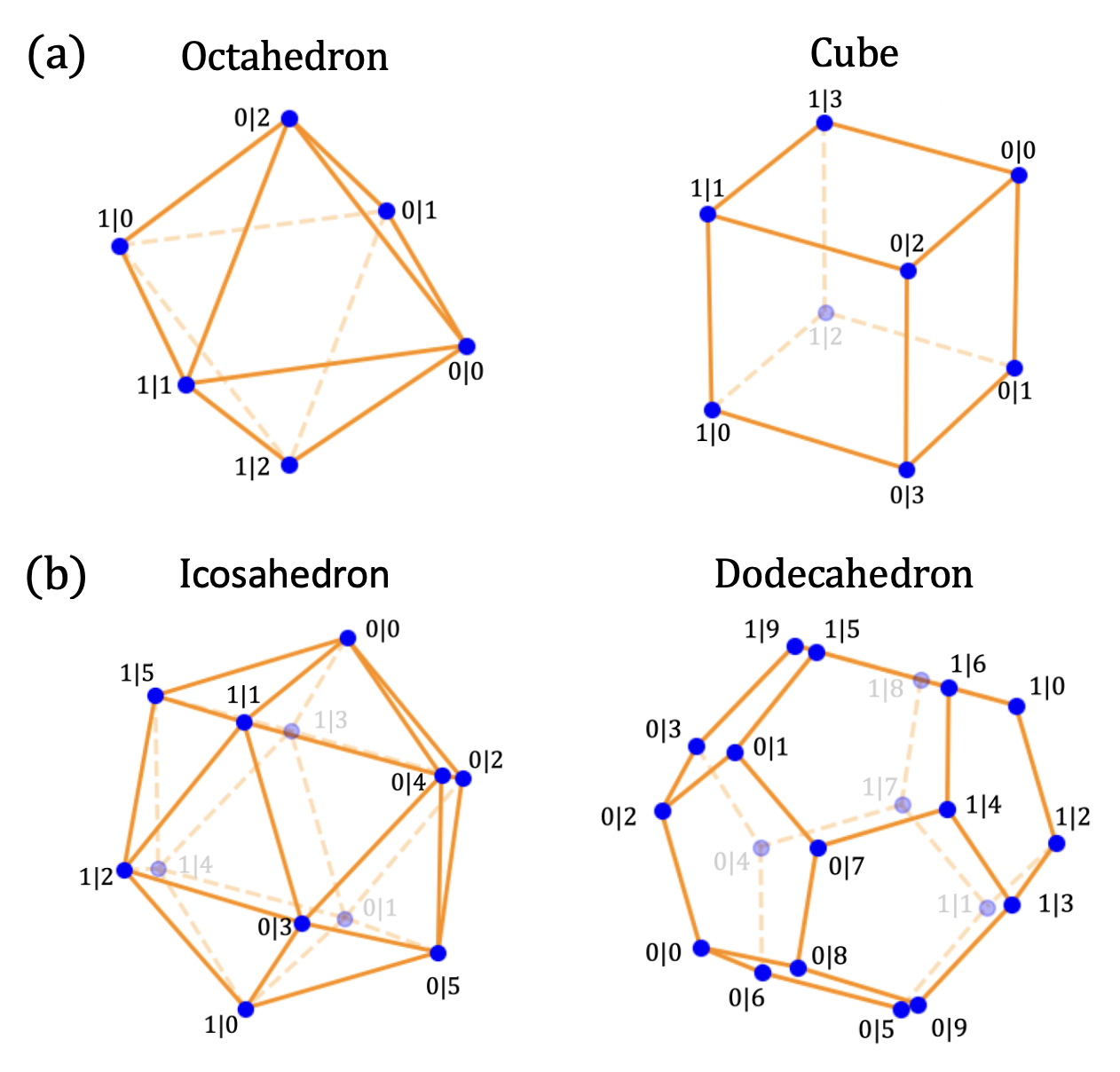}
    \caption{Geometric configuration of $\msf{N}=\{\{N^A_{a|x}\}_{a}\}_{x}$ (Left) and  $\msf{M}=\{\{M^B_{b|y}\}_{b}\}_{y}$ (Right). (a) Example~\ref{example2}:  with cubic and octahedral symmetry; (b) Example~\ref{example3}: with icosahedral and dodecahedral symmetry. } 
    \label{fig:plantoic}
\end{figure}

We now turn to the case where $P$ and $Q$ are an octahedron and a cube, respectively, as depicted in Fig.~\ref{fig:plantoic}(a).
\begin{example}
\label{example2}
Consider the \cmt class of (octahedron, cube)-Bell circuits, denoted $\msf{Bell}_{\rm oc}$. \blk We have $\mc{O}_{\rm aattt}(\msf{Bell}_{\rm oc}) = \mc{O}_{\rm tattt}(\msf{Bell}_{\rm oc}) \subset  \mc{O}_{\rm atttt}(\msf{Bell}_{\rm oc}) = \mc{O}_{\rm ttttt}(\msf{Bell}_{\rm oc}).$  (The equalities are due to the fact that the octahedron, like the square, describes only trivial operational identities, while the strict inclusion is due to the fact that the cube includes some nontrivial operational identities.) It follows that the inclusion relations in the hierarchy of correlations described in Eq.~\eqref{eq:hierarchy} become: 
\begin{align*}
\mc{NC}_{\rm aattt}(\msf{Bell}_{\rm oc}) = \mc{NC}_{\rm tattt}(\msf{Bell}_{\rm oc}) &\subset 
\mc{NC}_{\rm atttt}(\msf{Bell}_{\rm oc}) \\
&= \mc{NC}_{\rm ttttt}(\msf{Bell}_{\rm oc}).
\end{align*}
The characterization of the polytope $\mc{NC}_{\rm aattt}(\msf{Bell}_{\rm oc})$ is discussed in the Appendix~\ref{sec: methodB}. It is found to have 336 nontrivial facet inequalities (excluding the trivial nonnegativity constraints), which can be classified into three distinct classes, where the elements within each class form an orbit under permutations of the values of the outcome variables and of the values of the setting variables. See Table~\ref{tab:ineq68}. 

\begin{table}[h]
\caption{Facet-defining inequalities for the polytope $\mc{NC}_{\rm aattt}(\msf{Bell}_{\rm oc})$
}
\label{tab:ineq68}
\begin{ruledtabular}
\begin{tabular}{lc}
Inequality & Orbit size\\
\hline
$p(01|22) + p(10|03) - p(11|00) + p(11|21) \ge 0$  & $144$\\
$p(01|22) + p(10|03) - p(11|00) + p(11|20) \ge 0$  & $144$\\
\makecell[l]{$p(01|22) +p(10|03) - p(11|00) + p(11|10)$ \\ 
\;\;\;$- p(11|12) - p(11|20) + p(11|21) + p(11|22) \ge 0$} & $48$\\
\end{tabular}
\end{ruledtabular}
\end{table}\par
In particular, the third class of inequalities can be used to certify entangled two-qubit isotropic states for $v > \frac{1}{\sqrt{3}}\simeq 0.577$, thereby outperforming all certification methods for such states that rely on Bell inequalities based on projective measurements, which only cerify entanglement for $ v > \frac{1}{K_G(3)}\in (0.6875,0.6961)$~\cite{Designolle2023}, where  $K_G(3)$ is the Grothendieck's constant~\cite{Acin2006}. (The CHSH inequalities
can only be used to certify entangled two-qubit isotropic states if $v>\frac{1}{\sqrt{2}}\simeq 0.717$.)
\end{example}

Note that the range of $v$ for which entanglement is witnessable in example~\ref{example2} \textit{does not} exceed the range that is witnessable in principle by steering inequalities, namely, $v \ge 1/2$~\cite{zhang2024,Renner2024}. Consequently, to obtain nontrivial noncontextuality inequalities that offer avenues for entanglement certification beyond those offered by steering inequalities,  it is necessary to consider a Bell circuit with even more structure in the local intrinsic geometries of the measurements.  The next example does so. It considers multi-measurements having the intrinsic geometries of an icosahedron on one side and a dodecahedron on the other, as depicted in Fig.~\ref{fig:plantoic}(b).
\begin{example}
\label{example3}
Consider the \cmt class of (icosahedron, dodecahedron)-Bell circuits, denoted $\msf{Bell}_{\rm id}$. \blk We have 
$\mc{O}_{\rm aattt}(\msf{Bell}_{\rm id}) \subset \mc{O}_{\rm atttt}(\msf{Bell}_{\rm id}) , \mc{O}_{\rm tattt}(\msf{Bell}_{\rm id}) \subset \mc{O}_{\rm ttttt}(\msf{Bell}_{\rm id}).$  It follows that the inclusion relations in the hierarchy of correlations described in Eq.~\eqref{eq:hierarchy} become: 
\begin{align*}
\mc{NC}_{\rm aattt}(\msf{Bell}_{\rm id}) \subset \mc{NC}_{\rm atttt}(\msf{Bell}_{\rm id}),~&
 \mc{NC}_{\rm tattt}(\msf{Bell}_{\rm id})\\
\subset &\mc{NC}_{\rm ttttt}(\msf{Bell}_{\rm id}).
\end{align*}

Importantly, there exists the following facet-defining inequality of the polytope $\mc{NC}_{\rm aattt}(\msf{Bell}_{\rm id})$: 
\begin{align}
\mc{I}=&p(00|00) + p(00|01) - \left(\phi+1\right)p(00|02) + \frac{1}{\phi}p(00|03)\notag \\ 
+ & \frac{1}{\phi}p(00|10) + \frac{1}{\phi}p(00|11) - 2p(00|12) + p(00|13) \notag \\
- &p(00|20) - \frac{1}{\phi}p(00|21) + \left(\phi+1\right)p(00|22) - p(00|23) \notag \\
- &\frac{1}{\phi^2}p(10|11) + \frac{1}{\phi}p(10|12)\ge 0,\label{eq: Inequality}
\end{align}
\noindent where $\phi=\frac{\sqrt{5}+1}{2}$. This inequality can be violated by two-qubit isotropic states with $v>\sqrt{\frac{1+\phi^2}{3\phi^4}}\approx 0.4195$, and consequently outperforms all entanglement certification methods that rely on steering inequalities. \blk
\end{example}
\par
\cmt  
In the class of Bell circuits $\msf{Bell}_{\rm id}$, there are $12$ effects on one side and $20$ effects on the other, hence $12\times20=240$ product effects. The operational identities, however, imply that every effect on one side can be expressed as a linear combination of $4$ effects on that side, so that the $240$ product effects can be expressed as linear combinations of just $16$ product effects. It follows that the $240$ distinct conditional probability distributions $\{p(ab|xy)\}_{abxy}$ can be expressed as a linear combination of $16$ of these. For this reason, any noncontextuality inequality can be expressed in terms of at most $16$ conditional probabilities. (The inequality of Eq.~\eqref{eq: Inequality}, for instance, has only $14$ terms, and some setting variables do not appear.)

\blk

\cmt  
\begin{example}
\label{example4}
We now consider a family of dual polyhedra whose vertices become dense on the unit sphere. For each positive integer $\mu$, let $P_\mu$ denote the icosahedral geodesic polyhedron $\{3,5^+\}_{\mu,0}$ and let $Q_\mu$ denote its dual Goldberg polyhedron $\{5^+,3\}_{\mu,0}$, obtained via regular $\mu$-fold subdivisions of the icosahedron and dodecahedron~\cite{Wenninger1979}. (The case $\mu = 1$ recovers the icosahedron–dodecahedron pair used in Example~\ref{example3}.) The polyhedra $P_\mu$ and $Q_\mu$ have $10\mu^2+2$ and $20\mu^2$ vertices, respectively. For each $\mu$, this defines the class of $(P_\mu,Q_\mu)$-Bell circuits, denoted $\msf{Bell}_\mu$.

As summarized in Table~\ref{table:geodesic}, this measurement strategy is highly effective for certifying entanglement of the two-qubit isotropic states of Eq~\eqref{eq:isotropic}: by using standard qubit projective measurements with $5\mu^2 + 1$ settings for Alice and $10\mu^2$ settings for Bob with the intrinsic geometry $(P_\mu$, $Q_\mu)$ described above,  the visibility $v$ for which entanglement certification is still possible decreases rapidly with $\mu$ and approaches the value of the visibility at the separable–entangled boundary, namely, $v = \frac{1}{3}$. 

\cmt 
\begin{table}[h]
\cmt 
\centering
\caption{\label{table:geodesic} 
Critical visibility $v$ for entanglement certification of the two-qubit noisy isotropic state $\rho_{\text{Iso}}^v$ using projective multi-measurements with the intrinsic geometries of the geodesic polyhedron $P_\mu$ for Alice and the dual Goldberg polyhedron $Q_\mu$ for Bob.}
\begin{ruledtabular}
\begin{tabular}{cccc}
$\mu$ & $\#$ of Vertices in $P_{\mu}$ & $\#$of Vertices in $Q_{\mu}$  & visibility $v$ \\
\hline
1 & 12   & 20  & $\approx 0.4195$ \\
2 & 42  & 80  & $\approx 0.3568$ \\
3 & 92  & 180  & $\approx 0.3434$ \\
4 & 162  & 320 & $\approx 0.3394$ \\
5 & 252 & 500 & $\approx 0.3372$ \\
6 & 362 & 720 & $\approx 0.3360$ \\
7 & 492 & 980 & $\approx 0.3353$ \\
\end{tabular}
\end{ruledtabular}
\end{table}
\end{example}
\subsection{Certifyiny two-qubit noisy isotropic states with randomized measurements}
\label{sec:random}
In the previous subsection, we restricted our attention to Bell circuits with multi-measurements whose operational identities define intrinsic geometries with nontrivial symmetries. Doing so allowed us to derive closed-form analytical expressions for the noncontextuality inequalities. However, having such symmetry is not necessary for entanglement certification. We now demonstrate that our noncontextuality-based approach remains effective and efficient when Alice and Bob use a strategy in which the measurements implemented at each party are chosen uniformly at random.

In this subsection, we continue to focus on two-qubit isotropic states, but randomly sample a set of projective measurements at each party. For each choice of
$m \in \{10,20,50,100,200\}$, we draw $m$ projective measurements for Alice and 
$m$ for Bob, independently and uniformly at random, and then determine the smallest visibility $v$ of a two-qubit isotropic state that can be certified as
entangled by our noncontextuality-based test.  

The minimal visibility $v$ of the two-qubit isotropic state for which entanglement can be witnessed for a given $m$ depends on the specific sets of measurements sampled. For instance, samplings that distribute more uniformly over the sphere are expected to yield smaller values of the minimal visibility $v$. It is useful, therefore, to report an average value of the minimum visibility $v$ in an ensemble of different samplings. We consider $100$ different samplings for each $m$. The average value of the minimal visibility $v$ for each m is reported in Table~\ref{tab:random-bases-eta}. As $m$ increases, the average critical visibility decreases steadily and approaches the value at the separable-entangled boundary $v=\frac{1}{3}$.

\begin{table}[t]
\cmt  
\centering
\caption{ Average critical visibility $v$ of the two-qubit isotropic state $\rho_{\text{Iso}}^v$ that can be certified as entangled using $m$ randomly sampled projective measurements. The values are obtained by averaging the critical visibility over $100$ independent trials for each $m$.}
\begin{ruledtabular}
\begin{tabular}{cc}
        $\#$ of bases $m$  & visibility $v$  \\
\hline 
        10  & $0.540 \pm 0.027$ \\
        20  & $0.445 \pm 0.015$ \\
        50  & $0.384 \pm 0.006$ \\
        100 & $0.360 \pm 0.003$ \\
        200 & $0.348 \pm 0.001$ \\
\end{tabular}
\end{ruledtabular}
\label{tab:random-bases-eta}
\blk
\end{table}

This randomized measurement strategy offers a useful perspective on the practicality of our entanglement-certification protocol. While employing many measurement settings may initially seem demanding, randomized measurements can, in fact, simplify the implementation. Randomized measurements underpin protocols such as classical shadows~\cite{Huang2020} and many other quantum information tasks~\cite{Elben2023}. A key distinction, however, is that these standard approaches typically rely on precise characterization of the measurement devices, whereas our method requires \textit{no prior characterization}. The fact that our framework accommodates randomized strategies could therefore point to a realistic path toward scalable entanglement certification based on generalized noncontextuality.

Beyond their generality, randomized measurement schemes offer a second key advantage: they remove the need to specify \emph{a priori} a particular set of operational identities to be targeted in the experiment. Instead, entanglement certification can proceed as follows: (i) implement a set of measurements chosen at random, (ii) infer the operational identities that are satisfied by the set using experimental data, and (iii) use these identities to test for entanglement by deriving explicit noncontextuality inequalities tailored to these identities. We return to this strategy in Sec.~\ref{sec: newIVA}, where we discuss practical routes to implementing our noncontextuality-based entanglement certification in the presence of imperfect measurements.

\subsection{Certifying arbitrary two-qubit entangled states}
\label{sec:fraction}
We have shown that our noncontextuality-based entanglement certification scheme outperforms Bell-based and steering-based schemes for two-qubit noisy isotropic states. We now demonstrate that it also outperforms Bell-based schemes for
\textit{generic} entangled bipartite states.
\par 

To do so, we perform numerical simulations on randomly sampled two-qubit entangled states. Following Ref.~\cite{Kofman2008}, we draw mixed states from the Hilbert--Schmidt measure by generating $4\times 4$ complex matrices $A$ with independent complex Gaussian entries (zero mean, equal variance) and setting $\rho = \frac{A^{\dagger} A }{\tr(A^{\dagger} A)}$, \cmt and we post-select those that are entangled. \cmt 
We use fixed projective measurements with Bloch vectors that form dual polyhedra as described in Example~\ref{example1}, \ref{example2}, \ref{example3}, and \ref{example4} with $\mu=2$ (the absolute orientation of the pair of polyhedra is irrelevant since the tested states are already uniformly sampled). For each randomly sampled entangled state, we then perform the noncontextuality-based tests introduced above.  The results are presented in Table~\ref{table:random-entangled-states}. One sees that the noncontextuality test significantly 
outperforms the Bell test. 
\begin{table}[h]
\cmt  
\centering
\caption{\label{table:random-entangled-states}
Percentage of $10^6$ randomly sampled two-qubit entangled states that are certified as entangled by the measurement strategies in Examples~\ref{example1},~\ref{example2},~\ref{example3}, and ~\ref{example4} ($\mu=2$). Here, we compare the certification methods based on the standard Bell test and those based on our noncontextuality test. We note that computing the percentage for the Bell test becomes numerically infeasible for Example.~\ref{example4}, $\mu=2$, since the corresponding local polytope has $2^{61}$ vertices; by contrast, the noncontextual polytope has only $8160$ vertices, so the noncontextuality test remains feasible. 
}
\begin{ruledtabular}
\begin{tabular}{lcc}
\multirow{2}{*}{Bell circuits}  & \multicolumn{2}{c}{(\%) certified by} \\
             & Bell test & Noncontextuality test  \\
\hline
Ex.~\ref{example1}              & $\approx 0.005\%$   & $\approx 0.005\%$ \\
Ex.~\ref{example2}              & $\approx 0.08\%$                & $\approx 0.82\%$   \\
Ex.~\ref{example3}           & $\approx 0.24\%$    & $\approx 24.40\%$ \\
Ex.~\ref{example4}, $\mu=2$     &Infeasible         & $\approx 76.35\%$ \\
\end{tabular}
\end{ruledtabular}
\blk
\end{table}
\par
We note that the measurement bases for the Bell tests in Table~\ref{table:random-entangled-states} above are not optimized to avoid the additional numerical cost they incur.

However, we can also benchmark our approach against the CHSH test described in~\cite{Kofman2008}, which, for each state, maximizes the CHSH value over all local measurement bases using the Horodecki criterion~\cite{Horodecki1995} (equivalently, test in our Ex.~\ref{example1} with measurement bases optimized). The question studied in~\cite{Kofman2008} is: among randomly sampled two-qubit entangled states, what fraction are such that their entanglement can be certified by a CHSH inequality violation? When states are sampled according to the same Hilbert–Schmidt measure assumed above, Ref.~\cite{Kofman2008} found that only about $1.09\%$ of entangled states can be certified by a CHSH test alone, even if one optimizes the choice of measurements for each sampled state.\footnote{\cmt It was shown in~\cite{Kofman2008} that about $75.8\%$ of all bipartite states are entangled (via the PPT criterion~\cite{Peres1996}), while only about $0.82\%$ violate some CHSH inequality (via the Horodecki criterion~\cite{Horodecki1995}).} This is an improvement over the best fraction achievable by a Bell test without optimizing measurement bases, namely, $0.24\%$, but it still falls far short of what can be achieved by noncontextuality tests in examples~\ref{example3}, and ~\ref{example4} ($\mu=2$). 

\cmt Moreover, Bell-test-based certification with many settings quickly becomes computationally prohibitive:
the local polytope has $o^{\Delta_N}o^{\Delta_M}$ vertices,
i.e., a number of vertices that scales \textit{exponentially} with the cardinality of the setting variables. This is in contrast to the situation with the noncontextual polytope, which has
$O(o^2\Delta_N\Delta_M)$ vertices, i.e., a number of vertices scales only polynomially with the cardinality of the setting variables. (see also Appendix~\ref{sec: methodB1}).
\cmt

\section{Dealing with inexact operational identities}
\label{sec IV}

\cmt Real devices never implement the targeted measurements precisely.  Moreover, one only ever obtains finite-run statistics and consequently one’s estimate of the effects deviates from the true effects because of statistical fluctuations.  It follows that the operational identities that hold among the effects actually realized in the laboratory will never coincide exactly with those that hold for the ideal target effects.
It is therefore crucial to develop tools that can contend with this problem in order to make our noncontextuality-based entanglement certification protocol practically applicable.  In this section, we describe two complementary methods for dealing with this problem.  (i) infer from the observed statistics the operational identities that {\em happen to hold} for the multi-measurements that were actually implemented.  Because the technique fits the aphorism ``cut your coat to suit your cloth’’, we refer to it as the ``cut-to-your-cloth’’ technique. (ii) Using a generalization of the secondary procedures technique of Ref~\cite{Mazurek2016}, i.e., from the measurement procedures in the convex hull of those realized, find a set, termed the {\em secondary} measurement procedures, that satisfy the operational identities {\em exactly} and test the statistics that obtain in the circuit composed of the secondary procedures. \blk

\cmt
Both methods begin with the standard self-consistent tomography~\cite{Nielsen2021,Mazurek2021} to mitigate statistical fluctuations in the raw relative frequencies. Therefore, our approach is well-suited to platforms that already use self-consistent tomography~\cite{Merkel2013,Greenbaum2015}.

\subsection{Self-consistent tomography} 
This step can be viewed as a regularization step that transforms the raw relative frequencies $f_{ab|xy}$ into a probability distribution that is realizable in a $D$-dimensional quantum model. Specifically, one varies over choices of bipartite state $\rho^{AB}$ and multi-measurements $\msf N^{\text{pri}}=\{\{N^{A,\text{pri}}_{a|x}\}_a\}_x$ and $\msf M^{\text{pri}}=\{\{M^{B,\text{pri}}_{b|y}\}_b\}_y$ to minimize the distance between $f_{ab|xy}$ and the quantum-realisable probability distribution $p^{\text{pri}}(ab|xy)$. The distance can be quantified in various ways; for instance, using a weighted $\chi^2$ and solving
\begin{align}
&\chi^2:=\min_{(\msf N^{\text{pri}},\msf M^{\text{pri}},\rho^{AB})}\sum_{abxy}\left|\frac{f_{ab|xy}-p^{\text{pri}}(ab|xy)}{\Delta f_{ab|xy}}\right|^2,  \\
& \text{such that~~} p^{\text{pri}}(ab|xy)=\tr[N^{A,\text{pri}}_{a|x}\otimes M^{B,\text{pri}}_{b|y}\rho^{AB}]. \notag 
\end{align}
We refer to the best-fit $p^{\text{pri}}(ab|xy)$ as the ``primary statistics", the $\msf N^{\text{pri}}$ and $\msf M^{\text{pri}}$ as the \textit{primary multi-measurements} (those actually realized in the experiments) and $\msf P=\rho^{AB}$ as the primary preparation. The tuple  $(\msf N^{\text{pri}}, \msf M^{\text{pri}}, \msf P)$ constitutes a non-unique solution of the fitting, and in general $\msf N^{\text{pri}}$ and $\msf M^{\text{pri}}$ will differ from the $\msf N$ and $\msf M$ that are targeted. However, with the extra sufficient condition of \textit{tomographic completeness}, the regularization step above is just an instance of ``self-consistent tomography"\cite{Nielsen2021,Mazurek2021}, where all possible fits for the primary multi-measurements are related by invertible linear maps (tomographic gauge), a property that will play an important role in later techniques.

\subsection{The cut-to-your-cloth technique}

\label{sec: newIVA}
We now outline the main steps of our first scheme to address the fact that the experimentally realized measurement processes differ from the targeted ones.

\noindent\textit{(a) Extract the operational identities from the primary statistics:}\par 
In this first approach, we begin by determining the operational identities satisfied by the primary multi-measurements. Concretely, we define:
\begin{subequations}
\begin{align}
\mc{O}_{\rm all}(\msf{N}^{\text{pri}})
 &:=\{ \vec \alpha\;
 |\sum_{a,x} \alpha_{a,x}N^{A,\text{pri}}_{a|x}=\mbb{0}\}, \\ 
 \mc{O}_{\rm all}(\msf{M}^{\text{pri}})&:=\{
 \vec \beta\;
|\sum_{b,y}\beta_{b,y} M^{B,\text{pri}}_{b|y}=\mbb{0}\},
\end{align}
\label{op:meas}
\end{subequations} 
capturing the sets of linear relations that hold among the primary effects for Alice and Bob, respectively. Importantly, we note that these operational identities can be determined entirely from the primary statistics $p^{\text{pri}}(ab|xy)$. This follows because different tuples $(\msf N^{\text{pri}}, \msf M^{\text{pri}}, \rho^{AB})$ that yield the same $p^{\text{pri}}(ab|xy)$ are related by an invertible linear map on system $A$ and one on system $B$ (i.e., a gauge freedom~\cite{Nielsen2021}) under the stated sufficient condition of \textit{tomographic completeness} (see also Appendix~\ref{sec: methodC1}), and these invertible linear maps left the operational identities unchanged. Therefore, one can obtain the operational identities $\mc{O}_{\rm all}(\msf{N}^{\text{pri}})$ and $ \mc{O}_{\rm all}(\msf{M}^{\text{pri}})$ by determining the linear constraints directly from $p^{\text{pri}}({ab|xy})$.
\begin{subequations}
\begin{align}
\mc{O}_{\rm all}(\msf{N}^{\text{pri}})
 &:=\{ \vec \alpha\;
 |\sum_{a,x} \alpha_{a,x}p^{\text{pri}}({ab|xy})=0,\forall b,y\}, \\ 
 \mc{O}_{\rm all}(\msf{M}^{\text{pri}})&:=\{
 \vec \beta\;
|\sum_{b,y}\beta_{b,y} p^{\text{pri}}({ab|xy})=0,\forall a,x\}.
\end{align}
\label{op:meas-altr}
\end{subequations}
In other words, $\mc{O}_{\rm all}(\msf{N}^{\text{pri}})$ and $ \mc{O}_{\rm all}(\msf{M}^{\text{pri}})$ defined above are gauge-independent quantities. 

\noindent\textit{(b) Derive noncontextuality inequalities:}\par
$\mc{O}_{\rm all}(\msf{N}^{\text{pri}})$ and $ \mc{O}_{\rm all}(\msf{M}^{\text{pri}})$ are in general more complex than the operational identities that hold for multi-measurements that have the instrinsic geometry of a highly symmetric polytope, as in the examples of Sec.~\ref{sec:III}. Nevertheless, they can still be used to efficiently certify entanglement by simply deriving the explicit noncontextuality inequalities relative to $\mc{O}_{\rm all}(\msf{N}^{\text{pri}})$ and $ \mc{O}_{\rm all}(\msf{M}^{\text{pri}})$. Indeed, the randomized multi-measurement considered in Sec.~\ref{sec:random} lacked symmetry but nonetheless worked efficiently for certifying entanglement. 

\noindent\textit{(c) Test the noncontextuality inequalities on the primary statistics:}\par
Finally, we apply the noncontextuality inequalities in step~(b) to the primary statistics $p^{\text{pri}}(ab|xy)$ obtained in step (a). If the primary statistics violate any of these inequalities, then the underlying state is entangled.

\subsection{The secondary procedures technique} 
\label{sec: newIVB}

The second analysis implements a computation that simulates a \textit{local} classical post-processing of the primary statistics $p^{\text{pri}}(ab|xy)$, the result of which is termed the \emph{secondary} statistics, denoted $p^{\text{sec}}(ab|xy)$. The post-processing is chosen to ensure that the secondary statistics satisfy \textit{exactly} a specific set of operational identities. The noncontextuality inequalities derived from the specific set of operational identities can then be applied directly to $p^{\text{sec}}(ab|xy)$. 

Since the secondary statistics are obtained from the primary ones through a simulation of local classical processing, which can be conceptualized as injecting noise, something that can only make the operational statistics easier to model noncontextually. Hence, any violation of a noncontextuality inequality by the secondary statistics is sufficient to imply the NCOM-nonrealizability of the primary statistics and hence it is also sufficient to certify entanglement in the bipartite state.\footnote{\cmt 
Suppose $\mc B^{\rm pri}$ denotes the realized Bell circuit and $\mc B^{\rm sec}$ denotes the image of this under the local classical post-processing, if the primary statistics are NCOM-realizable relative to $\mc{O}_{\rm aattt}(\mc B^{\rm pri})$, then the secondary statistics must be NCOM-realizable relative to $\mc{O}_{\rm aattt}(\mc B^{\rm sec})$, since local classical processing can be absorbed into a redefinition of the response functions in Eq.~\eqref{eq:cl-bipartite}. \blk} 
Note that such certification depends only on the observed data and the choice of target operational identities. This approach is a generalization of the secondary procedures technique introduced in Ref.~\cite{Mazurek2016} for testing NCOM-realizability in the prepare-measure circuit. Here we summarize the main steps and leave the details to Appendix~\ref{sec: methodC}.
\\
\noindent\textit{(a') Derive noncontextuality inequalities:}
\par 
For the sets of operational identities $\mathcal{O}_{\rm all}(\msf N)$ and $\mathcal{O}_{\rm all}(\msf M)$ satisfied by the multi-measurement  $\msf N$ and $\msf M$ in the targeted Bell circuit $\mc B$, determine the noncontextuality inequalities relative to $\mc O_{\rm aattt}(\mc B)$. Examples of these inequalities are given in section~\ref{secIIIA}, and the tools for their derivation are detailed in Appendix~\ref{sec: methodB}.
\\
\noindent\textit{(b') Construct secondary statistics:}\par
Given the primary statistics $p^{\text{pri}}(ab|xy)$ from self-consistent tomography, and a set of targeted operational identities  $\mathcal{O}_{\rm all}(\msf N)$ and $\mathcal{O}_{\rm all}(\msf M)$ from step~(a'), solve for the {\em secondary} multi-measurements, denoted $\msf N^{\text{sec}}=\{\{N^{A,\text{sec}}_{a|x}\}_a\}_x$ and $\msf M^{\text{sec}}=\{\{M^{B,\text{sec}}_{b|y}\}_b\}_y$ as follows:
\begin{itemize}
    \item take each effect to be an element of the convex hull of the primary effects, i.e., 
\begin{subequations}
\begin{align}
&N^{A,\text{sec}}_{a|x} = \sum_{a',x'} u_{a',x'}^{a,x} N^{A,\text{pri}}_{a'|x'},\\
&M^{B,\text{sec}}_{b|y}=\sum_{b',y'}v_{b',y'}^{b,y}M^{B,\text{pri}}_{b'|y'}, \\
\text{such that} \quad &\sum_aN^{A,\text{sec}}_{a|x}=\mbb{1}^A~~\sum_bM^{B,\text{sec}}_{b|y}=\mbb{1}^B~\forall x, y, \notag 
\end{align}
\label{eq:sec-coeff}
\end{subequations}
with $\sum_{a',x'}u_{a',x'}^{a,x}=\sum_{b',y'}v_{b',y'}^{b,y}=1~\forall a,x,b,y$ and $ u_{a',x'}^{a,x},v_{b',y'}^{b,y}\ge 0$;

\item require the effects to satisfy the set of operational identities that are respected by the targeted multi-measurements, i.e., impose the constraints that $\mathcal{O}_{\rm all}(\msf N^{\text{sec}})=\mathcal{O}_{\rm all}(\msf N)$ and $\mathcal{O}_{\rm all}(\msf M^{\text{sec}})=\mathcal{O}_{\rm all}(\msf M)$, or equivalently
\begin{subequations}
\begin{align}
&\sum_{ax}\alpha_{ax}N^{A,\text{sec}}_{a|x}=\mbb{0}^A~~~~\forall \vec{\alpha}\in \mathcal{O}_{\rm all}(\msf N), \\
&\sum_{by}\beta_{by}M^{A,\text{sec}}_{b|y}=\mbb{0}^A~~~~\forall \vec{\beta}\in \mathcal{O}_{\rm all}(\msf M);
\end{align}
\end{subequations}

\item in a variation over the secondary effects, find those that maximize a measure of closeness with the primary effects, namely, those that maximize the objective functions\begin{subequations}
\begin{align}
&C_N = \frac{1}{o\Delta_{\msf N}} \sum_{a,x} u_{a,x}^{a,x},\\
&C_M= \frac{1}{o\Delta_{\msf M}} \sum_{b,y}v_{b,y}^{b,y},
\end{align}
\label{eq:maxi-secondary}
\end{subequations}
where $u_{a,x}^{a,x}$ and $v_{b,y}^{b,y}$ are defined in Eq.~\eqref{eq:sec-coeff}.
\end{itemize}

At first glance, it may seem to be the case that one needs to specify a choice of tomographic gauge for the primary measurements,
to obtain the coefficients $\{u^{a,x}_{a',x'}\}$ and $\{v^{b,y}_{b',y'}\}$. However, as discussed in Appendix~\ref{sec: methodC1}, this is not the case.  Indeed, assuming \textit{tomographic completeness}\cmt, all possible choices of gauge yield primary measurements with the same operational identities since they are related by an invertible linear map (i.e., operational identities are gauge-independent). It follows that the coefficients $\{u^{a,x}_{a',x'}\}$ can be inferred purely from the {\em primary statistics}, using the following linear program:
\begin{align}
&\max\quad\quad  C_N= \frac{1}{o\Delta_{\msf N}} \sum_{ax}u^{ax}_{ax} \\
&\text{such that} \notag \\
&\sum_{axa'x'}p^{\text{pri}}(a'b'|x'y')u_{a',x'}^{a,x}\alpha_{a,x}=0~~ \forall b',y',  \forall \vec{\alpha}\in \mathcal{O}_{\rm all}(\msf N) \notag \\
&\sum_{aa'x'}p^{\text{pri}}(a'b'|x'y')u_{a',x'}^{a,x}=p^{\text{pri}}(b'|y')~~ \forall b',y'.\notag \\
&\sum_{a'x'}u_{a',x'}^{a,x}=1,~~u_{a',x'}^{a,x}\ge0~~~~\forall a,x,a',x', \notag
\label{eq:lp-secondar}
\end{align}
where $p^{\text{pri}}(b'|y'):=\sum_{a'}p^{\text{pri}}(a'b'|x'y')$ is independent of $x'$ since the primary statistics are nonsignaling. 
One can obtain the coefficients $v_{b',y'}^{b,y}$ in an analogous fashion.

\begin{figure*}[t]
    \centering
    \includegraphics[width=0.72\linewidth]{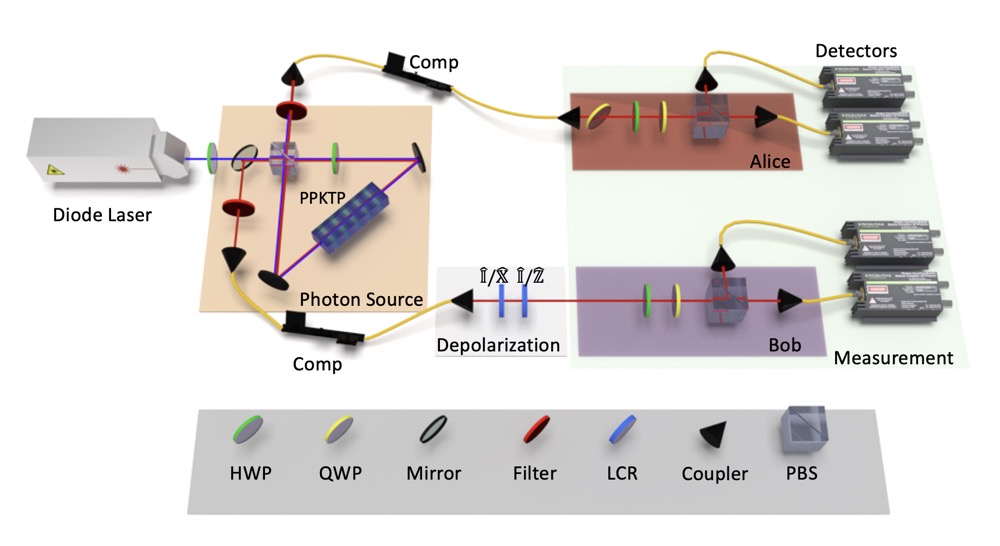}
    \caption{Schematic of {the} experiment. Polarization-entangled photon pairs are generated via parametric down-conversion in a Sagnac interferometer~\cite{Kim2006}. {Polarization controllers (Comp) compensate polarization rotations in the single-mode fibres.  A pair of computer-controlled liquid crystal phase retarders (LCR) aims to implement arbitrary strength depolarizing channels on one of the photons.} Each photon is then directed to {separate polarization analyzers--Alice and Bob--where polarization measurements are performed using half-wave plates (HWPs) and quarter-wave plates (QWPs)}. Coincidence counts are recorded. PPKTP: periodically-poled potassium titanyl phosphate; PBS: polarizing beamsplitter; $\hat{\mbb{1}}$, $\hat{\mbb{X}}$, $\hat{\mbb{Z}}$: {Pauli} gates.} 
    \label{fig:diagram}
\end{figure*} 

The \textit{secondary statistics} $p^{\text{sec}}(ab|xy)$ are then defined as follows
\begin{align}
p^{\text{sec}}(ab|xy)&=\tr[N^{A,\text{sec}}_{a|x}\otimes M^{B,\text{sec}}_{b|y}\rho^{AB}] \notag \\
&=\sum_{a'b'x'y'}v_{b',y'}^{b,y}u_{a',x'}^{a,x}p^{\text{pri}}(a'b'|x'y').
\end{align}

One can check that the secondary statistics are normalized and nonnegative by noting that the secondary measurements $\{\{N^{A,\text{sec}}_{a|x}\}_a\}_x$ and $\{\{M^{B,\text{sec}}_{b|y}\}_b\}_y$ are enforced to be valid POVMs. \\
\noindent\textit{(c) Test the noncontextuality inequalities on  the secondary statistics.}\par
Finally, we apply the noncontextuality inequalities from step (a') to the secondary statistics $p^{\text{sec}}(ab|xy)$ in step (b'). A violation of any such inequality by the secondary statistics witnesses entanglement in the underlying bipartite state. 

\subsection{Comparing the two techniques}
\label{sec:IVD}
\cmt

A disadvantage of the secondary procedures technique, relative to the cut-to-your-cloth technique, is that the former may fail to witness NCQM-nonrealizability in cases where the latter succeeds. This is because, as noted above, post-processing can only make statistics easier to model noncontextually. Consequently, it may happen that the secondary statistics are NCQM-realizable even though the primary statistics are not.

On the other hand, an advantage of the secondary procedures technique over the cut-to-your-cloth technique is that it is much more straightforward to make comparisons across different experimental runs, for instance, runs where the bipartite state is varying or where the primary measurements are varying (due to drift, for instance). In the secondary procedures technique, one enforces a particular set of operational identities to be satisfied by the secondary multi-measurements, independent of the primary multi-measurements or the bipartite state, so the noncontextuality inequality one is testing is also independent of these. Moreover, in this case, one can resample the primary statistics to obtain an estimate of the statistical error in the degree of violation of the fixed noncontextuality inequality. Consequently, the magnitude of the violation of a given inequality can be used to compare different bipartite entangled states, as we demonstrated in Sec.~\ref{sec:exp-sec} for the
family of two-qubit isotropic states.

By contrast, in the cut-to-your-cloth technique, variations in the primary multi-measurements across experimental runs imply variation in the set of operational identities satisfied by these, and hence in the noncontextuality inequality being tested. Variations in the primary multi-measurements might arise from drift (for instance, drift between runs probing different bipartite states) or from fluctuations in the raw frequencies. This variation precludes an even-handed comparison of violation magnitudes across different bipartite states and makes it infeasible to assign statistical error bars to the violation, even for repeated trials on the same state. (We return to this issue in Sec.~\ref{sec:exp-pri}, where we introduce a noise-robustness-based quantifier for the degree of violation which allows a comparison between different experimental runs in the cut-to-your-cloth technique.)
\blk 
\section{Experimental Results}
\begin{figure*}[t]
    \centering    \includegraphics[width=0.6\linewidth]{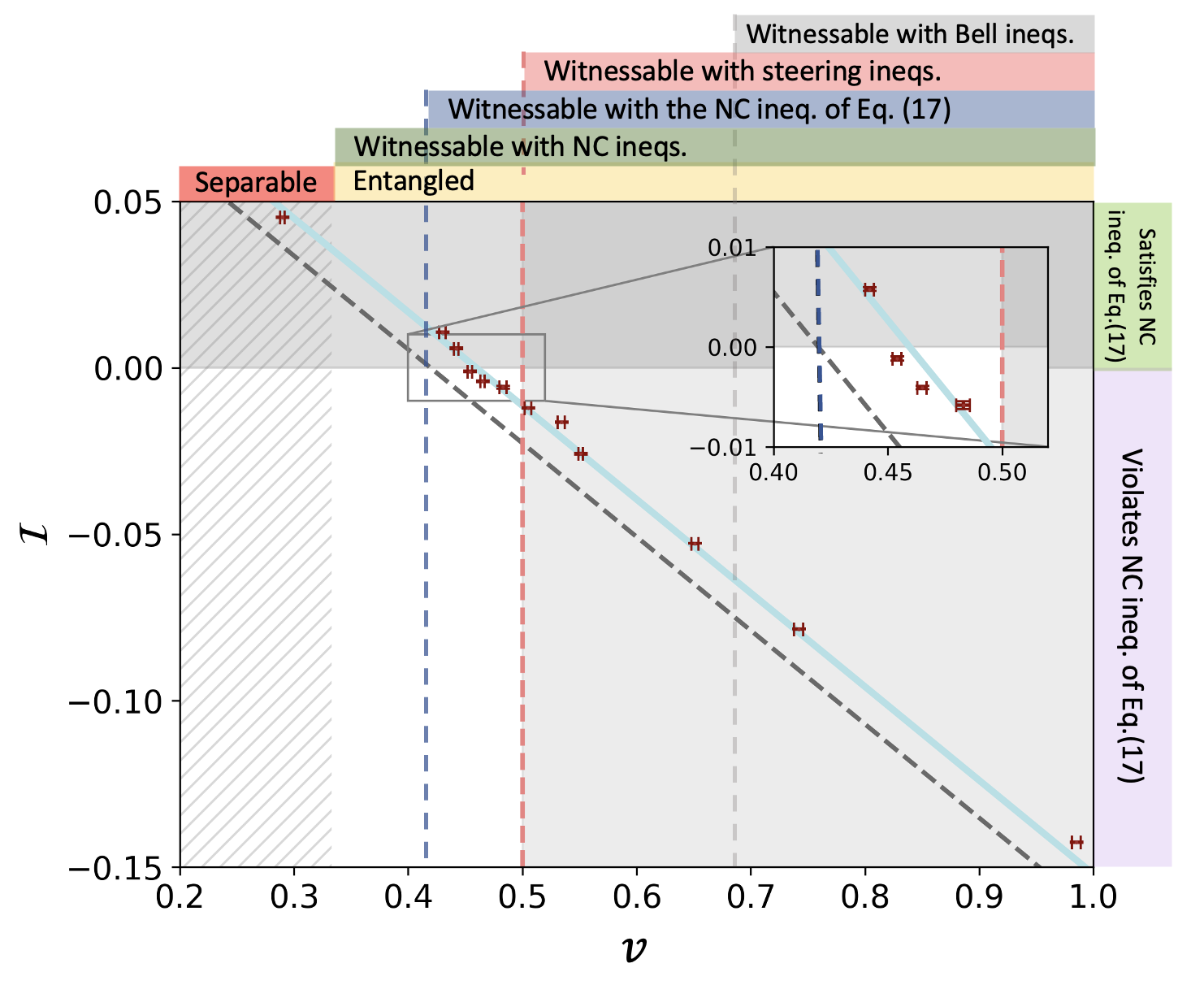}
    \caption{Experimental violation of the noncontextuality (NC) inequality  of Eq.~\eqref{eq: Inequality}, i.e., $\mc{I}\ge 0$, as a function of the isotropic-state visibility $v$.
    Our experimental data are shown as points with vertical and horizontal error bars, while the ideal case is shown as the diagonal dashed gray line, and the ideal case under  $2\%$ white noise in the measurements is shown as a solid blue line. 
    On top of the diagram, we depict the separable-entangled divide, as well as the parameter regimes certifiable by various techniques: general noncontextuality inequalities, the specific noncontextuality inequality used in this work, and steering inequalities. 
    The region in white contains states that simultaneously i) are entangled, ii) are not certifiable by steering (or Bell) inequalities, and iii) are certifiable by noncontextuality inequalities. Three of our experimental data points lie in this white region, demonstrating the advantage of our approach over these prior approaches. 
    Error bars for both $\mc{I}$ and $v$ are obtained from 100 trials of Monte Carlo simulation and are depicted. (See Table~\ref{tab:tomoresults} for detailed data.) \blk}
    \label{fig:experiment}
\end{figure*}
Our experiment tests the noncontextuality inequality of Eq.~\eqref{eq: Inequality} for various choices of the value of the visibility $v$ in the 1-parameter family of 2-qubit states. It uses the photonic setup depicted in Fig.~\ref{fig:diagram}. As discussed in depth in Appendix~\ref{sec: methodD},  we generate polarization-entangled photons via type-II spontaneous parametric down-conversion in a Sagnac interferometer. These photons are subsequently directed through a controlled depolarization channel, generating probabilistic mixtures of the four Bell states as outlined in the Appendix~\ref{sec: methodD1}. This allows for the preparation of different mixed states closely approximating (with greater than  $99\%$ state fidelity) 2-qubit isotropic states with different values of the visibility $v$ (see Eq.~\eqref{eq:isotropic}). By varying $v$, we can prepare separable, entangled, steerable, and nonlocal states~\cite{Werner1989, Wiseman2007, zhang2024,Renner2024}.

In the experiment, we prepared 12 different quantum states $\rho_{\text{exp}}$, each with a measured fidelity to the 
nearest {isotropic} state $\rho_{\rm iso}^v$ exceeding $99\%$ (see table~\ref{tab:tomoresults} in Appendix~\ref{sec: methodD}). The value of the visibility $v$ will be used to represent the state we prepared. The NC inequality of Eq.~\eqref{eq: Inequality} requires Alice's and Bob's multi-measurements to have the intrinsic geometries of an icosahedron and a dodecahedron, respectively.  We target a regular icosahedron and a regular dodecahedron inscribed in the Bloch sphere for Alice and Bob's multi-measurements, respectively, as depicted in Fig.~\ref{fig:plantoic} (see Eq.~\eqref{12-20meas} in the Appendix~\ref{sec: methodB} for a precise definition). This corresponds to implementing 6 projective binary-outcome measurements on Alice's photon and 10 on Bob's. These measurements clearly satisfy the relevant operational identities, and are predicted to violate the NC inequality of Eq.~\eqref{eq: Inequality}. They are implemented using a combination of half-wave plates (HWP), quarter-wave plates (QWP), and polarization beam splitters (PBS).

For each of the 12 states, we implemented 60 distinct pairs of measurement settings (6 on Alice and 10 on Bob). In 2 hours of total run-time for the experiment, each pair of settings accumulated on the order of $2\times10^6$ twofold coincidence counts (consisting of four coincidence pairs), which are denoted as the raw count statistics $N_{ab|xy}$.

Imperfections in experimental implementation, such as polarization optics quality, fiber drift, and laser power drift, are unavoidable. Additionally, the raw count statistics ${N_{ab|xy}}$ represent finite samples from the true probability distribution, and directly normalizing them may yield a distribution that is not quantumly realizable (indeed, possibly not even realizable within a generalized probabilistic theory (GPT)~\cite{hardy2001quantumtheoryreasonableaxioms,barrett2006}). For example, statistical fluctuations can lead to violations of the no-signaling conditions. To address this, the raw counts $N_{ab|xy}$ are converted to relative frequency $f_{ab|xy}$ and these are fitted to a probability distribution ${p^{\text{pri}}(ab|xy)}$ \cmt using self-consistent tomography as described in sec.~\ref{sec IV} using a total weighted least squares method~\cite{Krystek2007}.  

From ${p^{\text{pri}}(ab|xy)}$, entanglement certification can be achieved using either of the two different techniques discussed in Sec.~\ref{sec IV}. We consider each in turn. 

\subsection{Certifying entanglement using the secondary procedures technique}
\label{sec:exp-sec}
Given the primary statistics obtained from the self-consistent tomography, we construct secondary statistics ${p^{\text{sec}}(ab|xy)}$ via the technique discussed in Sec~\ref{sec: newIVB}. For the quantities $C_N= \frac{1}{o\Delta_{\msf N}} \sum_{a,x}u_{a,x}^{a,x}$ and $C_M= \frac{1}{o\Delta_{\msf M}} \sum_{b,y}v_{b,y}^{b,y}$, \cmt defined in~Eq.~\eqref{eq:maxi-secondary}\blk, we achieve, experimentally, an average of $0.98$ and $0.96$ respectively (see Table~\ref{tab:computed}), which confirms that the secondary statistics remain very close to the primary statistics (the simulated local post-processing effectively introduces $\sim 2\%$ white noise to the measurements.) \blk

The values of the quantity $\mc{I}$ appearing on the left-hand-side of the noncontextuality inequalities of Eq~\ref{eq: Inequality}, computed from the secondary statistics ${p^{\text{sec}}(ab|xy)}$ for different values of the visibility $v$ of the prepared two-qubit isotropic states are presented in Fig.~\ref{fig:experiment}. Error bars are obtained via Monte Carlo simulations assuming Poisson count statistics. \cmt In these simulations, Poissonian fluctuations in the raw counts are also propagated through the steps of the secondary procedures technique so that different primary statistics lead to different secondary statistics, all of which are evaluated against the same set of exact operational identities and noncontextuality inequality, as noted in Sec~\ref{sec:IVD}. \blk

Violations of the inequality in Eq.~\eqref{eq: Inequality} are observed for all set-ups that correspond to isotropic states with 
$v >0.43$, including those with $v < 0.5$, for which it is known that their entanglement cannot be certified by steering inequalities ~\cite{zhang2024parallel}. 

Our experimentally prepared states achieve fidelities above  $99\%$ with the nearest isotropic state. However, this only guarantees that, for instance, when $v\le 1/2$, they are {\em close to} unsteerable states. It does not guarantee that they are unsteerable (because it remains an open question whether proximity to an unsteerable isotropic state implies unsteerability). 
To address this, we directly assess whether each bipartite state prepared in our experiment is unsteerable (i.e., admits a local‐hidden‐state model) by solving a semidefinite program (see Appendix~\ref{sec: methodE}). 
This analysis showed that the state associated to parameter value $v =0.483$ cannot be confirmed to be unsteerable, but that the states associated to parameter values $v =0.454$ and $v =0.465$ \textit{are} confirmed to be unsteerable (see discussion in Appendix~\ref{sec: methodE}).
This establishes that our noncontextuality-based technique can certify entanglement for two-qubit states that are unsteerable and hence whose entanglement cannot be certified by steering inequalities or Bell inequalities.

\subsection{Certifying entanglement using the cut-to-your-cloth technique}
\label{sec:exp-pri}
\cmt
As noted in Sec~\ref{sec:IVD}, because the primary statistics $p^{\text{pri}}(ab|xy)$ and the primary measurements typically vary across experimental runs due to small systematic drifts and (in our simulations) Monte Carlo statistical uncertainty, one cannot certify entanglement using a single fixed noncontextuality inequality in the cut-to-your-cloth technique. Consequently, we cannot plot the degree of violation of a fixed noncontextuality inequality as the bipartite state varies, or report an error bar for the degree of violation. 
Using the cut-to-your-cloth technique, therefore, we cannot produce the sort of analysis provided in Fig.~\ref{fig:experiment} or Sec.~\ref{sec:exp-sec} for witnessing entanglement. 

To address this, we propose an alternative way of comparing the degree of NCOM-nonrealizability as we vary the bipartite state. Specifically, we quantify the \emph{noise-robustness} of $\mathbf{p}^{\text{pri}}=\{p^{\text{pri}}(ab|xy)\}_{abxy}$ by asking how much noise needs to be added to make the correlations compatible with a noncontextual ontological model. Concretely, we define
\begin{align}
&\eta_c[\mathbf{p}^{\text{pri}}]:= \max\Bigl\{\eta \in [0,1] \;\Big|\; p^{\text{pri}}_\eta(ab|xy) \in \mc{NC}_{\rm aattt}(\mc B^{\text{pri}})\Bigr\},\notag \\
&p^{\text{pri}}_\eta(ab|xy) := (1-\eta) p^{\text{pri}}(ab|xy) + \eta d^{\text{pri}}(ab|xy) 
\label{eq:noise-robust}
\end{align}
where $d^{\text{pri}}(ab|xy)$  is defined as the statistics obtained by averaging over all extreme vertices of  $\mc{NC}_{\rm aattt}(\mc{B^{\text{pri}}})$. We note that $d^{\text{pri}}(ab|xy)$ obtained from our operational statistics is very close to uniform, but is not, in general, exactly uniform over outcomes, since the uniform distribution need not belong to $\mc{NC}_{\rm aattt}(\mc B^{\text{pri}})$. It will be useful to also consider the noise-robustness of $p^{\text{sec}}(ab|xy)$. We denote this as $\eta_c[\mathbf{p}^{\text{sec}}]$, and define it as the obvious analogue of Eq.~\eqref{eq:noise-robust}.

\begin{table}[htbp]
\cmt 
\centering
\caption{ Critical noise thresholds $\eta_c$ for 12 states with estimated isotropic visibility $v$. The second and third columns give $\eta_c$ obtained from the primary and secondary statistics, respectively. The error bar is calculated from 100 trials of Monte
Carlo simulation with the experimental statistics. The second column reveals that noise-robustness can indeed provide a means of comparing the degree of NCOM-nonrealizability of bipartite states with different values of $v$ in the cut-to-your-cloth technique, in spite of the fact that the noncontextuality inequality  being tested varies with $v$.  A comparison of the second and third columns 
shows that \cmt the secondary statistics consistently yield a lower $\eta_c$ (i.e., they are less noise-robust) than the primary statistics, since the construction of secondary statistics from the primary statistics involves local classical post-processing, which introduces additional noise.}
\begin{ruledtabular}
\begin{tabular}{ccc}
Visibility $v$ & $\eta_c[\mathbf{p}^{\text{pri}}]$ & $\eta_c[\mathbf{p}^{\text{sec}}]$ \\
\hline
$0.290\pm 0.002$ & 0 & 0\\
$0.441\pm 0.002$ & 0 & 0 \\
$0.430\pm 0.002$ & 0 & 0 \\
$0.454\pm 0.002$ & $0.0413 \pm 0.0005$ & $0.0045 \pm 0.0009$\\
$0.465\pm 0.002$ & $0.0589 \pm 0.0004$ & $0.0336 \pm 0.0006$\\
$0.483\pm 0.003$ & $0.0993 \pm 0.0002$ & $0.0472 \pm 0.0014$\\
$0.505\pm 0.003$ & $0.1397 \pm 0.0003$ & $0.0929 \pm 0.0008$\\
$0.534\pm 0.002$ & $0.1803 \pm 0.0002$ & $0.1213 \pm 0.0004$\\
$0.551\pm 0.003$ & $0.2166 \pm 0.0003$ & $0.1796 \pm 0.0005$\\
$0.651\pm 0.003$ & $0.3368 \pm 0.0001$ & $0.3092 \pm 0.0004$\\
$0.742\pm 0.003$ & $0.4168 \pm 0.0001$ & $0.3995 \pm 0.0004$\\
$0.982\pm 0.004$ & $0.5613 \pm 0.0001$ & $0.5470 \pm 0.0002$\\
\end{tabular}
\end{ruledtabular}
\label{tab:newresults}
\end{table}
\cmt
\subsection{Comparing the cut-to-you-cloth and secondary-procedues techniques}

Table~\ref{tab:newresults} demonstrates the improved noise-robustness of the cut-to-your-cloth technique relative to the secondary procedures technique, hence the greater sensitivity to NCOM-nonrealizability that is offered by the cut-to-your-cloth technique. 

On the other hand, if one wishes to compare the degree of NCOM-nonrealizability across different experimental runs, e.g., runs implementing different two-qubit isotropic states, then although noise-robustness provides a means of doing so for both techniques (as illustrated in Table~\ref{tab:newresults}), it is arguably less natural than being able to compare the degree of violation of a single noncontextuality inequality (just as we compare the degree of violation of a single Bell inequality when witnessing entanglement via Bell inequality violation).  The latter approach, the outcome of which is illustrated in Fig.~\ref{fig:experiment}, is only available for the secondary procedures technique.   

\blk

\section{Comparison with alternative entanglement certification approaches}
\label{sec:VI}
We now compare our entanglement certification technique to pre-existing techniques. A summary of the relative merits of the different techniques, including ours, can be found in Table~\ref{tab:my_label}.

\subsection{Separability test with quantum state tomography}
Witnessing entanglement by implementing a separability test on a state obtained via quantum state tomography requires prior characterization of one's measurements.  Such a prior characterization is also required for estimating other entanglement witnesses~\cite{Guhne2009}. 

\subsection{Separability test with GPT/gate-set tomography}
By contrast, our approach resembles that of GPT tomography~\cite{Mazurek2021,Grabowecky2022} and gate-set tomography~\cite{Nielsen2021,Proctor2017}, which do not require a prior characterization of any subset of the procedures in the experiment, but rather infer characterizations of all the procedures self-consistently from how they interact, i.e., in a bootstrap fashion. 

Both GPT tomography and gate-set tomography 
do not return a unique output, neither for the bipartite state, nor for the local effects, but rather a \textit{set} of possible triples of these with different possibilities being related by an invertible linear transformation. As noted earlier, following Ref.~\cite{Nielsen2021}, we refer to this nonuniqueness as a tomographic {\em gauge freedom}. The most obvious way to certify entanglement of the bipartite state in this case is to implement a separability test for all possible choices of gauge. \footnote{It should be noted that merely fixing a tomographic gauge arbitrarily(rather than optimizing over it) can lead to estimates of quantities of interest that may be off by orders of magnitude from the true value~\cite{Proctor2017}.}  \blk

Like GPT tomography and gate-set tomography, our approach also begins with self-consistent tomography. However, it differs in how we use this output to certify entanglement.  Rather than performing a separability test on the entangled state that is most `entangled' after an optimization over gauge, \cmt we derive the noncontextuality inequality using the gauge-independent operational identities, and we test the noncontextuality inequality on the primary statistics $p^{\text{pri}}(ab|xy)$ (or the secondary statistics $p^{\text{sec}}(ab|xy)$) that is also gauge-independent. \blk  Our technique consequently provides the possibility of certifying entanglement without requiring any optimization over gauge and indeed without even requiring a description of the full gauge freedom of the state and local effects. 

\subsection{Bell inequality test}
Another technique for certifying entanglement that is gauge-independent and requires no prior characterization of the measurements is to use violations of Bell inequalities (e.g., in the device-independent paradigm~\cite{Brunner2014}). As argued in Ref.~\cite{Lin2018}, the rigorous implementation of this approach requires fitting one's raw data to the nearest correlations in the polytope of nonsignaling correlations and then testing Bell inequalities on these correlations. 
These device‐independent tests have an advantage over the technique described here: they do not require the sets of local effects to be tomographically complete. However, the price Bell tests pay for this advantage is that they can only certify entanglement in a strict subset of all entangled states~\cite{Werner1989}, whereas our technique can certify any entangled state. \cmt Furthermore, we demonstrated in Sec.~\ref {sec:fraction} that, for a given number of measurement settings at each wing,  a noncontextuality test can certify entanglement in a significantly larger fraction of the set of entangled states than can be certified in a Bell test. 
Moreover, the computational cost of solving the membership problem in a Bell test grows rapidly with the number of measurement settings, quickly becoming infeasible, as also noted in Sec.~\ref {sec:fraction}, whereas the analogous membership problem for a noncontextuality test remains feasible. \blk

\cmt 
\subsection{Steering inequality test}
An entanglement certification technique that is closely related to the Bell-based technique is to use steering inequalities.~\cite{Saunders2010,Uola2020}. Like a Bell test, such a test can certify entanglement only for a strict subset of all entangled states (those that are steerable)~\cite{Werner1989,Wiseman2007}. Furthermore,  this technique is only device-independent on one side, i.e., they require a full characterization of a tomographically complete set of measurements on one side. By contrast, our approach can certify all entangled states and can also be adapted to certify entanglement via noncontextuality-based steering certification, without requiring prior characterization of the measurement devices in either task. Examples of steering certification are provided in Appendix~\ref{sec: methodB1}. \blk

\subsection{Inequality test in non-Bell causal network}
A different device-independent approach to verifying entanglement is to test causal compatibility inequalities within more elaborate causal networks that use the bipartite state as one common cause. As shown in Ref.~\cite{bowles2018device}, such methods can in principle certify 
entanglement in {\em any} entangled state, while retaining the advantages of being a gauge-independent technique 
requiring no prior characterization of the measurements. However, these approaches require network structures that are more experimentally complex than that of the Bell circuit.  Moreover, they require \textit{additional} entangled sources beyond the one being characterized  and rely on an extra assumption of source independence. 

\cmt  
\subsection{Entanglement certification and Kochen Specker contextuality}
Finally, a closely related idea—certifying entanglement by linking local hidden-variable 
models to realizability by a Kochen–Specker (KS) noncontextuality model —was already noted by Werner in~\cite{Werner1989}. However, this route differs in essential ways from our framework, which is based on the notion of generalized contextuality rather than KS contextuality. First, KS contextuality arises only in Hilbert-space dimension $D\ge 3$ and thus does not, even in principle, provide a certification method applicable to the two-qubit states. Second, KS-type tests are typically formulated for, and experimentally interpreted through, the idealization of projective measurements, which limits their direct practical applicability. Most importantly for the present discussion, Werner’s argument was framed under the assumption of access to \emph{all} projective measurements. It is therefore unclear how strong such a certification can be when one is restricted to only a finite set of projective measurements. In this sense, while this idea is conceptually aligned with the broader theme of using contextuality to witness entanglement, it does not by itself amount to a practical entanglement certification protocol.
\subsection{Entanglement certification using our approach}
Compared to a Bell test, which relies on the weakest assumptions among standard entanglement-certification protocols, our noncontextuality-based approach offers no advantage in regimes where Bell-based certification already succeeds. Indeed, as noted in Sec.~\ref{sec:III}, in such regimes our noncontextuality test in the $\mc{O}_{\rm ttttt}$ scenario is equivalent to the Bell test. Moreover, all operational identities required for the noncontextuality test are then trivially guaranteed by the normalization of the measurement effects and therefore require no additional justification. 

As noted earlier, our noncontextuality-based approach can certify entanglement in regimes where Bell-based certification fails. The
strength of the extra operational conditions that are sufficient for being able to do so depends on the scope of entangled states one wants to certify. Specifically, as discussed in the main text, all entangled states can be certified using noncontextuality inequalities in the $\mc O_{\rm aattt}$ scenario, for which tomographic completeness of both Alice’s and Bob’s measurements is sufficient to infer the relevant operational identities. By contrast, as discussed in Appendix~\ref{sec: methodB1}, if one seeks to certify entanglement for just the set of steerable states, one may instead use noncontextuality inequalities in the $\mc O_{\rm atttt}$ or $\mc O_{\rm tattt}$ scenarios, in which case tomographic completeness of only one party’s measurements, Alice’s or Bob’s respectively, is sufficient to infer the relevant operational identities.

As discussed above, there do exist other entanglement certification protocols that can, in principle, certify more entangled states than a Bell-based test; some of which can even certify {\em all} entangled states. However, these either require full characterization of the measurement devices, as in quantum state tomography, entanglement witnesses, and steering-based tests, or they require additional assumptions, such as source independence, as in inequality tests in quantum networks. Our entanglement certification protocol is preferable insofar as one can regard tomographic completeness as an assumption that is weaker than these alternatives and more practically verifiable. 

If the dimension of the underlying physical system is known, one can use self-consistent tomography to directly determine whether the measurement operators span the relevant operator spaces.  In this case, tomographic completeness is inferred, not assumed.  Even if the dimension of the underlying system is not known, one can still perform a hypothesis test for the system dimension, as demonstrated in Ref.~\cite{Mazurek2021}, and then assess tomographic completeness relative to that hypothesis. The fact that such an assessment might yield a negative verdict for tomographic completeness means that our approach provides a concrete opportunity for falsifying some prior assumption about the condition of tomographic completeness. Such a negative verdict then flags the fact that one cannot 
implement a noncontextuality-based entanglement certification technique using the experimental procedures that were implemented. 
\blk
\section{Discussion}
On the theoretical and foundational side, we showed that the noncontextuality inequalities that witness entanglement, steerability, and nonlocality can be derived from applying the principle of noncontextuality to a hierarchy of subsets of the full set of operational identities. This hierarchy also offers an alternative {\em definition} of entanglement, steering, and nonlocality in terms of NCOM-nonrealizability for the measurement-full Bell circuit.

One clear avenue for future theoretical work is to use our methods to identify a hierarchy of simple and tractable noncontextuality inequalities whose violations are sufficient conditions for nonclassicality (i.e., witnessing entanglement or steerability) in other families of states (beyond 2-qubit isotropic states).  Our approach can be straightforwardly extended to the multipartite Bell circuit, and we can also use our methods to identify a hierarchy of sufficient conditions for nonclassicality for more general types of processes, such as quantum channels and instruments, as initially outlined in Ref.~\cite{zhang2024parallel}.
Finally, our novel approach to obtaining sufficient conditions for nonclassicality might also lead to novel ways of {\em quantifying} nonclassicality.

On the experimental and technical side, we have shown that our noncontextuality inequalities can be directly translated into a practical entanglement certification protocol. Uniquely among techniques for doing so, ours offers \blk the possibility of realizing all of the following features at once: requiring no prior characterization of the measurement devices, delivering a verdict based on quantities that are independent of the tomographic gauge, being implementable using only the causal structure of a standard bipartite Bell circuit, and being able in principle \blk to witness the entanglement of any entangled state. 

One interesting direction for future experimental work is to use a noncontextuality-based technique to test sufficient conditions for the nonclassicality of multipartite states, channels, and instruments. Another frontier is to examine whether the assumption of tomographic completeness typically required for experimental tests of noncontextuality inequalities can be relaxed~\cite{pusey2019contextualityaccesstomographicallycomplete,schmid2024shadowssubsystemsgeneralizedprobabilistic}. \blk

\section{Acknowledgements}
We thank Y{\`i}l{\`e} Y{\=\i}ng, 
Emanuele Polino, Elie Wolfe, and Yeong-Cherng Liang for useful discussions. \cmt We thank two anonymous referees for helpful comments, including requests for clarification of the assumptions underlying our entanglement certification protocol, which led us to make many improvements in the presentation. \blk The research was supported in part by the Natural Sciences and Engineering Research Council of Canada
(NSERC), the NSERC-Alliance Hyperspace project, Industry Canada, the Canada Foundation for
Innovation (CFI), Canada Excellence Research Chair (CERC) program, and the Canada First Research Excellence
Fund (CFREF). D.S. and R.W.S. were supported by Perimeter Institute for Theoretical Physics. Research at Perimeter Institute is supported in part by the Government of Canada through the Department of Innovation, Science and Economic Development and by the Province of Ontario through the Ministry of Colleges and Universities. 

\section{Code availability}
The code needed to reproduce the main results is available on GitHub~\cite{zhang2025github}.  


\appendix 

\section{Proof of Theorem~\ref{thm: connection} and Proposition~\ref{cor: connection}}
\label{sec: methodA}
\setcounter{theorem}{0}
Below we provide a detailed proof of Theorem \ref{thm: connection} and Proposition \ref{cor: connection} from the main text.
We begin by recalling the notion of a classical preparation (or classical set of states) introduced in \cite{zhang2024parallel}.
\begin{definition}
\cite{zhang2024parallel} In measurement-full prepare-and-measure circuit $\mc{P}^{\rm full}=(\msf{M}^{\rm full},\msf{P}=\{\{\tilde{\rho}^B_{a|x}\}_a\}_x)$. The multi-source $\{\{\tilde{\rho}^B_{a|x}\}_a\}_x$ is said to be {\em classical} if the operational statistics arising in this circuit are NCOM-realizable relative to $\mc{O}_{\text{all}}(\msf{M}^{\rm full})\cup \mc{O}_{\text{all}}(\msf{P})$.
\label{def:classicalpreparation}
\end{definition} 
\begin{Thmprop}
\label{thm: appendix}
Let $\mc{B}^{\rm full}=(\msf{N}^{\rm full},\msf{M}^{\rm full},\msf P)$ denote a measurement-full Bell circuit. For the bipartite state $\rho$,  there is an equivalence between the two properties in  each of the following pairs:\\
(1) classicality relative to $\mc{O}_{\rm aattt}(\mc{B}^{\rm full})$ and separability, \\
(2) classicality relative to $\mc{O}_{\rm atttt}(\mc{B}^{\rm full})$ and $B{\rightarrow}A$ unsteerability, \\
(3) classicality relative to $\mc{O}_{\rm tattt}(\mc{B}^{\rm full})$ and $A{\rightarrow}B$ unsteerability,\\
(4) classicality relative to $\mc{O}_{\rm ttttt}(\mc{B}^{\rm full})$ and locality, \\
(5) classicality relative to $\mc{O}_{\rm aaatt}(\mc{B}^{\rm full})$ and the impossibility of steering nonclassical set of states on $B$, \\
(6) classicality relative to $\mc{O}_{\rm aatat}(\mc{B}^{\rm full})$ and the impossibility of steering nonclassical set of states on $A$.\\
\end{Thmprop}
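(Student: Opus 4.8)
The plan is to reduce all six equivalences to a single structural fact about the measurement-full scenario and then run essentially the same argument in each case, differing only in which operational identities are available to exploit. The structural fact---which I would either prove directly or cite from Ref.~\cite{zhang2024parallel}---is a \emph{representation lemma}: in the measurement-full prepare-measure scenario, an assignment $\lambda_A\mapsto\{p(a|x,\lambda_A)\}_{a,x}$ respects every operational identity in $\mc{O}_{\rm all}(\msf{N}^{\rm full})$ if and only if $p(a|x,\lambda_A)=\tr[\sigma_{\lambda_A}N_{a|x}]$ for some density operator $\sigma_{\lambda_A}$ on $A$, and symmetrically on $B$. The ``only if'' is immediate: respecting all linear relations among the effects means the map extends to a linear functional on their real span, which is all of the Hermitian operators on $\mbb{C}^{d_A}$, and sending each effect into $[0,1]$ with $\mbb{1}\mapsto 1$ forces this functional to be $E\mapsto\tr[\sigma E]$ with $\sigma\ge 0$, $\tr\sigma=1$. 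A companion fact I would use repeatedly is that the full effect sets span the Hermitian operators, so $\{N_{a|x}\otimes M_{b|y}\}$ spans the Hermitian operators on $AB$; this lets one promote identities among the probabilities $p(ab|xy)$ to identities among operators.

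\emph{Case (1), separability.} Forward: a model realizing the statistics relative to $\mc{O}_{\rm aatt}(\mc{B}^{\rm full})$ has $p(a|x,\lambda_A)=\tr[\sigma_{\lambda_A}N_{a|x}]$ and $p(b|y,\lambda_B)=\tr[\tau_{\lambda_B}M_{b|y}]$ by the lemma; inserting this into the diagram-preservation constraint Eq.~\eqref{eq:cl-bipartite} and equating with $\tr[\rho^{AB}N_{a|x}\otimes M_{b|y}]$ for all $a,x,b,y$ gives, by spanning, $\rho^{AB}=\sum_{\lambda_A\lambda_B}p(\lambda_A\lambda_B)\,\sigma_{\lambda_A}\otimes\tau_{\lambda_B}$, a separable state. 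Converse: from a separable decomposition $\rho^{AB}=\sum_\mu q_\mu\,\sigma_\mu\otimes\tau_\mu$ I would build the diagonal model $p(\lambda_A{=}\mu,\lambda_B{=}\nu)=q_\mu\delta_{\mu\nu}$, $p(a|x,\mu)=\tr[\sigma_\mu N_{a|x}]$, $p(b|y,\mu)=\tr[\tau_\mu M_{b|y}]$, then check it reproduces $p(ab|xy)$ and that its response functions, being linear in the effects, respect every identity in $\mc{O}_{\rm aatt}(\mc{B}^{\rm full})$ (the trivial steering-assemblage identities being automatic from diagram preservation).

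\emph{Cases (2)--(4).} Case (4) is immediate and was noted in the main text: with only $\mc{O}_{\rm tttt}$ the sole constraint is Eq.~\eqref{eq:cl-bipartite}, i.e.\ LHV-realizability, i.e.\ locality. For case (2), $\mc{O}_{\rm attt}(\mc{B}^{\rm full})$ still contains $\mc{O}_{\rm all}(\msf{N}^{\rm full})$, so the lemma gives $p(a|x,\lambda_A)=\tr[\sigma_{\lambda_A}N_{a|x}]$ while the response functions on $B$ remain unconstrained. Writing $\hat\sigma_{\lambda_B}:=\sum_{\lambda_A}p(\lambda_A\lambda_B)\,\sigma_{\lambda_A}$ (a subnormalized ensemble on $A$ whose sum is $\tr_B\rho^{AB}$) and using the spanning property of $\{N_{a|x}\}$ yields $\tilde{\rho}^A_{b|y}=\sum_{\lambda_B}p(b|y,\lambda_B)\,\hat\sigma_{\lambda_B}$, which is exactly an LHS model for the $A$-assemblage, i.e.\ $B{\to}A$ unsteerability; the converse turns an LHS model into such a diagonal model directly. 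Case (3) is the mirror image under $A\leftrightarrow B$.

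\emph{Cases (5)--(6).} Here I would follow the proof of Theorem~1 of Ref.~\cite{Plavala2024}. Forward: an NCOM relative to $\mc{O}_{\rm aaat}(\mc{B}^{\rm full})$, once one keeps only $\lambda_B$, the steered states $\{\tilde{\rho}^B_{a|x}\}=\msf{N}^{\rm full}\circ\rho$, and the data $\tilde p(\lambda_B a|x)$ and $p(b|y,\lambda_B)$, is precisely a noncontextual ontological model of the prepare-measure scenario $(\msf{M}^{\rm full},\msf{N}^{\rm full}\circ\rho)$---noncontextuality for its preparations being the $\mc{O}_{\rm all}(\msf{N}\circ\rho)$ clause and for its measurements the $\mc{O}_{\rm all}(\msf{M})$ clause---so the steered set of states is classical in the sense of Definition~\ref{def:classicalpreparation}. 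The converse is the step I expect to be the main obstacle. Given a classical ontological model of $(\msf{M}^{\rm full},\{\tilde{\rho}^B_{a|x}\})$---preparation measures $\mu_{a|x}$ on an ontic space with total measure $\nu$, and state-valued response functions $\tau_\lambda$---one must reconstruct a full Bell NCOM for $\rho^{AB}$ relative to $\mc{O}_{\rm aaat}$. Since every identity of $\msf{N}^{\rm full}$ is also an identity of $\{\tilde{\rho}^B_{a|x}\}$, preparation noncontextuality forces $\mu_{a|x}(S)=\tr[\Sigma_S N_{a|x}]$ for positive operators $\Sigma_S$ with $\tr\Sigma_S=\nu(S)$; a Radon--Nikodym argument then produces a density-operator-valued density $W_\lambda$ on $A$ with $\tfrac{d\mu_{a|x}}{d\nu}(\lambda)=\tr[W_\lambda N_{a|x}]$. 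Taking $\lambda_A=\lambda_B=\lambda$ perfectly correlated with law $\nu$, and setting $p(a|x,\lambda)=\tr[W_\lambda N_{a|x}]$, $p(b|y,\lambda)=\tr[\tau_\lambda M_{b|y}]$, reproduces $p(ab|xy)=\tr[\rho^{AB}N_{a|x}\otimes M_{b|y}]$, has a noncontextual $B$-side by construction, and has a $B$-assemblage representation $\tr[W_\lambda N_{a|x}]\,d\nu(\lambda)$ inheriting all identities of $\{\tilde{\rho}^B_{a|x}\}$; hence all of $\mc{O}_{\rm aaat}$ holds. The delicate points are making this Radon--Nikodym/Naimark-type decomposition rigorous and correctly accounting for the fact that the steered assemblage can satisfy strictly more identities than $\msf{N}^{\rm full}$ does; the surrounding measure-theoretic bookkeeping is routine since $A$ and $B$ are finite-dimensional. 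Case (6) follows by exchanging $A$ and $B$.
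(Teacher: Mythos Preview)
Your treatment of Cases (1)--(4) is essentially identical to the paper's: both invoke what the paper calls the generalized Gleason theorem~\cite{Busch2003}---your ``representation lemma''---to convert response functions respecting $\mc{O}_{\rm all}(\msf{N}^{\rm full})$ into Born-rule expressions $\tr[\sigma_{\lambda_A}N_{a|x}]$, then use spanning (tomographic completeness) to pass from equalities of probabilities to equalities of operators. The diagonal models you build for the converse directions are the same as the paper's.

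For Cases (5)--(6) the forward directions again agree, but your converse takes a more laborious route than the paper's. You propose to extract, from the preparation measures $\mu_{a|x}$ of a given NCOM for $(\msf{M}^{\rm full},\msf{N}^{\rm full}\circ\rho)$, a density-operator-valued Radon--Nikodym derivative $W_\lambda$ on $A$, and you correctly flag this as the delicate step. The paper sidesteps it entirely: invoking Theorem~2 of Ref.~\cite{zhang2024parallel}, classicality of the assemblage is equivalent to a \emph{frame representation} $\tilde{\rho}^B_{a|x}=\sum_{\lambda_B}\tr[H_{\lambda_B}\tilde{\rho}^B_{a|x}]\,\sigma_{\lambda_B}$ with $\sigma_{\lambda_B}\ge 0$ and nonnegative coefficients. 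The Alice-side operators are then built \emph{algebraically} from the dual-frame elements as $\tau_{\lambda_B}:=\tr_B[(\mbb{1}\otimes H_{\lambda_B})\rho^{AB}]$, whose positivity follows because $\tr[N\,\tau_{\lambda_B}]=\tr[H_{\lambda_B}\tilde{\rho}^B_N]\ge 0$ for every effect $N$. This construction automatically lives in the image of the steering map, so the extra operational identities of the assemblage (beyond those of $\msf{N}^{\rm full}$) are satisfied for free---exactly the issue you worried about. Your Radon--Nikodym argument would ultimately produce the same $\tau_{\lambda}$'s, but the paper's route is shorter and avoids any measure-theoretic bookkeeping on the ontic space.
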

\begin{proof}
The core of the proof is to apply the generalized Gleason theorem \cite{Busch2003} iteratively to the different components of the ontological model.
We start with Case (2) of Theorem \ref{thm: connection}, then treat Case (1). Cases (3) and (4) follow directly from these arguments, and Cases (5) and (6) will be proved at the end.

\textit{Case (2):}  If a bipartite state $\rho^{AB}$ is classical relative to $\mc O_{\rm atttt}(\mc B^{\rm full})$, by definition~\ref{def:bipartite}, the operational statistics $p(ab|xy)$ it generates with all local POVMs $\{N^A_{a|x}\}_a\in \msf{N}^{\rm full}$ and $\{M^B_{b|y}\}_b\in \msf{M}^{\rm full}$ should be NCOM-realizable relative to $\mc O_{\rm atttt}(\mc B^{\rm full})$ per definition~\ref{def:statistics}. More specifically, one can write
\begin{align}
p(ab|xy)&=\tr[( N^A_{a|x}\otimes M^B_{b|y})\rho^{AB}]\notag \\
&=\sum_{\lambda_A\lambda_B} p(\lambda_A\lambda_B)p(a|x,\lambda_A)  p(b|y,\lambda_B) \label{eq:bi-NCOM} 
\end{align}
the assumption of noncontextuality applied to $\mc{O}_{\text{all}}(\msf N^{\rm full})$ implies that the response function $p(a|x,\lambda)$ is a linear function of $N^A_{a|x}$. The generalized Gleason's theorem then ensures the existence of a unique density operator $\rho^A_{\lambda_A}$ such that $p(a|x,\lambda_A)=\tr[\rho^A_{\lambda_A} N^A_{a|x}]$. Defining the steering assemblage $\{\{\tilde{\rho}^A_{b|y}=\tr[(\mbb{1}^A\otimes M^B_{b|y}) \rho^{AB}]\}_b\}_y$, one has
\begin{align}
p(ab|xy)&=\tr[\tilde{\rho}^A_{b|y}N^A_{a|x}] \notag \\
&=\sum_{\lambda_A\lambda_B}p(\lambda_A\lambda_B)p(b|y,\lambda_B)\tr[N^A_{a|x}\rho^A_{\lambda_A}] \notag \\
\Rightarrow \tilde{\rho}^A_{b|y}&=\sum_{\lambda_A\lambda_B}p(\lambda_A\lambda_B)p(b|y,\lambda_B)\rho^A_{\lambda_A},
\end{align}
where in the last line, we use the fact that the set of all measurements $\msf{N}^{\rm full}$ is tomographically complete.  This decomposition is a local hidden state model for the assemblage, so the state $\rho^{AB}$ is unsteerable from $B$ to $A$.

\par The reverse direction follows directly, if a bipartite state $\rho^{AB}$ is unsteerable, by definition, the state assemblage $\{\{\tilde{\rho}^A_{b|y}=\tr[(\mbb{1}^A\otimes M^B_{b|y}) \rho^{AB}]\}_b\}_y, \forall \{M^B_{b|y}\}_b\in \msf M^{\rm full}$ should admits a local hidden state model~\cite{Uola2020}:
\begin{equation}
\tilde{\rho}^A_{b|y}=\sum_{\lambda_A}p(\lambda_A)p(b|y,\lambda_A)\rho^A_{\lambda_A},
\end{equation}
with$\sum_{\lambda_A}p(\lambda_A)=\sum_{b}p(b|y,\lambda_A)=1$, $p(\lambda_A), p(b|y,\lambda_A)\ge 0$ and $\rho^A_{\lambda_A}\ge \mbb{0}$. Therefore, the operational statistics $p(ab|xy)$ it generates with all local POVMs $\{N^A_{a|x}\}_a\in \msf{N}^{\rm full}$ and $\{M^B_{b|y}\}_b\in \msf{M}^{\rm full}$ must be NCOM-realizable relative to $\mc O_{\rm atttt}(\mc B^{\rm full})$, since one can define 
\begin{align}
p(ab|xy):&=\tr[\tilde{\rho}^A_{b|y}N^A_{a|x}]=\sum_{\lambda_A}p(\lambda_A)p(b|y,\lambda_A)\tr[N^A_{a|x}\rho^A_{\lambda_A}] \notag
\end{align}
where $\sum_b p(b|y,\lambda_A)=1$ and satisfies the trivial operational identities $\mc O_{\rm triv}(\msf{M}^{\rm full})$, while one can define $p(a|x,\lambda_A):=\tr[N^A_{a|x}\rho^A_{\lambda_A}]$, which is a nonnegative linear function of all $N^A_{a|x}\in \msf{N}^{\rm full}$ that satisfies $\msf O_{\rm all}(\msf{N}^{\rm full})$.

\textit{Case (1):} If a bipartite state $\rho^{AB}$ is classical relative to $\mc O_{\rm aattt}(\mc B^{\rm full})$, by definition~\ref{def:bipartite}, the operational statistics $p(ab|xy)$ it generates with all local POVMs $\{N^A_{a|x}\}_a\in \msf{N}^{\rm full}$ and $\{M^B_{b|y}\}_b\in \msf{M}^{\rm full}$ should be NCOM-realizable relative to $\mc O_{\rm aattt}(\mc B^{\rm full})$ per definition~\ref{def:statistics}. 

Specifically, its operational statistics $p(ab|xy)$ should be realizable by Eq.~\eqref{eq:bi-NCOM}, where the assumption of noncontextuality applied to $\mc{O}_{\text{all}}(\msf N^{\rm full})$ and $\mc{O}_{\text{all}}(\msf M^{\rm full})$ implies that the response function $p(a|x,\lambda_A)$ is a linear function of the effect $N^A_{a|x}$ and that $p(b|y,\lambda_B)$ is a linear function of $M^B_{b|y}$.  The generalized Gleason's theorem then guarantees the existence of unique density operators $\rho^A_{\lambda_A}$ and $\sigma^B_{\lambda_B}$ such that we can write $p(a|x,\lambda_A)=\tr[\rho^A_{\lambda_A} N^A_{a|x}]$ and  $p(b|y,\lambda_B)=\tr[\sigma^B_{\lambda_B} M^B_{b|y}]$. Plugging these in to Eq.~\eqref{eq:bi-NCOM}, for every pairs of POVMs $\{N^A_{a|x}\}_a$ and $\{M^B_{b|y}\}_b$, we have
\begin{align}
&\tr[( N^A_{a|x}\otimes  M^B_{b|y})\rho^{AB}]\notag \\&=\sum_{\lambda_A\lambda_B}p(\lambda_A\lambda_B)\tr[N^A_{a|x}\rho^A_{\lambda_A}]\tr[M^B_{b|y}\sigma^B_{\lambda_B}]\notag \\
\Rightarrow \rho^{AB}&=\sum_{\lambda_A\lambda_B}p(\lambda_A\lambda_B)\rho^A_{\lambda_A}\otimes \sigma^B_{\lambda_B},
\end{align}
where in the last line, we used the fact that the set of all product measurements is tomographically complete for the space of quantum states. So the state is separable. 

The reverse direction is straightforward, as one only needs to show that the operational statistics arising from a separable state $\rho^{AB}=\sum_{\lambda_A\lambda_B}p(\lambda_A\lambda_B)\rho^A_{\lambda_A}\otimes \sigma^B_{\lambda_B}$ along with all POVMs $\{N^A_{a|x}\}_a\in \msf{N}^{\text{all}}$ and $\{M^B_{b|y}\}_b\in \msf{M}^{\text{all}}$ are NCOM-realizable relative to $\mc{O}_{\rm aattt}(\mc B^{\rm full})$. This can be met by defining the response functions  $p(a|x,\lambda_A)=\tr[\rho^A_{\lambda_A} N^A_{a|x}]$ and  $p(b|y,\lambda_B)=\tr[\sigma^B_{\lambda_B} M^B_{b|y}]$, which clearly are linear maps from POVMs to their ontological representations and respect $\mc O_{\rm all}(\msf N^{\rm full})$ and $\mc O_{\rm all}(\msf M^{\rm full})$.

Now, we will prove case (5); case (6) follows similarly. 

\par \textit{Case (5):} If a bipartite state $\rho^{AB}$ is classical relative to $\mc O_{\rm aattt}(\mc B^{\rm full})$, by definition~\ref{def:bipartite}, the operational statistics $p(ab|xy)$ it generates with all local POVMs $\{N^A_{a|x}\}_a\in \msf{N}^{\rm full}$ and $\{M^B_{b|y}\}_b\in \msf{M}^{\rm full}$ should be NCOM-realizable relative to $\mc O_{\rm aattt}(\mc B^{\rm full})$ per definition~\ref{def:statistics}. Specifically, its operational statistics $p(ab|xy)$ should be realizable by Eq.~\eqref{eq:bi-NCOM} and the assumption of noncontextuality applied to $\mc{O}_{\text{all}}(\msf M^{\rm full})$ implies the existence of unique density operators $\sigma^B_{\lambda_B}$ such that we can write  $p(b|y,\lambda_B)=\tr[\sigma^B_{\lambda_B} M^B_{b|y}]$. Moreover,  the assumption of noncontextuality applied to $\mc{O}_{\text{all}}(\msf N^{\rm full}\circ \msf P)$ implies the existence of a Hermition operator $H^B_{\lambda_B}$ such that  $\sum_{\lambda_A}p(a|x\lambda_A)p(\lambda_A\lambda_B)=\tr[(N^A_{a|x}\otimes H^B_{\lambda_B})\rho^{AB}]=\tr[H^B_{\lambda_B}\tilde{\rho}^B_{a|x}]\ge 0$, where we definition of $\tilde{\rho}^B_{a|x}$ in  Eq.~\eqref{eq:steer A}. Combining these facts, we have
\begin{align}
\tr[\tilde{\rho}^B_{a|x}M^B_{b|y}] &=\tr[(N^A_{a|x}\otimes M^B_{b|y})\rho^{AB}] \notag \\
&=\sum_{\lambda_A\lambda_B}p(\lambda_A\lambda_B)p(a|x,\lambda_A)p(b|y,\lambda_B) \notag \\
&=\sum_{\lambda_B}\tr[H^B_{\lambda_B}\tilde{\rho}^B_{a|x}]\tr[M^B_{b|y}\sigma^B_{\lambda_B}] \notag \\
\Rightarrow \tilde{\rho}^B_{a|x}&=\sum_{\lambda_B}\tr[H^B_{\lambda_B}\tilde{\rho}^B_{a|x}]\sigma^B_{\lambda_B}, \label{eq:classicalstate}
\end{align}
where in the last line, we used the fact that the set of all measurements is tomographically complete for the space of quantum states. According to Theorem 2 in \cite{zhang2024parallel}, decomposition in Eq.~\eqref{eq:classicalstate} ensures that the assemblage $\{\{\tilde{\rho}^B_{a|x}\}_a\}_x$ is classical. 

For the reverse direction, if the state assemblage $\{\{\tilde{\rho}^B_{a|x}:=\tr[(N^A_{a|x}\otimes \mbb{1}^B)\rho^{AB}]\}_a\}_x$ defined for $\forall \{N^A_{a|x}\}_a\in \msf{N}^{\rm full}$ is classical, accroding to Theorem 2 in \cite{zhang2024parallel}, one can find a frame representation $\tilde{\rho}^B_{a|x}=\sum_{\lambda_B}\tr[H^B_{\lambda_B}\tilde{\rho}^B_{a|x}]\sigma^B_{\lambda_B}$ with $\sigma^B_{\lambda_B}\ge \mbb{0}$ and $\tr[H^B_{\lambda_B}\tilde{\rho}^B_{a|x}]\ge 0$ for all $\lambda_B$.

Therefore, combining with all $\{M^B_{b|y}\}_b\in \msf{M}^{\rm full}$ one has:
\begin{align}
\tr[(N^A_{a|x}&\otimes M^B_{b|y})\rho^{AB}]=\tr[\tilde{\rho}^B_{a|x}M^B_{b|y}] \notag \\
=&\sum_{\lambda_B}\tr[H^B_{\lambda_B}\tilde{\rho}^B_{a|x}]\tr[M^B_{b|y}\sigma^B_{\lambda_B}] \notag \\
=&\sum_{\lambda_B}\tr[(N^A_{a|x}\otimes H^B_{\lambda_B})\rho^{AB}]\tr[M^B_{b|y}\sigma^B_{\lambda_B}] \notag \\
:=&\sum_{\lambda_B}\tr[N^A_{a|x}\tau^A_{\lambda_B}]\tr[M^B_{b|y}\sigma^B_{\lambda_B}] \end{align}
In the last equality, we define $\tau^A_{\lambda_B}:=\tr_B[(\mbb{1}^A\otimes H^B_{\lambda_B})\rho^{AB}]$, where $\tau^A_{\lambda_B}\ge\mbb{0}$ is ensured because $\tr[N^A_{a|x}\tau^A_{\lambda_B}]=\tr[H^B_{\lambda_B}\tilde{\rho}^B_{a|x}]\ge 0$ for all $\{N^A_{a|x}\}_a\in \msf{N}^{\rm full}$.  Hence, one can simply define $p(\lambda_A\lambda_B)=\delta_{\lambda_A\lambda_B}$, $p(b|y,\lambda_B)=\tr[M^B_{b|y}\sigma^B_{\lambda_B}]$ and $p(a|x,\lambda_A)=\tr[N^A_{a|x}\tau^A_{\lambda_A}]$, respectively,  as the ontological representation for each process in the ontological model. 

\par 
Checking that this model satisfies every noncontextuality constraint in $\mc{O}_{\rm aaatt}(\mc{B}^{\rm full})$ is then straightforward. The constraints in $\mc{O}_{\rm all}(\msf{N}^{\rm full})$ and $\mc{O}_{\rm all}(\msf{M}^{\rm full})$ hold trivially, since $\tau^A_{\lambda_B}, \sigma^B_{\lambda_B}\ge \mbb{0}$. For $\mc{O}_{\rm all}(\msf{N}^{\rm full}\circ \msf P)$, note that $\sum_{\lambda_A }p(\lambda_A\lambda_B)p(a|x,\lambda_A) =\sum_{\lambda_A }\delta_{\lambda_A,\lambda_B}\tr[N^A_{a|x}\tau^A_{\lambda_A}]=\tr[(N^A_{a|x}\otimes H^B_{\lambda_B})\rho^{AB}]=\tr[\tilde{\rho}^B_{a|x}H^B_{\lambda_B}]\ge 0$, hence the composition map is linear in the assemblage $\{\{N^A_{a|x}\circ \msf P^{AB}=\tilde{\rho}^B_{a|x}\}_a\}_x$ for all $\{N^A_{a|x}\}_a\in \msf{N}^{\rm full}$. Therefore, we have a valid ontological model that respects  $\mc{O}_{\rm aaatt}(\mc{B}^{\rm full})$
\end{proof}

\blk

\section{Deriving the noncontextuality inequalities}\label{sec: methodB}

Consider a Bell circuit $\mc{B}=(\msf N, \msf M, \msf P)$ where Alice and Bob each perform $\Delta_{\msf N}$ and $\Delta_{\msf M}$ different $o$-outcome measurements, respectively. One can visualize the operational statistics $p(ab|xy)$ as a point $\mbf{p}:=\{p(ab|xy)\}_{a,b,x,y}\in \mbb{R}^{o^2\Delta_{\msf N}\Delta_{\msf M}}$. 
The set of all the operational statistics that are NCOM-realizable relative to $\mc O_{\rm aattt}(\mc B)$ forms a polytope in this space, which we denote as $\mc{NC}_{\rm  aattt}(\mc B)$. 

As stated in the main text (and illustrated by Example~\ref{example1}), a set of operational identities $\mc{O}_{\rm all}({\msf{N}})$ can be represented by a geometric object---specifically, a polytope.\footnote{the polytope describing the intrinsic geometry of $\msf{N}$
should not be confused with the noncontextual polytope of NCOM-realizable
distributions.} Concretely, each operational identity is a set of coefficients $\mc{O}_{\rm all}({\msf{N}}) = \{\{ \alpha_{a,x}\}_{a,x}\}$, so one can associate to it a polytope $P$ whose vertices are indexed by $\{\vec{t}_{a,x}\}$ satisfying all and only the equalities $\sum_{a,x}\alpha_{a,x}\vec{t}_{a,x}=\vec{0}$ and no other equalities. Conversely, given a polytope $P$ with vertices indexed by ${a,x}$, one can pick an arbitrary coordinate system to represent these vertices as vector $\vec{t}_{a,x}$ and define the set of operational identities via the equalities they satisfy: $\mc{O}_{\rm all}({\msf{N}})=\{\{\alpha_{a,x}\}|\sum_{a,x}\alpha_{a,x}\vec{t}_{a,x}=\vec{0}\}$. (An explicit method for computing these coefficients from a given set of vectors is given in Appendix~A of Ref.~\cite{schmid2024a}.) This duality allows us to visualize operational constraints in purely geometric terms (as we did in the main text).

The extreme points $\vec{v}\in\mc{NC}_{\rm  aattt}(\mc B)$ are those that can be obtained from the extreme operational statistics $p(ab|xy)=\sum_{\lambda_A\lambda_B}p(\lambda_A\lambda_B)p(a|x,\lambda_A)p(b|y,\lambda_B)$ when $p(\lambda_A\lambda_B)$ and when the response functions $\{p(a|x,\lambda_A)\}_{a,x}$ and $\{p(b|y,\lambda_B)\}_{b,y}$ are also extremal within the {\em noncontextual-measurement-assignment polytope}~\cite{David2018,zhang2025tech}. The noncontextual-measurement-assignment polytope $\mbb{P}_{\msf{N}}$ is formally defined as a collection of points $\vec{u}:=\{p(a|x)\}_{a,x}$ such that
\begin{subequations}
\begin{align}
\text{(i)}~~&p(a|x) \geq 0~~~~\forall a,x  \label{eq:cons_pos}\\
\text{(ii)}~~&\sum_a p(a|x)=1~~~~\forall x  \label{eq:cons_norm}\\
\text{(iii)}~~&\sum_{a,x} \alpha_{a,x} p(a|x)=0 \quad\quad \forall \{\vec{\alpha}\}\in \mc{O}_{\rm all}({\msf{N}}).
\label{eq:cons_nonc}
\end{align}
\label{eq:assignment polytope}
\end{subequations}
\cmt Similarly, one can define the \textit{measurement-assignment polytope} $\mbb{D}_{\msf N}$ as the collection of points $\vec{u}:=\{p(a|x)\}_{a,x}$ satisfying only Eq.~\eqref{eq:cons_pos} and Eq.~\eqref{eq:cons_norm} above. 

Therefore, all extreme points $\vec{v}\in\mc{NC}_{\rm aattt}(\mc B)$ can be obtained from the extreme points  $\vec{u}\in \mbb{P}_{\msf{N}}$ and $\vec{w}\in \mbb{P}_{\msf{M}}$ via $\vec{v}= \vec{u}\otimes\vec{w}$, \cmt and analogously, all extreme points $\vec{v}\in\mc{NC}_{\rm tattt}(\mc B)$ can be obtained from the extreme points  $\vec{u}\in \mbb{D}_{\msf{N}}$ and $\vec{w}\in \mbb{P}_{\msf{M}}$ via $\vec{v}= \vec{u}\otimes\vec{w}$.

\cmt It is important to note that the noncontextual-measurement-assignment polytope $\mbb{P}_{\msf{N}}$ is a polytope in the space of conditional probability vectors $\{p(a|x)\}_{a,x}\in \mbb R^{o\Delta N}$ constrained by the normalization conditions in Eq.~\eqref{eq:cons_norm} and ontological identities in Eq.~\eqref{eq:cons_nonc}. Consequently, the affine dimension of $\mbb{P}_{\msf{N}}$ is $K-1$, bounded solely by the operator space dimension $K$ that the set of effects in $\msf{N}$ spans (Assuming the effects in $\msf{N}$ are informationally complete for a $D$-dimensional Hilbert space, then $K=D^2$). More importantly, since $\mbb{P}_{\msf N}$ has only $o\Delta_N$ inequality constraints in Eq.~\eqref{eq:cons_pos}, the number of vertices in $\mbb{P}_{\msf{N}}$ is upper bounded by a polynomial in $o\Delta_N$ as per the dual form of the Upper Bound Theorem~\cite{McMullen1970,alon1985}. In contrast, the local response functions in the Bell test define the measurement-assignment polytope $\mbb{D}_{\msf{N}}$ based solely on the positivity and normalization constraints in Eq.~\eqref{eq:assignment polytope}. This polytope lives in a $(o-1)\Delta_N$-dimensional affine subspace of $\mbb R^{o\Delta N}$ and has $o^{\Delta_\msf N}$ deterministic vertices.

Similarly, one can verify that the noncontextual polytope $\mc{NC}_{\text{aattt}}(\mc B)$ resides in a $(D^4-1)$-dimensional subspace of the vector space $\mbb R^{o^2\Delta_N\Delta_M}$ (assuming the products of effects in $\msf{N}$ and $\msf{M}$ are informationally complete for the $D^2$-dimensional Hilbert space), which is independent of the number of outcomes $o$ and settings $\Delta_N, \Delta_M$. Moreover, the number of vertices in $\mc{NC}_{\text{aattt}}(\mc B)$ scales only polynomially with $o^2\Delta_N\Delta_M$ since they are determined by the number of vertices in $\msf P_{\msf N}$ and $\msf P_{\msf M}$. By comparison, the Bell polytope lives in an affine subspace of dimension $(o-1)(\Delta_N+\Delta_M)+(o-1)^2\Delta_N\Delta_M$~\cite{Brunner2014}, and its vertex count scales exponentially as $o^{\Delta_N+\Delta_M}$. As a result, in contrast with characterizing the local polytope, characterizing $\mc{NC}_{\text{aattt}}(\mc B)$ remains computationally feasible, even when the number of settings for each party is very large.\blk

For simple cases, one can completely characterize the polytope $\mc{NC}_{\rm aattt}(\mc B)$ by enumerating all the extreme points $\{\vec{v}=\vec{u}\otimes \vec{w}\}$ with $\vec{u}$ being an extreme point of $\mbb{P}_{\msf{N}}$ and $\vec{w}$ being an extreme point of $\mbb{P}_{\msf{M}}$. Taking the class of (octahedron, cube)-Bell circuits for instance: for multi-measurements $\msf{N}=\{\{N^A_{a|x}\}_{a}\}_{x}$ and $\msf{M}=\{\{M^B_{b|y}\}_{b}\}_{y}$ where $a,b\in[0,1]$, $x\in[0,1,2]$ and $y\in [0,1,2,3]$ having cubic and octahedral symmetry respectively, we find that the  $\mc{NC}_{\rm aattt}(\mc B)$ polytope lives in a $15$-dimensional subspace of the full $48$-dimensional probability space ($d = o^2\Delta_{\msf N}\Delta_{\msf M}=48$), where the spanned dimension is reduced by equality constraints coming from normalization, and noncontextuality. \par

In the $15$-dimensional subspace,  we find that the noncontextual polytope has $336$ facet inequalities in total, which can be classified into three equivalence classes under permutations of values of the inputs and of the outputs that respect the cubic and octahedral symmetries, respectively. We therefore explicitly list a single representative inequality for each class, as in Table~\ref{tab:ineq68}. Importantly, one can show that, for a two-qubit isotropic state $\rho^v_{\text{iso}}$ with $v>\frac{1}{\sqrt{3}}$,  the last inequality can be violated using the following local qubit measurements that respect operational identities $\mc O_{\rm aattt}(\msf{Bell}_{oc})$ in the class of (octahedron, cube)-Bell circuits: \cmt $N^A_{a|x}=\frac{1}{2}(\mbb{1}+(-1)^a\hat{n}_{x}\cdot\vec{\sigma})$ and $M^B_{b|y}=\frac{1}{2}(\mbb{1}+(-1)^b\hat{n}_{ y}\cdot\vec{\sigma})$, where 
\begin{align}
&\hat{n}_{0}=[1,0,0],~\hat{n}_{1}=[0,1,0],~\hat{n}_{2}=[0,0,1] \notag \\
&\hat{m}_{0}=\frac{1}{\sqrt{3}}[1,1,1],~\hat{m}_{1}=\frac{1}{\sqrt{3}}[1,1,-1],\notag \\
&\hat{m}_{2}=\frac{1}{\sqrt{3}}[1,-1,1],~\hat{m}_{3}=\frac{1}{\sqrt{3}}[1,-1,-1].
\end{align}
In the Bloch ball, these have the geometries depicted in Fig~\ref{fig:plantoic}(a)
\par 

In a more complex scenario, the computational complexity of the facet enumeration problem increases with the number of measurement settings. \cmt Moreover, unlike the local polytope, where vertices are rational vectors (deterministic strategies), the noncontextual polytope generally contains vertices with irrational components, arising from the presence of irrational components in the operational identities. While this complicates the full characterization the polytope $\mc{NC}_{\rm aattt}(\mc B)$ as most standard facet-enumeration tools assume rational vertices (so they can perform Fourier–Motzkin elimination efficiently), the lower dimensionality and the moderate number of vertices of the noncontextual polytope $\mc{NC}_{\rm aattt}(\mc B)$ (relative to the local polytope) renders \emph{polytope membership certification} efficient. For scenarios with hundreds of settings, the membership problem can be solved exactly via standard linear programming, avoiding the need for iterative algorithms such as Frank-Wolfe~\cite{Designolle2023} or column generation. 

We now turn to the problem of deciding membership of a given correlation $p(ab|xy)$ in the noncontextual polytope $\mc{NC}_{\rm aattt}(\mc B)$, and how this problem compares with that of deciding membership in the local polytope. 

Consider a  Bell circuit $\mathcal{B}$ with setting variable cardinalities $\Delta_{\msf N}$ and $\Delta_{\msf M}$  and outcome variable cardinality $o$ again. We can represent the $k$ extreme points $\vec{v}\in \mc{NC}_{\rm aattt}(\mc B) \subset \mbb{R}^{o^2\Delta_{\msf N}\Delta_{\msf M}}$ as columns of a $o^2\Delta_{\msf N}\Delta_{\msf M}\times k$ matrix $\mbf{T}$. A correlation $p(ab|xy)$, expressed as a probability vector $\mbf p:= \{p(ab|xy)\}_{abxy}\in \mbb{R}^{o^2\Delta_{\msf N}\Delta_{\msf M}}$, lies within the noncontextual polytope $\mc{NC}_{\rm aattt}(\mc B)$ if and only if
\begin{align}
\text{there exists some }\quad\quad  &\mbf{x}  \notag \\
 \text{ such that}\quad\quad     &\mbf{T}\mbf{x}= \mbf{p}, \\
\text{where} \quad\quad &\mbf{x}\ge \mbf{0},  \notag 
\end{align}
with $\mbf{x}\in \mbb{R}^k$. We can also write up the dual LP as:
\begin{align}
    &\min_{\mbf{s}}\quad\quad\quad \mbf{s}^T\mbf{p} \notag \\
    &\text{such that} \quad \mbf{0}\leq \mbf{s}^T\mbf{T} \leq \mbf{1}.
\end{align}
with $\mbf{s}\in \mbb{R}^{o^2\Delta_N\Delta_M}$.

If a correlation $\mbf p= \{p(ab|xy)\}_{abxy}$ does not lie in the polytope, then this dual linear program will always find a real vector $\mbf{s}\in \mbb{R}^{o^2\Delta_N\Delta_M}$ satisfying the condition $\mbf{s}^T\mbf{p} < 0$. This real vector $\mbf{s}$ can then be used to define a facet-defining noncontextuality inequality that $\mathbf{p}$ violates.

Crucially, as noted earlier, the noncontextual polytope resides within a space of dimension $D^2-1 \ll o^2\Delta_N\Delta_M$, where $D$ is the Hilbert space dimension, which is much smaller than the dimension of the space in which the local polytope resides. Consequently, the linear program defined above can always be recast as an equivalent problem in its minimal affine representation (i.e., using reduced coordinates of dimension $D^2-1$). As demonstrated in our example~\ref{example4}, the membership problem above can be efficiently solved even when the full probability space dimension $d = o^2\Delta_{\msf N}\Delta_{\msf M} \gg 10^5$.\blk

The inequality we gave in Eq.~\eqref{eq: Inequality} is a facet inequality for the noncontextual polytope $\mathcal{NC}_{\rm aattt}(\msf{Bell}_{\rm id})$, where $\msf{Bell}_{\rm id}$ denotes \blk the class of (icosahedron, dodecahedron)-Bell circuits, considered in Example~\ref{example3}. For an isotropic state $\rho^v_{\text{iso}}$ with $v>\sqrt{\frac{1+\phi^2}{{3\phi^4}}}$,  the inequality can be violated using the following measurements~\cite{Saunders2010,zhang2023} that have icosahedral and dodecahedral symmetries, respectively: \cmt $N^A_{a|x}=\frac{1}{2}(\mbb{1}+(-1)^a\hat{n}_{x}\cdot\vec{\sigma})$ and $M^B_{b|y}=\frac{1}{2}(\mbb{1}+(-1)^b\hat{m}_{ y}\cdot\vec{\sigma})$ with $\hat{n}_{x}=\frac{1}{\sqrt{1+\phi^2}}\vec{n}_x$ and $\hat{m}_{y}=\frac{1}{\sqrt{3}}\vec{m}_y$
\begin{align}
&\vec{n}_{0}=[0, 1, \phi],~\vec{n}_{1}=[0, 1, -\phi],~\vec{n}_{2}=[1, \phi, 0] \notag \\
&\vec{n}_{3}=[1, -\phi, 0],~\vec{n}_{4}=[\phi, 0, 1],~\vec{n}_{5}=[\phi, 0, -1] \notag \\
&\vec{m}_{0}=[-1,-1,-1],~\vec{m}_{1}=[-\phi,\frac{1}{\phi},0],~\vec{m}_{2}=[-\phi,-\frac{1}{\phi},0]\notag \\
&\vec{m}_{3}=[-1,-1,1],~\vec{m}_{4}=[0,-\phi,\frac{1}{\phi}],~\vec{m}_{5}=[1,-1,-1]\notag \\
&\vec{m}_{6}=[0,-\phi,-\frac{1}{\phi}],~\vec{m}_{7}=[-1,1,-1],~\vec{m}_{8}=[-\frac{1}{\phi},0,-\phi]\notag \\
&\vec{m}_{9}=[\frac{1}{\phi}, 0,-\phi]. 
\label{12-20meas}
\end{align}
In the Bloch ball, these have the geometries depicted in Fig~\ref{fig:plantoic}(b).

\subsection{Deriving the facet inequalities from the noncontextual polytope $\mc{NC}_{\text{tattt}}$}\label{sec: methodB1}

As noted in Sec.~\ref{sec:III}, characterizing the facet inequalitites of the noncontextual polytope $\mc{NC}_{\text{aattt}}(\mc B)$, where one uses the nontrivial sets of operational identities for both Alice's and Bob's multi-measurements,  offers distinct advantages for entanglement certification over existing protocols. For instance, unlike quantum tomography, it does not require a prior characterization of the measurement. Moreover, it improves upon the Bell-inequality-based approach because, in principle, it can certify any entangled state.

A similar argument applies to characterizing the facet inequalities of the noncontextual polytope  $\mc{NC}_{\text{tattt}}(\mc B)$ for $A\rightarrow B$ steering, where one only uses the nontrivial sets of operational identities for Bob's multi-measurement. Our noncontextuality-based approach does not require characterization of the measurement devices on either side. In contrast, standard steering inequalities require full tomography of the measurement devices on one side, while remaining fully device-independent on the other side (often termed one-sided device-independent). Moreover, while violating a Bell inequality is sufficient for device-independent steering certification~\cite{Uola2020}, the noncontextuality inequalities derived from our noncontextual polytope $\mc{NC}_{\rm tattt}(\mc B)$ can outperform all known Bell inequalities, as demonstrated in Table~\ref{table:steering}.

In the main text, we have already presented two classes of Bell circuits (Examples~\ref{example1} and~\ref{example2}) where $\mc{NC}_{\rm tattt}(\mc B)=\mc{NC}_{\rm aattt}(\mc B)$. To derive more generic noncontextual inequalities for certifying quantum steering,  one needs to construct the corresponding noncontextual polytope $\mc{NC}_{\rm tattt}(\mc B)$
by enumerating all its extreme points $\vec{v}$ from the extreme points  $\vec{u}\in \mbb{D}_{\msf{N}}$ and $\vec{w}\in \mbb{P}_{\msf{M}}$ via $\vec{v}= \vec{u}\otimes\vec{w}$. While the subsequent analysis of $\mc{NC}_{\rm tattt}(\mc B)$
remains the same as that of $\mc{NC}_{\rm aattt}(\mc B)$ introduced in Appendix~\ref{sec: methodB}, the linear program for its characterization is more complex in practice, as both the dimension and vertex complexity of the measurement-assignment polytope $\mbb{D}_{\msf{N}}$ are far greater than those of the noncontextual measurement-assignment polytope $\mbb{P}_{\msf N}$ (see Appendix~\ref{sec: methodB}). 

\cmt 

Consequently, to reduce the numerical complexity, we restrict our analysis to cases in which only the number of settings on Bob's multi-measurement increases, while Alice employs a fixed multi-measurement with $6$ measurement settings. It is important to note that because we fix the cardinality of Alice's setting variable, the resulting critical visibility $v$ will not converge to the fundamental limit $v =1/2$ as the cardinality of  Bob's setting variable approaches infinity; however, our approach is shown to already outperform all known Bell-inequality-based steering certification~\cite{Designolle2023}.

We note again that our test requires only the standard assumption for testing noncontextuality in a prepare-measure circuit on Bob's side, namely, tomographic completeness of Bob's measurements and steered states on $B$. This suggests that our protocol could offer practical advantages relative to standard steering-certification methods. A more thorough analysis and comparison along these lines is left for future work.
\blk
\begin{table}[h]
\cmt 
\centering
\caption{\label{table:steering}
Critical visibility $v$ of the two-qubit isotropic state $\rho_{\text{Iso}}^v$  that can be certified as steerable using noncontextuality inequalities derived from the noncontextual polytope $\mc{NC}_{\text{tattt}}(\mc B)$. Note that the result for Row 3 differs from the corresponding entry in Table~\ref{table:geodesic}, which uses inequalities derived from the noncontextual polytope $\mc{NC}_{\text{aattt}}(\mc B)$ for entanglement certification.}
\begin{ruledtabular}
\begin{tabular}{ccc}
 $\#$ of Alice's settings & $\#$ of Bob's settings  & visibility $v$ \\
\hline
2 (square) & 2 (square) & $\approx 0.7071$\\
3 (octahedron) & 4 (cube) & $\approx 0.5773$ \\
6 (icosahedron)  & 10 (dodecahedron) & $\approx 0.6787$ \\
6 (icosahedron)  & 40 (Goldberg $\nu=2$) & $\approx 0.5655$ \\
6 (icosahedron)  & 90 (Goldberg $\nu=3$) & $\approx 0.5493$ \\
6 (icosahedron)  & 160 (Goldberg $\nu=4$) & $\approx 0.5445$ \\
6 (icosahedron)  & 250 (Goldberg $\nu=5$) & $\approx 0.5425$ \\
6 (icosahedron)  & 360 (Goldberg $\nu=6$) & $\approx 0.5414$ \\
6 (icosahedron)  & 490 (Goldberg $\nu=7$) & $\approx 0.5408$ \\
\end{tabular}
\end{ruledtabular}
\end{table}

\section{Dealing with inexact operational identities}
\cmt
\label{sec: methodC}
\subsection{Self-consistent tomography of raw data: frequencies to probabilities}
\blk
\label{sec: methodC2}
Experiments do not yield probabilities directly. Rather, they yield relative frequencies which, under ideal circumstances (of the samples being independent and identically distributed), would converge 
to the true underlying probabilities in the infinite-run limit. 
Interpreting relative frequencies as probabilities can lead to a misinterpretation of experimental results. \blk For instance, in a typical Bell experiment, interpreting frequencies as probabilities without accounting for statistical fluctuations can result in an apparent violation of the no-signaling principle~\cite{Lin2018}.

This issue is equally relevant in our experiment. In principle, statistical fluctuations in the relative frequencies \blk could lead to a situation wherein interpreting these as probabilities yields an apparent violation of the predictions of quantum theory (for a given choice of dimensionality of systems). To address this issue, we adapted a well-motivated regularization method, similar to those used in previous works~\cite{Mazurek2016,Lin2018,Mazurek2021}, to convert the experimentally observed relative frequencies into {\em estimates} of the true distribution. 
Specifically, we denote the frequency obtained directly from raw data in a bipartite Bell circuit as $f_{ab|xy}$ and its variance as $\Delta f_{ab|xy}$. We then consider the weighted $\chi^2$ optimization, where \blk
\begin{equation}
    \chi^2
  \blk  :=\min_{p(ab|xy)}\sum_{abxy}\left|\frac{f_{ab|xy}-p(ab|xy)}{\Delta f_{ab|xy}}\right|^2.
\end{equation}
Note that one could also replace the $\chi^2$ function with the $L^2$ function or the KL divergence, as discussed in \cite{Lin2018}.

\par 
In our data analysis, we considered two types of theoretical models to fit the relative frequencies. 
The first assumes quantum mechanics with a fixed Hilbert space dimension  $D$ (hence, operator space dimension $D^2$),\footnote{This step assumes the dimension of the underlying quantum system. However, one can also perform the fit in a way that \emph{infers} an effective quantum dimension from the experimental data using a train-and-test methodology, rather than fixing $D$ a priori~\cite{Grabowecky2022, Mazurek2021}.} \blk where $p(ab|xy) = \text{tr}[(N^A_{a|x} \otimes M^B_{b|y}) \rho^{AB}]$. The second allows for a realization in an arbitrary generalized probabilistic theory (GPT) with fixed GPT vector space dimension $D^2$, where $p(ab|xy) = N^A_{a|x} \otimes M^B_{b|y} \cdot G^{AB}$. In this case, $N^A_{a|x}$, $M^B_{b|y}$, and $G^{AB}$ are GPT vectors corresponding to unipartite effects on $A$ and $B$ and the bipartite state on $AB$ respectively. \blk

\cmt \blk

As the two regularization methods yielded no statistically significant differences, we reported the quantum realization in the main text. We note that the associated optimization problem cannot be solved in a single step; instead, we employ a see-saw algorithm to iteratively optimize the bipartite state and local effects, i.e.,
\begin{enumerate}
    \item Given an initial guess for $\msf{N}$, $\msf{M}$ and $\rho^{AB}$,
    \item Fix $\msf{N}$ and $\msf{M}$ and update states $\rho^{AB}$ to be $   \text{argmin}_{\rho^{AB}}\sum_{abxy}\left|\frac{f_{ab|xy}-\tr[(N^A_{a|x}\otimes M^B_{b|y})\rho^{AB}]}{\Delta f_{ab|xy}}\right|^2$,
    \item Fix $\msf{N}$ and state $\rho^{AB}$, and update $\msf{M}$ to be $    \text{argmin}_{\msf{M}}\sum_{abxy}\left|\frac{f_{ab|xy}-\tr[(N^A_{a|x}\otimes M^B_{b|y})\rho^{AB}]}{\Delta f_{ab|xy}}\right|^2$,
    \item  Fix $\msf{M}$ and state $\rho^{AB}$, and update $\msf{N}$ to be $\text{argmin}_{\msf{N}}\sum_{abxy}\left|\frac{f_{ab|xy}-\tr[(N^A_{a|x}\otimes M^B_{b|y})\rho^{AB}]}{\Delta f_{ab|xy}}\right|^2$,
\item Repeat step 2-4 until it converges relative to a certain threshold. 
\end{enumerate}

Each step of the optimization above can be expressed as a quadratic program, which could be efficiently solved computationally.  However, since the bipartite statistics $p(ab|xy)$ are invariant under invertible local transformations (the tomographic gauge freedom noted in Ref.~\cite{Nielsen2021} and described in the main text) the optimization does not yield a unique solution for the bipartite state $\rho^{AB}$  and the sets of local effects $\msf{N}$ and $\msf{M}$. 

\par 
These regularization procedures are closely related to GPT tomography \cite{Mazurek2021} and gate-set tomography \cite{Nielsen2021}. Consequently, any tools developed for such bootstrap tomography schemes can be directly applied here as well. However, we do not use this self-consistent tomography naively to certify entanglement, namely by performing a separability test on the bipartite state obtained from the tomography scheme.  We avoid this approach because this output $\rho^{AB}$ from a self-consistent tomography is gauge-dependent and would require an optimization over the choice of gauge.  Instead, our entanglement certification technique evaluates noncontextuality inequalities on the regularized statistics, and since these statistics are gauge-independent quantities, no gauge optimization is required. 

Furthermore, since characterizing the bipartite state is not our objective, a distinct yet equivalent approach can be taken to regularize the raw statistics. Specifically, rather than employing a see-saw algorithm to iteratively optimize the bipartite state on $AB$ and the sets of local effects on $A$ and on $B$, one can instead just iteratively optimize the steering assemblage on $A$,  $\{\{\tilde{\rho}^B_{b|y}\}\}_{b,y}$, where $\tilde{\rho}^B_{b|y} = \operatorname{Tr}_B[(\mathbb{1}^A \otimes M^B_{b|y})\rho^{AB}]$ and the set of local effects on $A$, $\msf{N}$  (or, equivalently, the steering assemblage and set of local effects on $B$). The GPT version of this regularization process is precisely the one studied in the prepare-measure circuit in \cite{Mazurek2021, Mazurek2021}.

\cmt  It is worth noting again that because of the tomographic gauge freedom described in the main text, there is a nonuniqueness in the output of the fitting procedure that yields the bipartite state and the primary measurements, and hence a nonuniqueness in the secondary measurements as well. However, under the assumptions of tomographic completeness, the operational identities and the statistics are manifestly gauge-independent.  Because the noncontextuality inequalities are derived from those gauge-independent operational identities, and it is the gauge-independent statistics on which the noncontextuality inequalities are tested, our noncontextuality tests are gauge-independent.
\blk

\subsection{Secondary measurements and secondary statistics}
\label{sec: methodC1}
\cmt
As in the main text, we will denote the output of the regularization step as the \textit{primary statistics}, denoted as:
\begin{align}
p^{\text{pri}}(ab|xy)=\tr[N^{A,\text{pri}}_{a|x}\otimes M^{B,\text{pri}}_{b|y}\rho^{AB}],
\label{eq:pri-stat}
\end{align}
where $\{N^{A,\text{pri}}_{a|x}\}$ and $\{M^{B,\text{pri}}_{b|y}\}$ are referred to as the primary measurements. \blk

Here, we will detail the method of secondary procedures.  This method takes as input a set of processes that are obtained by fitting to the relative frequencies (i.e., self-consistent tomography) and that do not satisfy the required operational identities used in deriving the inequality test, and outputs a set of processes in their convex hull that satisfy these identities exactly. 

Consider again a multi-measurement on $A$, $\msf{N}=\{\{N^A_{a|x}\}_a\}_x$, satisfying operational identities $\mc{O}_{\rm all}({\msf{N}})=\{\{\alpha_{ax}\}_{ax}|\sum_{ax}\alpha_{ax}N^A_{a|x}=\mbb{0}\}$, describing the ideal identities that one is targeting in the experiment. Any set of operational identities of this form can be reduced to a finite {\em generating set}~\cite{schmid2024a}, meaning that all other operational identities can be deduced from this generating set, and NCOM-realizability relative to this generating set is equivalent to NCOM-realizability relative to the full set. Therefore, we consider only generating sets of operational identities of minimal cardinality. This cardinality, which we denote by $|T|$, coincides with the number of linearly independent operational identities in $\mc{O}_{\rm all}(\msf{N})$.   A given operational identity can be associated to a vector $\vec{\alpha} = (\alpha_{ax})_{ax}$. Consequently, a minimal generating set can be associated to a set of vectors 
$\{\vec{\alpha}^{(t)}\}_{t\in T}$.  Similar comments apply to the measurements on $B$.

The procedure for finding the best-fit probabilities to the relative frequencies obtained in the experiment~\cite{Lin2018, Mazurek2016}, i.e., the regularization procedure,  is detailed in the appendix~\ref{sec: methodC2}. 
\blk 

As stated above, we refer to the measurements that come out of the regularization procedure as \textit{primary measurements} and denote them by $\msf{N}^{\text{pri}}= \{\{N^{A,\text{pri}}_{a|x}\}_a\}_x$, and $\msf{M}^{\text{pri}}= \{\{M^{B,\text{pri}}_{b|y}\}_b\}_y$ respectively. \blk

In general, these measurements do not satisfy the targeted operational identities. Consider the case of the measurements on $A$. \blk In general $\sum_{ax}\alpha
^{(t)}_{ax}N^{A,\text{pri}}_{a|x}\ne \mbb{0}$ for one or more $\vec{\alpha}^{\; (t)}$ in a minimal generating set for the targeted set of operational identities, $\mc{O}_{\rm all}(\msf{N})$.  However, it is possible to find a set of \textit{secondary measurements}, denoted $\msf{N}^{\text{sec}} = \{\{N^{A,\text{sec}}_{a|x}\}_a\}_x$, such that each of the secondary effects lies in the convex hull of the set of primary effects $\msf{N}^{\text{pri}}$, and where the secondary effects exactly satisfy the operational identities $\mc{O}_{\rm all}(\msf{N})$, i.e., \blk
\begin{align}
&N^{A,\text{sec}}_{a|x}=\sum_{a',x'}u_{a',x'}^{a,x}N^{A,\text{pri}}_{a'|x'}~~~~\forall a,x \label{eq:const N} \\ 
&\sum_{ax}\alpha^{(t)}_{ax}N^{A,\text{sec}}_{a|x}=\mbb{0}~~~~\forall t\in T \notag \\
&\sum_aN^{A,\text{sec}}_{a|x}=\mbb{1}^A~~~~\forall x \notag \\
&\sum_{a'x'}u_{a',x'}^{a,x}=1,~~u_{a',x'}^{a,x}\ge0~~~~\forall a,x,a',x'. \notag 
\end{align}
In particular, we solve an optimization problem that selects the closest approximation to the primary measurements while satisfying these constraints, by maximizing the objective function $C_N = \frac{1}{o\Delta_{\msf N}} \sum_{a,x} u_{a,x}^{a,x}$, that is, through the following semi-definite programming
\begin{align}
   \max\quad\quad  &C_N= \frac{1}{o\Delta_{\msf N}} \sum_{ax}u^{ax}_{ax} \label{eq:sdp-secondar}\\
\text{such that}~~
&\sum_{axa',x'}u_{a',x'}^{a,x}N^{A,\text{pri}}_{a'|x'}\alpha^{(t)}_{ax}=\mbb{0}~~~~\forall t\in T \notag \\
&\sum_{aa',x'}u_{a',x'}^{a,x}N^{A,\text{pri}}_{a'|x'}=\mbb{1}^A~~~~\forall x \notag \\
&\sum_{a'x'}u_{a',x'}^{a,x}=1,~~u_{a',x'}^{a,x}\ge0~~~~\forall a,x,a',x'. \notag 
\end{align}
where the value of $C_N$ will be close to $1$ when the primary measurements are close to the ideal measurements being targeted. We define  the secondary measurements on $B$, denoted $\msf{M}^{\text{sec}}$, similarly.  That is, we define $M^{B,\text{sec}}_{b|y}=\sum_{b',y'}v_{b',y'}^{b,y}M^{B,\text{pri}}_{b'|y'}$  and  maximize the objective function  $C_M=\sum_{b,y} \frac{1}{o\Delta_{\msf M}} v_{b,y}^{b,y}$ while satisfying the analogues of the constraints in Eq.~\eqref{eq:const N}. 

\cmt 
With the tomographic completeness condition already made in the self-consistent tomography step, on Alice’s side, the assemblage on $A$ induced by Bob’s primary measurement,
i.e., the set of state $\{\tilde{\rho}^A_{b|y}:=\tr_B[(\mbb{1}^A\otimes M^{B,\text{pri}}_{b|y})\rho^{AB}]\}_{b,y}$ will be tomographically complete (relative to the operator subspace spanned by  $\msf{N}^{\text{pri}}$) so that the operational identities of $\msf{N}^{\text{pri}}$ can be inferred from the operational statistics.\footnote{ We note that, for cases like our Example~\ref{example1}, tomographic completeness is not even required, since the operational identities on measurements that one needs to enforce in the test are automatically satisfied by the condition that the effects of each measurement sum to the unit effect. However, for generality, we will always assume tomographic completeness in our entanglement certification protocols. }.  And similarly, with the roles of $\msf{N}^{\text{pri}}$ and $\msf{M}^{\text{pri}}$ reversed.
This condition is significantly milder than those requiring a prior characterization of the measurement devices, but it already allows us to infer the coefficients $u_{a',x'}^{a,x}$ purely from the {\em primary statistics}, using the following linear program: \blk
\begin{align}
    \max\quad\quad  &C_N= \frac{1}{o\Delta_{\msf N}} \sum_{ax}u^{ax}_{ax} \\
\text{such that}~~&\sum_{axa'x'}p^{\text{pri}}(a'b'|x'y')u_{a',x'}^{a,x}\alpha^{(t)}_{a,x}=0~~ \forall b',y',  \forall t\in T\notag \\
&\sum_{aa'x'}p^{\text{pri}}(a'b'|x'y')u_{a',x'}^{a,x}=p^{\text{pri}}(b'|y')~~ \forall b',y'.\notag \\
&\sum_{a'x'}u_{a',x'}^{a,x}=1,~~u_{a',x'}^{a,x}\ge0~~~~\forall a,x,a',x'. \notag
\label{eq:lp-secondar}
\end{align}
\cmt We note that the second condition resembles the normalization condition in Eq.~\eqref{eq:sdp-secondar}, and here $p^{\text{pri}}(b'|y'):=\sum_{a'}p^{\text{pri}}(a'b'|x'y')$ which is independent of $x'$ since the primary statistics are defined by Eq.~\eqref{eq:pri-stat}, and this form of statistics is necessarily nonsignalling. 
\blk

One can obtain the coefficients $v_{b',y'}^{b,y}$ in an analogous fashion. In the end, the \textit{statistics} defined by the secondary measurements, denoted  $p^{\text{sec}}(ab|xy)$ and termed the {\em secondary statistics}, are given by
\begin{align}
p^{\text{sec}}(ab|xy)&=\tr[N^{A,\text{sec}}_{a|x}\otimes M^{B,\text{sec}}_{b|y}\rho^{AB}] \notag \\
&=\sum_{a'b'x'y'}v_{b',y'}^{b,y}u_{a',x'}^{a,x}p^{\text{pri}}(a'b'|x'y'),
\end{align}
where $p^{\text{pri}}(a'b'|x'y')$ are the statistics defined by the set of primary measurements.  

\cmt 
\subsection{Tomographic completeness in the noncontextuality test}
\label{sec: methodB3}
We close this section by clarifying the notion of \emph{tomographic completeness}, which is the core sufficient condition behind our noncontextuality-based technique for entanglement certification. The notion we use is similar to the standard one in \emph{self-consistent} tomography frameworks, including GPT tomography~\cite{Mazurek2021} and gate-set tomography~\cite{Nielsen2021}: one assumes an \emph{effective finite dimension} space of operators (in the case of GPT tomography, one assumes a finite-dimensional 
GPT state space), and one requires that the \emph{implemented} preparations and measurement effects span the corresponding vector spaces. 

\paragraph{From standard state tomography to self-consistent tomography.}
In standard quantum state tomography, one typically assumes a set of \emph{pre-characterized} measurements. The set of measurements is then called informationally (tomographically) complete if the linear span of its effects is the full operator space associated with the (assumed finite-dimensional) Hilbert space. 

In self-consistent tomography, no device is taken as known a priori, and \textit{tomographic completeness} is not a property of the measurements alone. In a prepare–measure experiment on a quantum system with Hilbert space dimension $D$, one works within an effective finite-dimensional model of dimension $K=D^2$ (e.g., qubit quantum theory has a model dimension $K=4$). \textit{Tomographic completeness} then means that the implemented set of states and the implemented set of measurement effects each span the $K$-dimensional vector space~\cite{Mazurek2021}. The same idea underlies gate-set tomography: one needs a set of experimentally available preparations and measurements that are `fiducial' for the assumed model space~\cite{Nielsen2021}. When this holds, the reconstructed description is unique only up to an overall invertible linear change of representation (gauge freedom); when it fails, some directions in the vector space of Hermitian operators cannot be accessed from the data, and the reconstruction becomes underdetermined.

\paragraph{Tomographic completeness in a Bell circuit.}
In a bipartite Bell circuit with a set of uncharacterized states $\rho^{AB}$ on $\mc{H}^A\otimes \mc{H}^B$, local POVMs
$\msf{N}^A=\{\{N^A_{a|x}\}_a\}_x$ on $A$ and $\msf{M}^B=\{\{M^B_{b|y}\}_b\}_y$ on $B$,
each party’s measurements induce a family of \emph{subnormalized steered states} for the other party:
\begin{subequations}
\begin{align}
&\sigma^A_{b|y} \coloneqq \tr_B\!\big[(\mbb{1}^A\otimes M^B_{b|y})\,\rho^{AB}\big],\\
&\sigma^B_{a|x} \coloneqq \tr_A\!\big[(N^A_{a|x}\otimes \mbb{1}^B)\,\rho^{AB}\big],
\end{align}
\end{subequations}
and we can define the corresponding steered-state spans (real vector spaces of Hermitian operators)
\begin{subequations}
\begin{align}
V_A &\coloneqq \mathrm{span}\big(\{\sigma^A_{b|y}\}_{b,y}\big)\subseteq \mathrm{Herm}(\mc{H}^A),\\
V_B &\coloneqq \mathrm{span}\big(\{\sigma^B_{a|x}\}_{a,x}\big)\subseteq \mathrm{Herm}(\mc{H}^B).
\end{align}
\label{eq:steering-spans}
\end{subequations}
One can then formally define the notion of tomographic completeness in a Bell circuit as follows.
\begin{definition}
The Bell circuit $(\msf{N},\msf{M},\msf P)$ is \emph{tomographically complete} if
\begin{align}
\mathrm{span}\big(\{N^A_{a|x}\}_{a,x}\big)=V_A
~\text{and}~~
\mathrm{span}\big(\{M^B_{b|y}\}_{b,y}\big)=V_B. \notag 
\end{align}
\end{definition}
Formally, effects are elements of the dual space of states; in finite-dimensional quantum theory, we represent them by Hermitian operators via the trace pairing, which allows us to compare spans of effects directly with spans of steered states.

In a Bell experiment, when the dimensions of the quantum systems are known, e.g., with $\dim(\mc H^A)=\dim(\mc H^B)=D$, we always have $\dim(V_A)\le D^2$ and $\dim(V_B)\le D^2$. Moreover, a generic bipartite state $\rho^{AB}$ has $\dim(V_A)=\dim(V_B)=D^2$, that is, this fails only on a measure-zero subset of states~\cite{zhang2024parallel, Plavala2024}. It follows that, in practice, tomographic completeness is no longer needed as an assumption; indeed, ensuring tomographic completeness in an experiment reduces to implementing tomographically complete measurements on $\mc{H}^A\otimes \mc H^B$, which can be checked by the rank of the resulting operational statistics.  

\paragraph{Why this matters for our noncontextuality test.}
This assumption is used exactly as in self-consistent tomography: it ensures that operational identities among the implemented measurement effects can be inferred \emph{from the observed statistics}, i.e., identities that hold among any primary measurements that realize the operational statistics are gauge-independent, thus can be lifted to be the operational identities of the measurement processes. Concretely, tomographic completeness of the set of steered states ensures that two different linear combinations of effects in $\msf{N}$ yield the same probabilities on the set of steered states, i.e., on $V_A$, if and only if they yield the same probabilities on \textit{all} states, hence if and only they are equal. This is the step that allows us to formulate the required operational identities (and hence the noncontextual constraints) directly at the level of the observed statistics. 

If tomographic completeness fails, then identities that appear to hold on the basis of the operational statistics need not be genuine operational identities, and therefore should not be imposed when constructing noncontextuality inequalities or performing a noncontextuality test.

\paragraph{How it can be checked in practice.}

In practice, tomographic completeness cannot be verified by exact rank equalities, i.e., by showing that the observed statistics when arranged as a data table have rank deficiency, since finite statistics and noise generically make the data table full-rank. However, tomographic completeness can be assessed using the standard train--test methodology used in GPT tomography \cite{Mazurek2021} to validate the effective dimension. This is done by ruling out rank-deficient models by showing that a lower-dimensional model cannot fit the statistics, whereas a higher-dimensional model trained on the data will overfit the test data. In our experiment, we assume an effective two-qubit model for each bipartite system and confirm a posteriori that the experimental data are consistent with a two-qubit Bell circuit, i.e., the operational statistics can be well fitted by a two-qubit model. And we leave a more systematic study of tomographic completeness and its relaxation in the noncontextuality test for future work.

\paragraph{When can one use our entanglement certification?}
As emphasized throughout the paper, the sufficient conditions required for our noncontextuality test are essentially the same as those underlying self-consistent tomography. Consequently, our entanglement-certification protocols can be deployed in any experimental setting where self-consistent tomography is already used for characterization. Crucially, all the quantities entering the noncontextuality test, both the observed operational statistics and the operational identities used to construct the relevant noncontextuality inequalities, are gauge-independent. This is the key advantage of our approach over more naive ways of attempting to certify entanglement using self-consistent tomography alone. More generally, we expect noncontextuality-based tests of this kind to provide a practical route for certifying other forms of resourcefulness of quantum processes in experimental platforms where self-consistent tomography is possible~\cite{zhang2024parallel}.
\blk
\section{Experimental details}
\label{sec: methodD}
\begin{figure*}[t]
    \centering
    \includegraphics[width=0.9\linewidth]{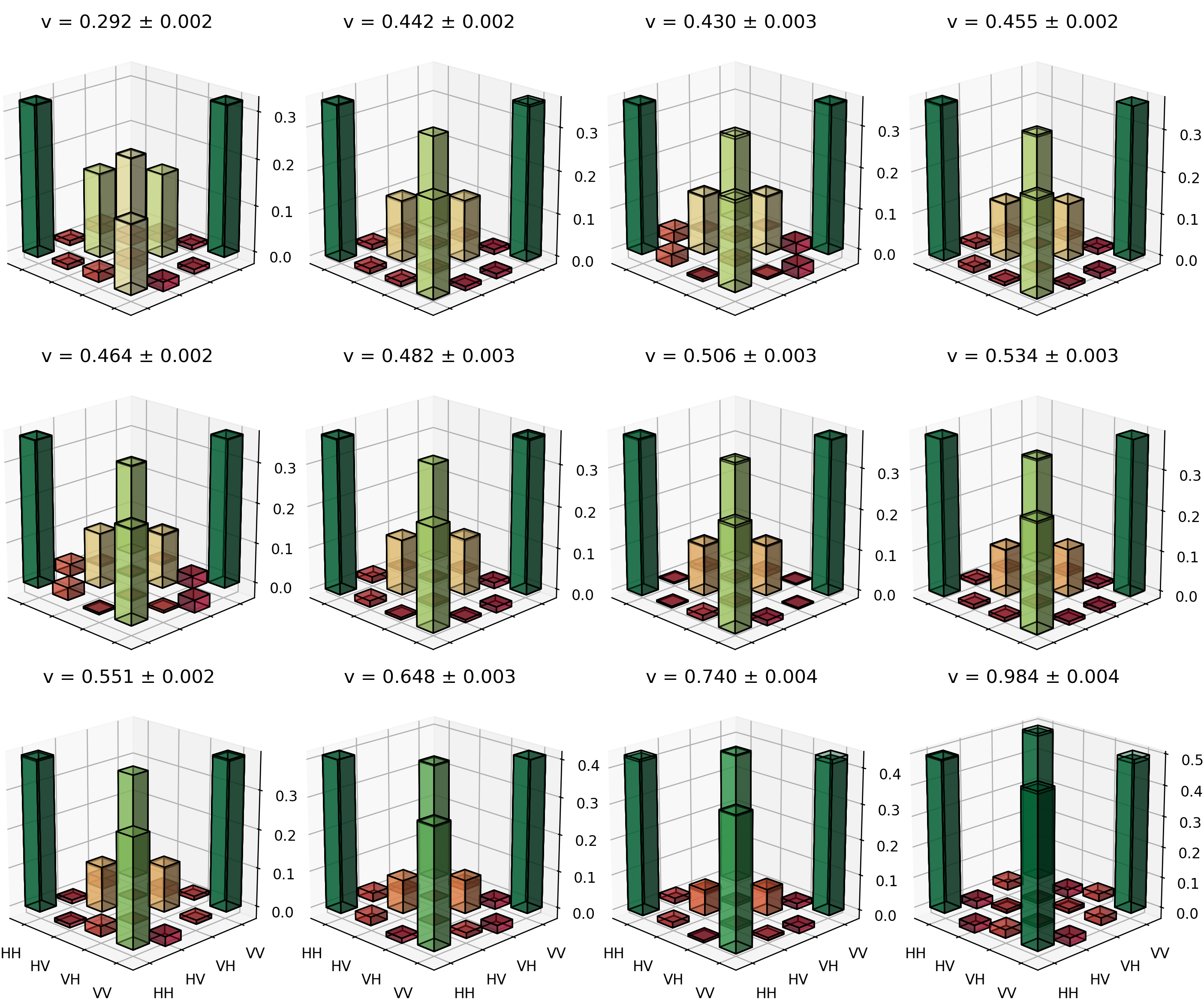}
    \caption{Quantum state tomography results: the real part of the reconstructed density matrix is shown, while the imaginary components are negligible, with values below $0.02$. The visibility $v$ denotes the weight of the closest isotropic state, with error bars estimated from 100 Monte Carlo trials accounting for both statistical and systematic uncertainties (see text below).}
\end{figure*}
\subsection{Generating two-qubit isotropic states experimentally }
\label{sec: methodD1}
We generate polarization-entangled photons using type-II spontaneous parametric down-conversion inside a Sagnac interferometer consisting of a $404~\text{nm}$ continuous-wave (cw) diode laser and a $10 \text{mm}$ PPKTP crystal. The source aims to generate the Bell state $\ket{\Phi^+} = \frac{1}{\sqrt{2}}\left(\ket{HH} + \ket{VV}\right)$, and is implemented with measured state fidelity greater than $97.5\%$ at a rate of {40000} coincidences/second for a laser power of 4 $\text{mW}$. The photons are then sent through a depolarization channel composed of two liquid crystal retarders (LCR), which depolarize the maximally entangled state to a high-fidelity isotropic state by producing a probabilistic mixture of Bell states as in Table~\ref{table:LCR}.
\begin{table}[ht]
\caption{Isotropic state generation}
\label{tab:wernerscheme}
\begin{ruledtabular}
\begin{tabular}{lcc}\label{table:LCR}
Gate & Probability & Output \\
\hline
$\hat{\mbb{1}} \otimes \hat{\mbb{1}}\hat{\mbb{1}}$ & $v^{\text{set}}+\frac{1-v^{\text{set}}}{4}$ & $\ket{\Phi^+} = \ket{HH} + \ket{VV}$ \\

$\hat{\mbb{1}} \otimes \hat{\mbb{1}} \hat{Z}$ & $\frac{1-v^{\text{set}}}{4}$ & $\ket{\Phi^-} = \ket{HH} - \ket{VV}$ \\

$\hat{\mbb{1}} \otimes \hat{X}\hat{\mbb{1}}$& $\frac{1-v^{\text{set}}}{4}$ & $\ket{\Psi^+} = \ket{HV} + \ket{VH}$ \\

$\hat{\mbb{1}} \otimes \hat{X} \hat{Z}$ & $\frac{1-v^{\text{set}}}{4}$ & $\ket{\Psi^-} = \ket{HV} - \ket{VH}$ \\
\end{tabular}
\end{ruledtabular}
\end{table}
\par  The produced isotropic state typically has a value of the visibility $v$ that deviates slightly from the value  $v^{\text{set}}$ that was targeted due to the initial state’s non-unit fidelity with $\ket{\Phi^+}$ \blk and the LCR’s limited switching speed (80 ms hold time per setting, with 5 ms rise time and 20 ms fall time). Instead of relying solely on the targeted value 
$v^{\text{set}}$, we estimated the realized value using the measured fidelity to the nearest isotropic state. In table~\ref{tab:tomoresults}, we compare the targeted and estimated values, showing that the fidelity between the tomographically reconstructed state and the nearest isotropic state is consistently above $99\%$.

\begin{table*}[htbp]
\centering
\begin{tabular}{|c|c|c|c|r|}
\hline
$v^{\text{set}}$ & $\mc{F}(\rho_{\text{iso}
}^{v^{\text{set}}},\rho^{\text{tomo}})$  & $v$ & $\mc{F}(\rho_{\text{iso}
}^{v},\rho^{\text{tomo}})$ &  \multicolumn{1}{c|}{$\mc{I}$}\\
\hline
0.270& $0.9949\pm 0.0007$& $0.290\pm 0.002$&   $0.9950\pm 0.0006$ & $0.04513\pm 0.00015$\\
0.420& $0.9965\pm 0.0009$&  $0.441\pm 0.002$&  $0.9967\pm 0.0009$ & $0.00578\pm 0.00021$\\
0.430& $0.9892\pm 0.0010$&  $0.430\pm 0.003$&   $0.9892\pm 0.0010$ & $0.01068\pm 0.00019$\\
\rowcolor{red!20}
0.440& $0.9966\pm 0.0008$&  $0.454\pm 0.002$&   $0.9963\pm 0.0008$ &  $-0.00117\pm 0.00022$\\
\rowcolor{red!20}
0.450& $0.9934\pm 0.0011$& $0.465\pm 0.002$&  $0.9936\pm 0.0011$ & $-0.00409\pm 0.00018 $\\
\rowcolor{yellow!20}
0.480& $0.9938\pm 0.0012$& $0.483\pm 0.003$&   $0.9938\pm 0.0012$ & $-0.00585\pm 0.00039$\\
\rowcolor{yellow!20}
0.500& $0.9962\pm 0.0011$&  $0.505\pm 0.003$&   $0.9962\pm 0.0011$& $-0.01207\pm 0.00018$\\
\rowcolor{yellow!20}
0.520& $0.9935\pm 0.0013$&  $0.534\pm 0.003$&  $0.9936\pm 0.0013$ & $-0.01633\pm 0.00016$\\
\rowcolor{yellow!20}
0.540& $0.9957\pm 0.0010$&  $0.551\pm 0.002$&  $0.9958\pm 0.0010$ & $-0.02583\pm 0.00022$\\
\rowcolor{yellow!20}
0.650& $0.9911\pm 0.0017$&  $0.651\pm 0.003$&  $0.9911\pm 0.0017$& $-0.05280\pm 0.00011$\\
\rowcolor{yellow!20}
0.750& $0.9869\pm 0.0029$&  $0.742\pm 0.004$&  $0.9870\pm 0.0030$& $-0.07849\pm 0.00012$\\
\rowcolor{yellow!20}
1.000& $0.9689\pm 0.0036$&  $0.982\pm 0.004$&  $0.9802\pm 0.0043$& $-0.14256\pm 0.00013$\\
\hline
\end{tabular}
\caption{Experimental results for 12 tomographically reconstructed quantum states for which the parameter value being targeted is 
$v^{\text{set}}$ and the parameter of the closest isotropic state is $v$. \blk In red: Unsteerable states that violate the inequality $\mc{I}\ge0$ in the main text; In yellow: States that violate the inequality in the main text but for which the entanglement could in principle be certified by violation of a steering inequality. \blk The uncertainties in the violations arise from Poissonian statistics, whereas the uncertainties in $v$ and in fidelities are calculated from Monte Carlo simulations of the different state tomography results, as described in the text. 
}
\label{tab:tomoresults}
\end{table*}
\subsection{Measurement technique}
\label{sec: methodD2}

After preparing each isotropic state, we record coincidence events from both the transmitted and reflected output ports using single‐photon avalanche diodes (SPCM‐AQRH‐11‐FC, Excelitas) with a 1 ns coincidence window, {and a 5 s measurement time per measurement configuration}. We measure coincidence counts of the two photons in measurement configurations corresponding to the vertices of the shapes associated to the set of measurements being targeted. It is important to note that the polarizing beam splitters have different efficiencies for their transmitted and reflected ports, which has the potential to add non-isotropic noise to our state. To mitigate this noise, we follow the measurement scheme outlined in Table~\ref{tab:measureiterate}, where instead of assigning specific ports to specific effects, we iterate over the four permutations of ports and effects by adjusting the wave plates accordingly. Next, for each effect, we take the four trials (one per port configuration) and sum the coincidence counts for the desired effect. Finally, we divide each coincidence by the sum of the four to convert counts into relative frequencies for one measurement configuration. In total, $240$ distinct measurement configurations are implemented in the same way, resulting in a total integration time of approximately 2 hours. These raw relative frequencies are then converted to the raw frequency corresponding to the $60$ measurement settings (effectively, each setting is measured in four different experimental configurations).  

\begin{table}[h]
\caption{Measurement iteration scheme for setting $xy$}
\label{tab:measureiterate}
\begin{ruledtabular}
\begin{tabular}{c|c|c|c|c}
Iter\# & Alice Trans. & Alice Refl. & Bob Trans. & Bob Refl. \\
\hline
1 & $N_{0|x}$ &  $N_{1|x}$ & $M_{0|y}$ & $M_{1|y}$ \\
2 & $N_{1|x}$ &  $N_{0|x}$ & $M_{1|y}$ & $M_{0|y}$ \\
3 & $N_{0|x}$ &  $N_{1|x}$ & $M_{1|y}$ & $M_{0|y}$ \\
4 & $N_{1|x}$ &  $N_{0|x}$ & $M_{0|y}$ & $M_{1|y}$ \\

\end{tabular}
\end{ruledtabular}
\end{table}
\subsection{Experimental error analysis}
\label{sec: methodD3}
The experimental uncertainty in the noncontextual bound $\mc{I}$ is estimated using a Monte Carlo simulation with 100 independent trials, where each trial samples from a Poisson distribution based on the observed raw counts. To further quantify the uncertainties in the estimated state visibility $v$ and to certify the unsteerability of the states, we also account for systematic errors in the wave plate settings by drawing these from a normal distribution, similarly as reported in Ref.~\cite{Aguilar2024}. 

Specifically, we consider two primary sources of systematic error: (1) miscalibration or model imperfection of the measurement wave plates, modeled as a Gaussian distribution over phase shifts \blk with a standard deviation of $1^{\circ}$, and (2) phase shifts introduced by {model} imperfections 
also assumed to have an uncertainty of $1^{\circ}$. For each trial, the reconstructed density matrix is used to estimate the isotropic state visibility $v$, and the standard deviation across all trials is reported as the final uncertainty. Additional details and data are summarized in the Appendix~\ref{sec: methodD}.
\begin{table}[htbp]
\centering
\caption{Experimental results on optimal secondary measurements:
For each quantum state involved in the test, we report the optimal values of the parameters $C_N$ and $C_M$ obtained during the construction of the secondary measurements.}
\label{tab:computed}
\begin{tabular}{|c|c|c|}
\toprule
$v$ & $C_M$ & $C_N$ \\
\hline
0.290 & 0.959 & 0.937 \\
0.442 & 0.958 & 0.980 \\
0.430 & 0.946 & 0.972 \\
0.453 & 0.950 & 0.982 \\
0.465 & 0.960 & 0.982 \\
0.483 & 0.955 & 0.957\\
0.506 & 0.956 & 0.969 \\
0.534 & 0.956 & 0.972 \\
0.551 & 0.959 & 0.971 \\
0.650 & 0.966 & 0.981 \\
0.741 & 0.972 & 0.985 \\
0.992 & 0.970 & 0.985 \\
\hline
\end{tabular}
\end{table}
\subsection{Detailed Experimental Results}
\label{sec: methodD4}
The validity of our experimental outcomes—and the subsequent violation of the inequality under study—is inherently linked to the precision of our isotropic state preparation. Recall that the fidelity $\mc{F}$ between states $\rho$ and $\sigma$ is defined as follows:
\begin{equation}
    \mathcal{F}(\rho, \sigma) = \operatorname{Tr}\left(\sqrt{\sqrt{\rho}\sigma\sqrt{\rho}} \right)^2.
\end{equation}
\blk 
While the main text highlights a fidelity between our realized state and  the best-matched isotropic state that exceeds $99\%$, this section provides a detailed evaluation of our measured fidelity with respect to both the best-matched isotropic state 
and the state we targeted. 
To rigorously assess the quality of our state preparation, we performed quantum state tomography using the same experimental setup described earlier. Measurements were conducted across all $12 \times 20$ basis settings, with the corresponding Bloch vectors arranged to realize the geometries of the icosahedron and dodecahedron, and the quantum state tomography was done by minimizing the $L^2$ norm 
\begin{equation}
     L^2= |f_{i} - \bra{\psi_{i}}\rho\ket{\psi_{i}}|^2,
\end{equation}
where $\ket{\psi_{i}}$ represents the $i^{th}$ measurement setting, and $f_i$ is the corresponding relative frequency. The result of this optimization yields a density operator, which is denoted ${\rho}^{\text{tomo}}$.
\par 
To address potential mischaracterizations of ${\rho}^{\text{tomo}}$ due to statistical and systematic errors, we conducted a comprehensive Monte Carlo simulation comprising 100 trials. In this simulation: (1) Raw count data were modeled using Poissonian statistics;  (2) The calibration of the measurement wave-plate angles was simulated using a normal distribution over phase-shifts \blk with a standard deviation of $1^{\circ}$; (3) Variations in wave-plate thickness—leading to phase-shift errors—were also represented by a normal distribution with $1^{\circ}$ of phase shift error.

After reconstructing the density operator via tomography, we computed its fidelity with the target isotropic state (defined by visibility $v^{\text{set}}$), and the closest isotropic state (defined by visibility $v$). 

The results, summarized in Table~\ref{tab:tomoresults}, confirm that for each of the 12 values of $v^{\text{set}}$ we considered, our realized state had a fidelity with our targeted state and a fidelity with the best-matched isotropic state of approximately \blk 
$99\%$. 

\section{Certifying unsteerability for states}
\label{sec: methodE}
In our experiment, the prepared state is never exactly an isotropic state (for which unsteerability and locality are well-studied~\cite{Werner1989,Wiseman2007}). Therefore, in addition to noting the high fidelity of the states we prepare with an unsteerable state, we provide a direct demonstration of the unsteerability of some of the states we realized in the experiment. 

To achieve this, we employ recent findings in Ref.~\cite{zhang2024,Renner2024}, which show that the isotropic state
$\rho_{\text{iso}}^v=v\op{\Phi^+}{\Phi^+}+(1-v)\frac{\mbb{1}\otimes\mbb{1}}{4}$
is unsteerable for $v\le \frac{1}{2}$. 
We make use of each of the following facts: \\
(1) any state $\rho=(\mbb{1}\otimes U)\rho_{\text{iso}}^{v}
(\mbb{1}\otimes U^{\dagger})$ where $v\le \frac{1}{2}$
remains unsteerable for any unitary $U$; 
\\(2) all separable states are unsteerable; 
\\(3) any convex combination of unsteerable states is 
unsteerable,

\noindent We can witness the unsteerability of our realized state $\rho$ if we find feasible solutions of the following semidefinite program:
\begin{align*}
&\quad X \succeq 0,\quad X^{T_B} \succeq 0, \quad p_i \;\ge\; 0\;\;\forall\,i,\\
&\quad \sum_{i=1}^n p_i \,\rho_i \;+\; X \;=\;\rho,
\end{align*}
where \(\rho_i=(\mbb{1}\otimes U_i)\rho^v_{\text{iso}}(\mbb{1}\otimes U_i^{\dagger})\) where $v\le 1/2$ for some $U_i$ sampled from the Haar measure of $SU(2)$ and $X$ is any two-qubit quantum state with a positive partial transpose (and so is separable).  That is, the feasibility of this program is a sufficient condition for unsteerability.

Moreover, following an argument similar to that in Ref.~\cite{Aguilar2024}, the fact (1) can be replaced with a more general statement for any positive map; that is, 
any state 
$\rho=(\mbb{1}\otimes \Lambda)(\rho^v_{\text{iso}})\ge 0$  where $v\le 1/2$ remains unsteerable 
for any positive map $\Lambda$~\cite{Aguilar2024}. Using this condition leads to a slightly different SDP that is generally capable of certifying a broader class of unsteerable states.

The numerics show that two of the states we realized in our experiment, those with estimated isotropic visibilities $v =0.454$ and $v =0.465$, yield feasible solutions to the SDP (even after Monte Carlo simulations accounting for small measurement imperfections), thereby confirming their unsteerability. However, the state with estimated isotropic parameter $v =0.483$ fails this test and thus cannot be concluded to be 
unsteerable even though it is close to an unsteerable isotropic state.

\bibliographystyle{apsrev4-2} 
\bibliography{ref}

\end{document}